\newtheorem{theorem}{Theorem}[section]
\newtheorem*{theorem*}{Theorem}
\newtheorem{lemma}[theorem]{Lemma}
\newtheorem{proposition}[theorem]{Proposition}
\newtheorem{definition}[theorem]{Definition}
\newtheorem{corollary}[theorem]{Corollary}
\theoremstyle{definition} \newtheorem{example}[theorem]{Example}
\newcommand{\CC}{\mathbb{C}}
\newcommand{\HH}{\mathbb{H}}
\newcommand{\RR}{\mathbb{R}}
\newcommand{\ZZ}{\mathbb{Z}}
\newcommand{\PSU}{\operatorname{PSU}}
\newcommand{\U}{\operatorname{U}}
\newcommand{\SU}{\operatorname{SU}}
\newcommand{\SO}{\operatorname{SO}}
\newcommand{\PSO}{\operatorname{PSO}}
\newcommand{\Sp}{\operatorname{Sp}}
\renewcommand{\O}{\operatorname{O}}
\newcommand{\so}{\mathfrak{so}}
\newcommand{\su}{\mathfrak{su}}
\renewcommand{\sp}{\mathfrak{sp}}
\newcommand{\tr}{\operatorname{tr}}
\newcommand{\diag}{\operatorname{diag}}
\newcommand{\sgn}{\operatorname{sgn}}
\newcommand{\vspan}{\operatorname{span}}
\renewcommand{\Re}{\operatorname{Re}}
\renewcommand{\Im}{\operatorname{Im}}
\DeclareMathOperator{\Ad}{Ad}
\DeclareMathOperator{\ad}{ad}
\DeclareMathOperator{\id}{id}
\DeclareMathOperator{\Aut}{Aut}
\DeclareMathOperator{\Out}{Out}
\DeclareMathOperator{\Inn}{Inn}
\renewcommand{\P}{\operatorname{P}}
\newcommand{\secsymbol}{§}
\renewcommand{\S}{\operatorname{S}}
\newcommand{\p}{\mathfrak{p}}
\newcommand{\G}{\mathcal{G}}
\newcommand{\g}{\mathfrak{g}}
\renewcommand{\H}{\mathcal{H}}
\newcommand{\h}{\mathfrak{h}}
\newcommand{\K}{\mathcal{K}}
\renewcommand{\k}{\mathfrak{k}}
\newcommand{\A}{\mathcal{A}}
\newcommand{\B}{\mathcal{B}}
\renewcommand{\a}{\mathfrak{a}}
\newcommand{\T}{\mathcal{T}}
\newcommand{\M}{\mathcal{M}}
\newcommand{\Q}{\mathcal{Q}}
\newcommand{\Z}{\mathcal{Z}}
\begin{document}

\title{Real 3-qubit gate decompositions via triality}
\author{Brendan Pawlowski}
\date{}

\begin{abstract} We show that any unimodular real 3-qubit gate can be expressed as the product of at most 14 CNOT gates plus single-qubit gates, improving on the bound of 16 due to Wei and Di. Our method uses the exotic \emph{triality symmetry} of $\PSO(8)$, and we explore some of the useful properties of this map in relation to the study of real 3-qubit gates.
\end{abstract}

\maketitle

\section{Introduction}

Any quantum gate can be written as a product of single-qubit gates and 2-qubit gates \cite{divincenzo-2q}. In fact, almost any 2-qubit gate $V \in \U(4)$ is universal, meaning that every $n$-qubit gate can be expressed as a quantum circuit involving only single-qubit gates and copies of $V$ acting on pairs of qubits \cite{2q-universal-1995, 2q-universal-lloyd}. For the sake of comparing different gate decomposition algorithms, it is convenient to fix a particular $V$; a popular choice is the controlled-NOT (CNOT) gate, shown to be universal in \cite{elementary-gates}.

Once the existence of such quantum circuits has been established, it is natural to ask how large they must be: for instance, what is the minimal $N$ such that every $n$-qubit gate $V \in \U(2^n)$ can be written as $L_1 C_1 L_2 C_2 \cdots L_N C_N L_{N+1}$ where the $C_i$ are CNOTs and the $L_i$ are in $\U(2)^{\otimes n}$? In almost all cases the exact answer is unknown. Shende, Markov, and Bullock \cite{shende-markov-bullock} showed using a dimension-counting argument that at least $(4^n-3n-1)/4$ CNOTs are required. Many decomposition algorithms have been devised providing upper bounds; the current best seems to be $\tfrac{22}{48} 4^n - \tfrac{3}{2} 2^n + \tfrac{5}{3}$ CNOTs due to Krol and Al-Ars \cite{block-ZXZ}, who also give a comprehensive overview of these algorithms.

The case $n=2$ is well-understood, at least. Any $V \in \U(4)$ can be written as the product of at most 3 CNOTs plus single-qubit gates, this bound is sharp, and there are exact algorithms to compute optimal decompositions \cite{2q-optimal}. A special feature of this case that simplifies analysis is the ``magic basis'' coordinate change, which conjugates the subgroup $\SU(2) \otimes \SU(2) \subseteq \SU(4)$ generated by single-qubit gates onto $\SO(4)$. This allows the technique of Cartan decomposition to be applied directly, which by contrast cannot be done for the subgroup $\SU(2)^{\otimes n} \subseteq \SU(2^n)$ when $n > 2$.

In this paper we study the special case of unimodular ($\det = 1$) real 3-qubit gates. Wei and Di \cite{wei-di} showed that any $V \in \SO(8)$ is the product of at most 16 CNOTs plus single-qubit gates. Our main result (Theorem~\ref{thm:main}) is an explicit quantum circuit realizing any element of $\SO(8)$ as the product of at most 14 CNOTs plus 35 single-qubit rotations.

Perhaps more interesting than this modest improvement is our technique, which relies on the \emph{triality symmetry} $\T : \PSO(8) \to \PSO(8)$.  This is a rather exotic map which only appears in association with the $\so(8)$ Lie algebra, so we spend some time defining it and exploring its properties.  Its utility in the study of 3-qubit gates does not seem to have been observed before; in some ways it behaves as a real 3-qubit version of the magic basis transformation which is so useful in the 2-qubit case. More specifically, it transforms many subgroups defined in terms of tensor products into straightforward block matrix subgroups, which makes matrix factorizations easier to see.

In Section~\ref{sec:prelim} we set up notation and review some Lie algebra and Lie group material, including Cartan decomposition. Section~\ref{sec:triality} is devoted to the triality map. In Section~\ref{sec:real-circuits} we discuss an interesting Cartan decomposition of $\PSO(8)$ involving triality, which we apply in Section~\ref{sec:pso8-cartan} to construct our new circuit.

\section{Preliminaries} \label{sec:prelim}
\subsection{Matrix notation}
We use the interval notation $[m,n]$ for the set $\{m,m{+}1,\ldots,n\}$ when $m,n$ are integers. As a special case, $[n] = [1,n] = \{1,2,\ldots,n\}$. If $A$ is an $m \times n$ matrix and $I \subseteq [m], J \subseteq [n]$ are subsets, write $A_{IJ}$ for the $|I| \times |J|$ submatrix $[A_{ij}]_{i \in I, j \in J}$.  For instance, if $A = \left[ \begin{smallmatrix} 1 & 2 & 3 \\ 4 & 5 & 6 \\ 7 & 8 & 9 \end{smallmatrix} \right]$ then $A_{[2], [1,3]} = \left[ \begin{smallmatrix} 1 & 3 \\ 4 & 6 \end{smallmatrix} \right]$.

Given square matrices $A_1, \ldots, A_k$, write $A_1 \oplus \cdots \oplus A_k$ for the block-diagonal matrix with $A_1, \ldots, A_k$ as blocks. For example,
\begin{equation*}
\begin{bmatrix} 1 & 2 \\ 3 & 4 \end{bmatrix} \oplus \begin{bmatrix} 5 & 6 \\ 7 & 8 \end{bmatrix} = \begin{bmatrix} 1 & 2 & 0 & 0 \\ 3 & 4 & 0 & 0 \\ 0 & 0 & 5 & 6 \\ 0 & 0 & 7 & 8 \end{bmatrix} 
\end{equation*}
More generally, if $S_1, \ldots, S_k$ are sets of matrices, then
\begin{equation*}
S_1 \oplus \cdots \oplus S_k = \{A_1 \oplus \cdots \oplus A_k : A_i \in S_i \text{ for each $i$}\}.
\end{equation*}
We also write $\diag(x_1,\ldots,x_n)$ for the diagonal matrix with diagonal entries $x_1,\ldots,x_n$. As a special case that will arise frequently, for each subset $J \subseteq [n]$ define $\Delta_{J}$ as the diagonal matrix with
\begin{equation*}
(\Delta_J)_{ii} = \begin{cases} -1 & \text{if $i \in J$} \\
						   1 & \text{if $i \notin J$} \end{cases}
\end{equation*}
Finally, the operator $\S$ applied to a matrix group gives the subgroup where $\det = 1$, and $\P$ takes the quotient modulo scalar multiplication. 

\begin{example} \hfill
\begin{enumerate}[(a)]
\item $\Delta_{\{2,3\} \subseteq [5]} = \diag(1,-1,-1,1,1)$.
\item $\SO(2) \oplus \SO(2) = \{A \oplus B : A,B \in \O(2), \det A = \det B = 1\}$.
\item $\S(\O(2) \oplus \O(2))$ is (a) plus the connected component where $\det A = \det B = -1$.
\item $\P\S(\O(2) \oplus \O(2))$ is the image of (b) in $\PSO(4)$.
\end{enumerate}
\end{example}

We are especially concerned here with $\PSO(8)$, the group of real orthogonal matrices of determinant 1, subject to the rule that $A$ and $-A$ are considered the same matrix. The reason for making the distinction between $\PSO(8)$ and $\SO(8)$ is that the triality map involves an indeterminacy of sign, and so is well-defined as a map $\PSO(8) \to \PSO(8)$ but not $\SO(8) \to \SO(8)$. That said, our final circuit decomposition will be perfectly valid in $\SO(8)$.

\subsection{Quantum gate conventions}
Take the Pauli matrices to be
\begin{equation*}
\sigma_x = \begin{bmatrix} 0 & 1 \\ 1 & 0 \end{bmatrix} \qquad \sigma_y = \begin{bmatrix} 0 & -i \\ i & 0 \end{bmatrix} \qquad \sigma_z = \begin{bmatrix} 1 & 0 \\ 0 & -1 \end{bmatrix}
\end{equation*}
and set $R_a(\theta) = \exp(-i\theta\sigma_a/2)$ for $a \in \{x,y,z\}$. These rotations generate $\SU(2)$ since $i\sigma_x, i\sigma_y, i\sigma_z$ generate $\su(2)$ as a Lie algebra, and it is well-known that any element of $\SU(2)$ can be written as $R_{a_1}(\theta_1)R_{a_2}(\theta_2)R_{a_3}(\theta_3)$ where $a_1 a_2 a_3$ is any word on the alphabet $\{x,y,z\}$ with no two consecutive letters the same. Our convention is that $\su(n)$ consists of the $n \times n$ skew-Hermitian matrices.

We take $\CC^2$ to have ordered orthonormal basis $\bra{0}, \bra{1}$, and more generally $(\CC^2)^{\otimes n}$ to have basis $\{\bra{w} : w \in \{0,1\}^n\}$, indexed by the length $n$ binary words ordered lexicographically. Write $C_i^j \in \U(2^n)$ for the CNOT gate with control qubit $i$ and target qubit $j$, i.e.\ the linear operator defined by
\begin{equation*}
\bra{w_1 \cdots w_n} \mapsto \begin{cases}
\bra{w_1 \cdots w_n} & \text{if $w_i = 0$}\\
\bra{w_1 \cdots \tilde{w}_j \cdots w_n} & \text{if $w_i = 1$}
\end{cases}
\end{equation*}
where $\tilde{x} = 1-x$. For instance, with $n = 2$,
\begin{equation*}
C_1^2 = \begin{quantikz}
& \ctrl{1} & \\
& \targ{}  &
\end{quantikz} = \left[\begin{smallmatrix} 1 & 0 & 0 & 0 \\ 0 & 1 & 0 & 0 \\ 0 & 0 & 0 & 1 \\ 0 & 0 & 1 & 0 \end{smallmatrix}\right]
 \qquad \text{and} \qquad C_2^1 = \begin{quantikz}
& \targ{} & \\
& \ctrl{-1}  &
\end{quantikz}  = \left[\begin{smallmatrix} 1 & 0 & 0 & 0 \\ 0 & 0 & 0 & 1 \\ 0 & 0 & 1 & 0 \\ 0 & 1 & 0 & 0 \end{smallmatrix}\right]
\end{equation*}

\subsection{Lie group notation and Cartan decomposition} \label{subsec:cartan}
The following notation will hold for the rest of the paper, unless otherwise noted.
\begin{itemize}
\item $\G$ is a compact connected Lie group with Lie algebra $\g$.
\item $\H_0$ is the connected component of the identity in a Lie group $\H$.
\item $\theta : \G \to \G$ is an automorphism and an involution, i.e.\ an invertible function satisfying $\theta(gh) = \theta(g)\theta(h)$ and $\theta(\theta(g)) = g$ for all $g,h \in \G$.
\item $c_g : \H \to \H$ is the conjugation automorphism $c_g(h) = ghg^{-1}$.
\item $\G^{\pm \theta} = \{g \in \G : \theta(g) = g^{\pm 1}\}$, so $\G^\theta$ is the subgroup of fixed points of $\theta$, while $\G^{-\theta}$ is just a subset. 
\item $\Z(S) = \{h \in \H : sh = hs \text{ for all $s \in S$}\}$ is the centralizer of a subset $S \subseteq \H$. We write $\Z(g)$ instead of $\Z(\{g\})$.
\item $\k$ and $\p$ are the $1$- and $(-1)$-eigenspaces of the induced Lie algebra homomorphism $d\theta : \g \to \g$, so $\k$ is a subalgebra and $\g = \k \oplus \p$. 
\item $\K = \exp \k = (\G^\theta)_0$ is the connected component of the identity in $\G^\theta$. We will call any closed subgroup arising this way a \emph{Cartan subgroup}.
\item $\a$ is a maximal abelian subalgebra of $\p$ and $\A = \exp(\a)$.
\item Later we will have more than one involution $\theta$, in which case we write $\K_\theta$, $\a_\theta$, etc.\ to be more specific.
\end{itemize}
We try to consistently use calligraphic capitals for abstract Lie groups, lowercase Fraktur for Lie algebras, capitals for matrices and named matrix groups, Greek letters for automorphisms of groups and Lie algebras, and lowercase $g,h,k$ etc.\ for elements of abstract Lie groups. That said, we also use calligraphic capitals for certain distinguished matrices and for the triality automorphism $\T$.

\begin{theorem}[Cartan decomposition] \cite[Ch.\ V, Theorem 6.7]{helgason} \label{thm:cartan}
With notation as above,  we have $\G = \K\A\K$. That is, any $g \in \G$ can be written as $k_1 a k_2$ with $k_1,k_2 \in \K$ and $a \in \A$. Also, $\g = \k \oplus \p$ and 
\begin{equation*}
\exp(\p) = \bigcup_{k \in \K} k\A k^{-1} \qquad \text{and} \qquad \p = \bigcup_{k \in \K} \Ad_k(\a)
\end{equation*}
where $\Ad_k : \g \to \g$ is the derivative of the conjugation map $c_k$ at the identity.
\end{theorem}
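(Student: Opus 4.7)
The plan is to prove the theorem in three stages: first the Lie algebra decomposition $\g = \k \oplus \p$, then the statement that every $\Ad_\K$-orbit in $\p$ meets $\a$, and finally the polar decomposition $\G = \K \exp(\p)$; combining the last two yields $\G = \K\A\K$ after applying $\exp$.

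The Lie algebra decomposition is immediate. Since $\theta^2 = \id$, its derivative $d\theta$ is a linear involution on $\g$, so $\g$ splits as the direct sum of its $\pm 1$-eigenspaces $\k$ and $\p$ via the projections $X \mapsto \tfrac{1}{2}(X \pm d\theta(X))$. Because $d\theta$ is a Lie algebra homomorphism, multiplying eigenvalues gives the bracket relations $[\k,\k] \subseteq \k$, $[\k,\p] \subseteq \p$, $[\p,\p] \subseteq \k$.

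For the orbit statement $\p = \bigcup_k \Ad_k(\a)$, I would use compactness of $\G$ to fix an $\Ad_\G$-invariant inner product $\langle\cdot,\cdot\rangle$ on $\g$ (so each $\ad_X$ is skew-adjoint), and a regular element $H_0 \in \a$, meaning $\{Z \in \p : [Z, H_0] = 0\} = \a$; such $H_0$ exist because the singular elements form a proper closed subset of $\a$ cut out by the restricted roots. Given $X \in \p$, continuity and compactness give a maximum of $f(k) := \langle \Ad_k X, H_0 \rangle$ at some $k_0 \in \K$; setting $Z = \Ad_{k_0}X$ and differentiating $f$ along $t \mapsto \exp(tY)k_0$ yields $\langle [Y,Z], H_0\rangle = 0$ for all $Y \in \k$. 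By skew-adjointness of $\ad$ this rewrites as $\langle Y, [Z,H_0]\rangle = 0$, so $[Z,H_0] \perp \k$. But $Z, H_0 \in \p$ forces $[Z,H_0] \in [\p,\p] \subseteq \k$, hence $[Z,H_0] = 0$, and regularity of $H_0$ gives $Z \in \a$. Applying $\exp$ and using $\exp(\Ad_k X) = k\exp(X)k^{-1}$ turns this into $\exp(\p) = \bigcup_k k\A k^{-1}$.

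The main obstacle will be the polar decomposition $\G = \K \exp(\p)$. My strategy is to study $\sigma : \G \to \G$, $\sigma(g) = g\theta(g)^{-1}$, which is constant on right $\K$-cosets (since $\theta$ fixes $\K$ pointwise) and lands in $\G^{-\theta}$; a direct calculation gives $\sigma(\exp X) = \exp(2X)$ for $X \in \p$, so $\exp(\p) \subseteq \sigma(\G)$. The subtle point is the reverse — showing every $\K$-coset contains a representative $\exp(X)$ with $X \in \p$. The standard route is Riemannian: equip $\G/\K$ with the metric induced from $\langle\cdot,\cdot\rangle|_\p$, observe that it becomes a compact symmetric space whose geodesics through the basepoint are precisely the curves $t \mapsto \exp(tX)\K$ for $X \in \p$, and invoke surjectivity of the Riemannian exponential on a compact connected manifold. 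Once $\G = \exp(\p)\K$ is established, substituting $\exp(\p) = \bigcup_k k\A k^{-1}$ and using $k^{-1}\K = \K$ gives $\G = \bigcup_k k\A\K \subseteq \K\A\K$, completing the proof.
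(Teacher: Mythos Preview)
The paper does not prove this theorem at all: it is quoted verbatim from Helgason \cite[Ch.\ V, Theorem 6.7]{helgason} and used as a black box, so there is no ``paper's own proof'' to compare against. Your sketch is correct and is essentially the classical argument one finds in Helgason: the eigenspace splitting of $d\theta$, the extremal/critical-point argument to show every $\Ad_\K$-orbit in $\p$ meets $\a$, and the Riemannian surjectivity of the exponential on the compact symmetric space $\G/\K$ to obtain $\G = \K\exp(\p)$.

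Two small remarks. First, your appeal to a regular element $H_0 \in \a$ implicitly uses the restricted-root machinery; if you want the argument to be self-contained you should say why the set $\{H \in \a : \Z_\p(H) = \a\}$ is nonempty (it is the complement of finitely many hyperplanes $\ker\alpha|_\a$). Second, the Riemannian step is exactly where the real content lies, and ``invoke surjectivity of the Riemannian exponential on a compact connected manifold'' hides the Hopf--Rinow theorem plus the identification of geodesics through $e\K$ with $t \mapsto \exp(tX)\K$; both are standard, but if this were being graded as a proof rather than a plan you would need to spell them out.
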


\begin{example} \label{ex:SU-conj}
If $\G = \U(n)$ and $\theta : \G \to \G$ is complex conjugation, then $\G^\theta = \O(n)$ and $\K = \SO(n)$ and $\G^{-\theta}$ is the set of symmetric unitary matrices.  At the Lie algebra level,
\begin{equation*}
\k = \so(n) \qquad \text{and} \qquad \p = \{\text{symmetric imaginary matrices}\}.
\end{equation*}
The choice of $\a$ is not unique, but we can take $\a \subseteq \p$ to be the imaginary diagonal matrices, making $\A = \exp(\a)$ the group of diagonal unitary matrices.  Hence the Cartan decomposition factors any unitary as $O_1 D O_2$ with $O_1,O_2 \in \SO(n)$ and $D$ unitary diagonal.
\end{example}

Besides the decomposition $\G = \K\A\K$, we will also use a basic corollary of Theorem~\ref{thm:cartan}.
\begin{corollary} \label{cor:ka}
$\g$ is generated as a Lie algebra by $\k$ and $\a$.
\end{corollary}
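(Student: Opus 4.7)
The plan is to let $\g'$ denote the Lie subalgebra of $\g$ generated by $\k$ and $\a$, and show that $\g' = \g$. By Theorem~\ref{thm:cartan} we have $\g = \k \oplus \p$ and $\p = \bigcup_{k \in \K} \Ad_k(\a)$, so since $\k \subseteq \g'$ it suffices to prove $\p \subseteq \g'$, and for that it is enough to show $\Ad_k(\a) \subseteq \g'$ for every $k \in \K$.

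The strategy is to promote Lie bracket invariance to group-conjugation invariance. First I would note that because $\g'$ is a Lie subalgebra containing $\k$, we have $\ad_X(\g') \subseteq \g'$ for every $X \in \k$. Since $\K = \exp(\k)$ is connected, this infinitesimal invariance under $\k$ upgrades to invariance under $\K$: for any $X \in \k$ and $Y \in \g'$,
\begin{equation*}
\Ad_{\exp X}(Y) = e^{\ad_X}(Y) = \sum_{n \ge 0} \frac{1}{n!}\ad_X^n(Y),
\end{equation*}
and each partial sum lies in the finite-dimensional subspace $\g'$, so the limit does as well. Since products of exponentials of elements of $\k$ generate $\K$, we conclude $\Ad_k(\g') \subseteq \g'$ for every $k \in \K$.

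Applying this to $\a \subseteq \g'$, we obtain $\Ad_\K(\a) \subseteq \g'$, hence $\p \subseteq \g'$ by the Cartan decomposition of $\p$. Combined with $\k \subseteq \g'$, this gives $\g = \k \oplus \p \subseteq \g'$, and equality follows.

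The only potentially delicate point is the passage from $\ad_\k$-invariance of $\g'$ to $\Ad_\K$-invariance, which requires that $\g'$ be closed in $\g$ so that the exponential series converges inside it; this is automatic because $\g'$ is a finite-dimensional subspace of $\g$. Everything else is formal bookkeeping using the Cartan decomposition already stated.
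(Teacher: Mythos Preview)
Your proof is correct and follows essentially the same approach as the paper: both use the series formula $\Ad_{\exp X} = e^{\ad_X}$ to show that $\Ad_k(\a)$ lies in the subalgebra generated by $\k$ and $\a$, then invoke the equality $\p = \bigcup_{k \in \K} \Ad_k(\a)$ from Theorem~\ref{thm:cartan} together with $\g = \k \oplus \p$. Your version is slightly more explicit about closedness of $\g'$ and about passing from exponentials to all of $\K$, but the underlying idea is identical.
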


\begin{proof}
It is a basic fact in Lie theory that if $\ad_X : \g \to \g$ is the linear map $\ad_X(Y) = [X,Y]$, then $\Ad_{\exp(X)} = \exp(\ad_X)$ \cite[Ch.\ II, \secsymbol 5]{helgason}. That is,
\begin{equation*}
\Ad_{\exp(X)}(Y) = Y + [X,Y] + \frac{1}{2!}[X,[X,Y]] + \frac{1}{3!}[X,[X,[X,Y]]] + \cdots
\end{equation*}
In particular, this shows that $\Ad_{\exp(X)}(Y)$ is in the subalgebra generated by $X$ and $Y$. Theorem~\ref{thm:cartan} therefore implies that $\p$ is in the subalgebra generated by $\k$ and $\a$, which proves the corollary since $\g = \k \oplus \p$.
\end{proof}

How does one compute a Cartan decomposition of $g \in \G$ in practice? One approach is to compute the \emph{Cartan double} $g\theta(g)^{-1}$ of $g \in \G$. Indeed, if $g = k_1 a k_2$ with $k_1, k_2 \in \K$ and $a \in \A \subseteq \G^{-\theta}$, then
\begin{equation} \label{eq:cartan-double}
g\theta(g)^{-1} = (k_1 a k_2)(k_1 a^{-1} k_2)^{-1} = k_1 a^2 k_1^{-1}.
\end{equation}
If $\G$ is a matrix group (a subgroup of a finite quotient $\G'$ of $\U(n)$), then $\A$ is a subgroup of a maximal torus in $\G'$, which in turn must be conjugate to the torus of diagonal matrices \cite[\secsymbol 16.4]{humphreys}. Thus, we can compute $k_1$ and $a^2$ as in \eqref{eq:cartan-double} by diagonalizing $g\theta(g)^{-1}$ in an appropriate basis. There are finitely many square roots of $a^2$ in $\A$ (because the same is true for diagonal matrices), and each possible square root $a$ can be checked for correctness by seeing whether the putative $k_2 = a^{-1} k_1^{-1} g$ actually lies in $\mathcal{K}$.  With more work, one can choose a distinguished subset $\A_\circ \subseteq \A$ such that $g \in \K a \K$ for a \emph{unique} $a \in \A_\circ$, thereby avoiding this last issue; see \cite[Ch.\ VII, Theorem 8.6]{helgason} for the simply connected case.

For instance, in the setting of Example~\ref{ex:SU-conj} we would find a basis of real eigenvectors for the symmetric unitary matrix $U\overline{U}^{-1} = UU^T = O_1 D^2 O_1^{-1}$ in order to find $O_1$ and $D^2$. The next example covers the Cartan decompositions we will be concerned with.
\begin{example} \label{ex:SO-block}
Say $\G = \PSO(n)$ and $0 < p < n$. Set $\theta = c_{\Delta_{[p]}}$, so $\theta(U) = \Delta_{[p]}U\Delta_{[p]}$.  Then $\G^\theta = \Z(\Delta_{[p]})$ is the block-diagonal subgroup $\P\S(\O(p) \oplus \O(n-p))$, and the connected component of the identity is $\K = \P(\SO(p) \oplus \SO(n-p))$. Note that $\K$ is not necessarily isomorphic to $\PSO(p) \times \PSO(n-p)$.  For instance, $\P(\SO(4) \oplus \SO(4))$ has center $\{I,  \Delta_{[4]}\}$ while $\PSO(4) \times \PSO(4)$ has trivial center.

The Lie algebra $\so(n)$ of $\G$ consists of all $n \times n$ real skew-symmetric matrices. The $1$-eigenspace of $d\theta$ is $\so(p) \oplus \so(n-p)$, while the $(-1)$-eigenspace $\p$ consists of the matrices $\left[ \begin{smallmatrix} 0 & -M^T \\ M & 0 \end{smallmatrix}\right]$ where $M$ is $(n-p) \times p$. We take
\begin{align*}
&\a = \left\{ \begin{bmatrix}
0_{p,p} & 0_{p,n-2p} & -D\\
0_{n-2p,p} & 0_{n-2p,n-2p} & 0_{n-2p,p}\\
D & 0_{p,n-2p} & 0_{p,p}
\end{bmatrix} : \text{$D$ $p \times p$ real diagonal} \right\}\\
&\A = \left\{ \begin{bmatrix}
\cos D & 0_{p,n-2p} & -\sin D\\
0_{n-2p,p} & I_{n-2p,n-2p} & 0_{n-2p,p}\\
\sin D & 0_{p,n-2p} & \cos D
\end{bmatrix} : \text{$D$ $p \times p$ real diagonal} \right\} 
\end{align*}
For instance, Cartan decomposition in the case $n = 8$ and $p = 3$ says that every $V \in \PSO(8)$ can be factored as a product
\begin{equation*}
\left[\begin{smallmatrix}
\ast & \ast & \ast & 0 & 0 & 0 & 0 & 0\\
\ast & \ast & \ast & 0 & 0 & 0 & 0 & 0\\
\ast & \ast & \ast & 0 & 0 & 0 & 0 & 0\\
0 & 0 & 0 & \ast & \ast & \ast & \ast & \ast\\
0 & 0 & 0 & \ast & \ast & \ast & \ast & \ast\\
0 & 0 & 0 & \ast & \ast & \ast & \ast & \ast\\
0 & 0 & 0 & \ast & \ast & \ast & \ast & \ast\\
0 & 0 & 0 & \ast & \ast & \ast & \ast & \ast
\end{smallmatrix}\right]
\left[\begin{smallmatrix}
x_1 & 0 & 0 & 0 & 0 & -y_1 & 0 & 0\\
0 & x_2 & 0 & 0 & 0 & 0 & -y_2 & 0\\
0 & 0 & x_3 & 0 & 0 & 0 & 0 & -y_3\\
0 & 0 & 0 & 1 & 0 & 0 & 0 & 0\\
0 & 0 & 0 & 0 & 1 & 0 & 0 & 0\\
y_1& 0 & 0 & 0 & 0 & x_1 & 0 & 0\\
0 & y_2 & 0 & 0 & 0 & 0 & x_2 & 0\\
0 & 0 & y_3 & 0 & 0 & 0 & 0 & x_3
\end{smallmatrix}\right]
\left[\begin{smallmatrix}
\ast & \ast & \ast & 0 & 0 & 0 & 0 & 0\\
\ast & \ast & \ast & 0 & 0 & 0 & 0 & 0\\
\ast & \ast & \ast & 0 & 0 & 0 & 0 & 0\\
0 & 0 & 0 & \ast & \ast & \ast & \ast & \ast\\
0 & 0 & 0 & \ast & \ast & \ast & \ast & \ast\\
0 & 0 & 0 & \ast & \ast & \ast & \ast & \ast\\
0 & 0 & 0 & \ast & \ast & \ast & \ast & \ast\\
0 & 0 & 0 & \ast & \ast & \ast & \ast & \ast
\end{smallmatrix}\right]
\end{equation*}
\end{example}

\begin{definition} \label{def:canonical-params}
The \emph{$\K$-double coset} of $g \in \G$ is the set
\begin{equation*}
\K g \K = \{k_1 g k_2 : k_1,k_2 \in \K\}.
\end{equation*}
By the \emph{canonical parameters} of $g \in \G$ (with respect to the subgroup $\K$), we mean any data associated to $g$ that uniquely identifies its $\K$-double coset $\K g \K$.
\end{definition} 

\begin{example} \label{ex:U2xU2}
Say $\G = \U(4)$ and $\theta = c_{\Delta_{[2]}}$ is conjugation by $\diag(-1,-1,1,1)$. Then $\K = \G^\theta = \U(2) \oplus \U(2)$. We claim that the singular values of the upper-left $2 \times 2$ corner $U_{[2][2]}$ of $U \in \U(4)$ are canonical parameters with respect to $\K$.

On the one hand, left- or right-multiplying by an element of $\K$ does not change these singular values. On the other, Cartan decomposition tells us that any $U \in \U(4)$ can be factored as
\begin{equation} \label{ex:U4-cartan}
\begin{bmatrix} V_1 & 0 \\ 0 & V_2 \end{bmatrix}
\begin{bmatrix} \cos x_1 	& 	0	     & -\sin x_1 & 0 \\   
				0  			& \cos x_2	 & 0         & -\sin x_2\\
				\sin x_1  	& 0			 & \cos x_1  & 0\\
				0  			& \sin x_2	 & 0         & \cos x_2
\end{bmatrix}
\begin{bmatrix} V_3 & 0 \\ 0 & V_4 \end{bmatrix}
\end{equation}
with $V_i \in \U(2)$. The middle matrix can be freely left- or right-multiplied by any matrix $\diag(\pm 1, \pm 1, \pm 1, \pm 1)$ since these lie in $\K$, by which means we can make $\cos x_i, \sin x_i \geq 0$. Then $U_{[2][2]} = V_1 \diag(\cos x_1, \cos x_2) V_3$ has singular values $\cos x_1, \cos x_2$. These uniquely determine $\sin x_1, \sin x_2 \geq 0$, and therefore the whole middle matrix in \eqref{ex:U4-cartan} and the $\K$-double coset of $U$. This completes the argument.
\end{example}

Note that the sets $\k$, $\p$, $\K$ in Theorem~\ref{thm:cartan} are uniquely determined by $\theta$, while $\a$ and $\A$ are not. It may be convenient to replace $\A$ in the Cartan decomposition $\G = \K\A\K$ with a different set $\A' \subseteq \G$ which is not necessarily a torus. By definition, $\G = \K\A'\K$ holds if $\A'$ covers all possible canonical parameters.

\begin{example}
Continuing on from Example~\ref{ex:U2xU2}, we claim $\K C_2^1 \K C_2^1 \K = \U(4)$. Set $\A' = C_2^1 \K C_2^1$. We must show that for any $\sigma_1,\sigma_2 \in [0,1]$, there must exist $A \in \A'$ such that $A_{[2][2]}$ has singular values $\sigma_1, \sigma_2$.     The canonical parameters of $C_2^1 \diag(V_1,V_2) C_2^1 \in \A'$ are the singular values of its upper-left $2 \times 2$ corner
\begin{equation} \label{eq:corner}
\begin{bmatrix} 1 & 0 \\ 0 & 0 \end{bmatrix} V_1 \begin{bmatrix} 1 & 0 \\ 0 & 0 \end{bmatrix} + \begin{bmatrix} 0 & 0 \\ 0 & 1 \end{bmatrix} V_2 \begin{bmatrix} 0 & 0 \\ 0 & 1 \end{bmatrix}.
\end{equation}
By choosing $V_i = \left[ \begin{smallmatrix} \cos x_i & -\sin x_i \\ \sin x_i & \cos x_i \end{smallmatrix} \right]$ for $i = 1,2$ we can make the matrix \eqref{eq:corner} have any arbitrary singular values $\cos x_1, \cos x_2$.
\end{example}

As explained above, if $\K g \K = \K h \K$ then the Cartan doubles $g\theta(g)^{-1}$ and $h\theta(h)^{-1}$ are conjugate. If $\G$ is simply connected, this is a necessary and sufficient condition to have $\K g \K = \K h \K$, so the conjugacy class of the Cartan double $g\theta(g)^{-1}$ supplies canonical parameters \cite[Ch.\ VII, Theorem 8.6]{helgason}. For instance, the $\SO(n)$-double coset of $U \in \SU(n)$ is uniquely identified by the eigenvalues of $UU^T$.

Unfortunately neither $\SO(n)$ nor $\PSO(n)$ is simply connected for $n > 1$. We will not attempt to fix this issue in general, contenting ourselves with definitions that work in our specific cases of interest, which we return to in \secsymbol\ref{subsec:block-subgroups}.

We end this section with a simple but useful lemma on Lie groups.
\begin{lemma} \label{lem:gp-eq} If $\H_1$ is a connected Lie group and $\H_2 \subseteq \H_1$ is a closed subgroup with $\dim \H_2 = \dim \H_1$, then $\H_1 = \H_2$. 
\end{lemma}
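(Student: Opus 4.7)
The statement is a short standard fact, and the plan is to combine Cartan's closed subgroup theorem with the fact that an open subgroup of a connected topological group is the whole group. First I would apply the closed subgroup theorem to $\H_2 \subseteq \H_1$: since $\H_2$ is closed, it is an embedded Lie subgroup, and the inclusion induces an embedding $\h_2 \hookrightarrow \h_1$ of Lie algebras. The dimension hypothesis gives $\dim \h_2 = \dim \H_2 = \dim \H_1 = \dim \h_1$, so $\h_2 = \h_1$ as subspaces of $\h_1$.

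Next I would exploit the fact that $\exp : \h_1 \to \H_1$ is a local diffeomorphism at $0$, so its image contains an open neighborhood $U$ of the identity in $\H_1$. Since $\exp(\h_1) = \exp(\h_2)$ is contained in the identity component $(\H_2)_0 \subseteq \H_2$, we have $U \subseteq \H_2$. Left-translating $U$ by arbitrary elements of $\H_2$ shows $\H_2$ is a union of open subsets of $\H_1$, hence open.

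Finally, I would invoke the standard observation that an open subgroup of a connected topological group is the whole group: its cosets form a disjoint open cover, and connectedness forces this cover to consist of a single coset. Thus $\H_2 = \H_1$.

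There is really no significant obstacle here; the only subtle point is that the closed subgroup theorem is what licenses speaking of $\h_2$ in the first place and allows one to compare it with $\h_1$ as a subspace. Once that is in place, the argument is a one-line chain: same Lie algebra $\Rightarrow$ $\H_2$ open in $\H_1$ $\Rightarrow$ $\H_2 = \H_1$ by connectedness.
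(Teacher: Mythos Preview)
Your proof is correct. The paper argues slightly differently: it invokes the fact that $\H_1/\H_2$ carries a smooth manifold structure of dimension $\dim \H_1 - \dim \H_2 = 0$, hence is discrete, so the fibers of the quotient map $\pi : \H_1 \to \H_1/\H_2$ (i.e.\ the cosets of $\H_2$) are open, and connectedness of $\H_1$ forces there to be only one coset.

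Both arguments end with the same ``open subgroup of a connected group'' step; they differ only in how they show $\H_2$ is open in $\H_1$. You deduce this by identifying the Lie algebras via the closed subgroup theorem and using that $\exp$ is a local diffeomorphism, while the paper packages the same information into the dimension formula for the homogeneous space $\H_1/\H_2$. Your route is arguably more self-contained, since the manifold structure on $\H_1/\H_2$ is itself a nontrivial theorem (essentially built on the closed subgroup theorem you invoke directly). The paper's route is a bit slicker once that structure is granted. Neither approach has any real advantage here; both are standard one-paragraph proofs of a standard fact.
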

\begin{proof}
It is well-known that $\H_1/\H_2$ can be made into a smooth manifold of dimension $\dim \H_1 - \dim \H_2$ so that the quotient map $\pi : \H_1 \to \H_1/\H_2$ is smooth. In the present case, $\H_1/\H_2$ must be discrete since it is a manifold of dimension 0. Therefore the disjoint union $\H_1 = \bigsqcup_{x \in \H_1/\H_2} \pi^{-1}(x)$ is a disjoint union of open sets. Since $\H_1$ is connected, there can only be one set in this union, i.e.\ $\H_1 = \H_2$.
\end{proof}

\subsection{Commuting involutions} \label{subsec:commuting-cartan}
Suppose $\theta_1, \theta_2 : \G \to \G$ are two commuting involutive automorphisms, i.e.
\begin{equation*}
\theta_1^2 = \theta_2^2 = \id = \theta_1 \theta_2 \theta_1^{-1} \theta_2^{-1}.
\end{equation*}
Apply Cartan decomposition with respect to $\theta_1$ to $g \in \G$, writing $g = k_1 a  k_2$ with $k_1, k_2 \in \K_{\theta_1}$ and $a_1 \in \A_{\theta_1}$. The commuting property implies $\theta_2(\K_{\theta_1}) \subseteq \K_{\theta_1}$, so one can Cartan decompose $k_1, k_2$ using the automorphism $\theta_2 : \K_{\theta_1} \to \K_{\theta_1}$. The relevant fixed-point subgroup in $\K_{\theta_1}$ is $(\K_{\theta_1} \cap \K_{\theta_2})_0$, which we abbreviate as $\K_{\theta_1,\theta_2}$. This gives an expression
\begin{equation*}
g = k_3 b_1 k_4 a k_5 b_2 k_6.
\end{equation*}
with $k_3,k_4,k_5,k_6 \in \K_{\theta_1,\theta_2}$, and $b_1, b_2$ in a torus maximal in $(\K_{\theta_1})^{-\theta_2}$.

\subsection{Two-qubit magic} \label{subsec:magic}
In this subsection we review some useful facts about 2-qubit gates which will also be crucial in our 3-qubit decomposition.

\begin{definition}
A \emph{magic matrix} $Q \in \U(2^n)$ is one with $Q^\dagger (\SU(2)^{\otimes n}) Q \subseteq \SO(2^n)$.
\end{definition}
For instance,
\begin{equation*}
Q = \frac{1}{2}\begin{bmatrix}
1 & 1 & i & i\\
1 & -1 & i & -i\\
-1 & 1 & i & -i\\
1 & 1 & -i & -i
\end{bmatrix}
\end{equation*}
is a magic matrix, as can be verified by checking that $Q^\dagger (X \otimes I_2) Q$ and $Q^\dagger (I_2 \otimes X) Q$ are real for $X \in \{i\sigma_x, i\sigma_y, i\sigma_z\}$.

The existence and use of magic matrices is well-known \cite{CCD, hill-wootters, makhlin}.
We take a moment to discuss a more conceptual explanation, but this material can safely be skipped.
\begin{definition}
A \emph{real structure} on $\CC^n$ is a conjugate-linear map $c : \CC^n \to \CC^n$ satisfying $c^2 = \id$.
\end{definition}
The basic example is that $c$ is complex conjugation. Viewing $c$ as a real-linear map, it has $1$- and $(-1)$-eigenspaces $V_+$ and $V_-$, which one thinks of as the ``real'' and ``imaginary'' elements of $\CC^n$ with respect to $c$. If $A$ is an $n \times n$ complex matrix commuting with $c$, then $AV_+ = V_+$, and if $B$ is an $n \times n$ matrix whose columns form an orthonormal basis of $V_+$, then $B^\dagger AB$ is a real matrix.

Let $\iota : \CC^2 \to \CC^2$ be the conjugate-linear map with $\iota(\bra{0}) = \bra{1}$ and $\iota(\bra{1}) = -\bra{0}$. Define $c = \iota^{\otimes n} : (\CC^2)^{\otimes n} \to (\CC^2)^{\otimes n}$. Thus, if $w$ is a binary word and $\tilde{w}$ denotes its negation, we have $c(\bra{w}) = (-1)^{\sum w} \bra{\tilde{w}}$. Since $\sum w + \sum \tilde{w} = n$ for any $w$, we see that $c^2 = (-1)^n$. Therefore $c$ is a real structure on $(\CC^2)^{\otimes n}$ if and only if $n$ is even. Moreover, a matrix of the form $\left[\begin{smallmatrix} a & -\overline{b}\\ b & \overline{a}\end{smallmatrix}\right]$ commutes with $\iota$, so $c$ commutes with all elements of $\SU(2)^{\otimes n}$. It follows that magic matrices exist for any even $n$: take the columns to be an orthonormal basis of $c$-fixed vectors.

One possible orthogonal basis of $V_+$ consists of the vectors
\begin{equation*}
\bra{w} + \bra{c(w)} \qquad \text{and} \qquad i(\bra{w} - \bra{c(w)})
\end{equation*}
over binary words $w$ with $w_1 = 0$. In the case $n = 2$, we make a different choice:
\begin{align*}
&\tfrac{1}{2}(\bra{00} + \bra{01} - \bra{10} + \bra{11})\\
&\tfrac{1}{2}(\bra{00} - \bra{01} + \bra{10} + \bra{11})\\
&\tfrac{1}{2}(i\bra{00} + i\bra{01} + i\bra{10} - i\bra{11})\\
&\tfrac{1}{2}(i\bra{00} - i\bra{01} - i\bra{10} - i\bra{11}),
\end{align*}
This orthonormal basis leads to the matrix \eqref{eq:magic}. The $n = 2$ case is special, in fact.
\begin{proposition} \label{prop:magic}
If $Q \in \U(4)$ is a magic matrix, then $Q^\dagger(\SU(2)^{\otimes 2})Q = \SO(4)$.
\end{proposition}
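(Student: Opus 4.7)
The plan is to apply Lemma~\ref{lem:gp-eq} with $\H_1 = \SO(4)$ and $\H_2 = Q^\dagger(\SU(2)^{\otimes 2})Q$. The definition of ``magic'' already hands us the inclusion $\H_2 \subseteq \H_1$, so only three hypotheses remain to check: $\H_2$ is a closed subgroup of $\H_1$, $\H_1$ is connected, and $\dim \H_2 = \dim \H_1$. The first two are standard --- conjugation by $Q$ is a Lie group automorphism of $\U(4)$, so it carries the closed subgroup $\SU(2)^{\otimes 2}$ to a closed subgroup of $\U(4)$ (and hence of the subspace $\SO(4)$), while $\SO(4)$ is well known to be connected.

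The substantive step is the dimension count. On one side, $\dim\SO(4) = \binom{4}{2} = 6$. On the other, I would consider the surjective Lie group homomorphism $\varphi : \SU(2) \times \SU(2) \to \SU(2)^{\otimes 2}$ defined by $\varphi(A,B) = A \otimes B$. The standard uniqueness of tensor factorizations (up to rescaling $A,B$ by inverse scalars) forces any $(A,B) \in \ker\varphi$ to satisfy $A = \lambda I_2$, $B = \lambda^{-1}I_2$, and then unimodularity pins down $\lambda = \pm 1$. So $\ker\varphi = \{\pm(I_2,I_2)\}$ is discrete, giving $\dim \SU(2)^{\otimes 2} = \dim(\SU(2)\times\SU(2)) = 6$, and hence $\dim \H_2 = 6$ as required.

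With all hypotheses verified, Lemma~\ref{lem:gp-eq} yields $\H_2 = \H_1$, which is the claim. The approach is essentially forced once one notices that the magic property supplies the nontrivial inclusion and that both sides have the same dimension; I do not foresee a real obstacle beyond articulating the kernel computation cleanly.
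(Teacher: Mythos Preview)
Your proposal is correct and follows essentially the same route as the paper: invoke the definition for the inclusion $Q^\dagger(\SU(2)^{\otimes 2})Q \subseteq \SO(4)$, note both sides are $6$-dimensional, and apply Lemma~\ref{lem:gp-eq}. The paper's proof is terser (it simply asserts that both sides are connected Lie groups of dimension $6$), whereas you spell out the kernel computation for $\SU(2)\times\SU(2)\to\SU(2)^{\otimes 2}$ and explicitly verify the closedness and connectedness hypotheses of the lemma.
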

\begin{proof}
We have $Q^\dagger(\SU(2)^{\otimes 2})Q \subseteq \SO(4)$ by definition. Since both sides are connected Lie groups of dimension 6, they are equal by Lemma~\ref{lem:gp-eq}.
\end{proof}

More advanced techniques from representation theory (the \emph{Frobenius-Schur indicator}) can be used to show that magic matrices do \emph{not} exist for odd $n$. In that case the map $c$ satisfies $c^2 = -1$, defining a so-called \emph{quaternionic structure} on $(\CC^2)^{\otimes n}$ and yielding matrices $R$ such that $R^\dagger \SU(2)^{\otimes n} R$ is contained in the \emph{symplectic group} $\Sp(2^{n-1})$ (cf. Definition~\ref{def:sp}). See \cite{CCD} for more detail on these constructions.

For the rest of the paper we fix the magic matrix
\begin{equation} \label{eq:magic}
\Q = \frac{1}{2}\begin{bmatrix}
1 & 1 & i & i\\
1 & -1 & i & -i\\
-1 & 1 & i & -i\\
1 & 1 & -i & -i
\end{bmatrix}.
\end{equation}
Also set $\M = I_2 \otimes \Q$. Standard techniques for two-qubit gates show that $\Q$ is equivalent to a single CNOT up to multiplication by single-qubit gates:
\begin{equation} \label{eq:magic-circuit}
\Q = 
\begin{quantikz}
  & \gate{R_x(\pi/2)} & \ctrl{1}& \gate{R_x(-\pi)} &   &\\
  & \gate{R_z(-\pi/2)} & \targ{} & \gate{R_x(\pi/2)} & \gate{R_z(-\pi/2)} & 
\end{quantikz}
\end{equation}
Our choice of $\Q$ is somewhat arbitrary except that it seems to yield nicer formulas (e.g.\ in Figure~\ref{fig:triality} later) than some other possible choices. We have not tried to optimize $\Q$ in any rigorous sense.

The next result is a technical lemma for later use.
\begin{lemma} \label{lem:SU-I}
Let $\H = \Z(\{I_2 \otimes \sigma_y,\sigma_y \otimes \sigma_z\})$, the group of invertible matrices commuting with $I_2 \otimes \sigma_y$ and $\sigma_y \otimes \sigma_z$. Then $\Q^{\dagger} (\SU(2) \otimes I_2)\Q = \O(4) \cap \H$, which is the subgroup of matrices in $\O(4)$ of the form
\begin{equation} \label{eq:SU-I}
\begin{bmatrix}
a & -b & -c & -d\\
b & a  & d  & -c\\
c & -d & a & b\\
d & c  & -b & a
\end{bmatrix}
\end{equation}
\end{lemma}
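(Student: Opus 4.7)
The plan is to establish the lemma in two parts: first, the explicit description of $\Q^\dagger(\SU(2) \otimes I_2)\Q$ as the set of matrices of the form \eqref{eq:SU-I} satisfying $a^2+b^2+c^2+d^2 = 1$, and second, the fact that this set of matrices coincides with $\O(4) \cap \H$.

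For the first part, I would carry out the matrix computation $\Q^\dagger(A \otimes I_2)\Q$ directly for a generic $A \in \SU(2)$, parametrizing its entries in terms of four reals $a,b,c,d$ with $a^2+b^2+c^2+d^2 = 1$, and plugging in the explicit form of $\Q$ from \eqref{eq:magic}. The multiplication should yield exactly the matrix \eqref{eq:SU-I}. This is a direct but tedious verification; it can be streamlined by working at the Lie algebra level, computing $\Q^\dagger(i\sigma_a \otimes I_2)\Q$ for $a \in \{x,y,z\}$, and confirming that these three skew-symmetric matrices span the tangent space at the identity of the $3$-parameter family \eqref{eq:SU-I}. Containment $\Q^\dagger(\SU(2) \otimes I_2)\Q \subseteq \SO(4)$ is automatic from Proposition~\ref{prop:magic}.

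For the second part, the forward inclusion is straightforward: I would check by direct block computation that every matrix of form \eqref{eq:SU-I} commutes with $I_2 \otimes \sigma_y$ and $\sigma_y \otimes \sigma_z$, and note that orthogonality is immediate from the norm-$1$ constraint. For the reverse inclusion, one option is to take a generic $M \in \O(4) \cap \H$, solve the two commutation relations $M(I_2 \otimes \sigma_y) = (I_2 \otimes \sigma_y)M$ and $M(\sigma_y \otimes \sigma_z) = (\sigma_y \otimes \sigma_z)M$ as linear constraints on the $16$ entries of $M$, and check that the combined solution space is exactly the $4$-parameter family \eqref{eq:SU-I}, which the orthogonality condition $MM^T = I$ then restricts to the $3$-sphere $a^2+b^2+c^2+d^2 = 1$.

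The main obstacle is purely computational: both the conjugation $\Q^\dagger(A \otimes I_2)\Q$ and the linear-algebraic solution of the commutation system are mechanical but lengthy, and there is no conceptual difficulty. If one wishes to avoid solving the commutation system in full, a dimension argument via Lemma~\ref{lem:gp-eq} suffices: once the forward inclusion $\Q^\dagger(\SU(2) \otimes I_2)\Q \subseteq \O(4) \cap \H$ is in hand, both sides are $3$-dimensional connected closed subgroups of $\O(4)$ (the right-hand side being connected because the explicit form \eqref{eq:SU-I} with the norm-$1$ constraint is diffeomorphic to $S^3$), so Lemma~\ref{lem:gp-eq} forces equality.
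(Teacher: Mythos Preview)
Your proposal is correct and follows essentially the same structure as the paper's proof: identify the form \eqref{eq:SU-I} with $\O(4)\cap\H$ by solving the commutation relations, establish the inclusion $\Q^\dagger(\SU(2)\otimes I_2)\Q \subseteq \O(4)\cap\H$, and then close up with the dimension argument via Lemma~\ref{lem:gp-eq}.

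The one place the paper is slicker is in proving the forward inclusion. Rather than computing $\Q^\dagger(A\otimes I_2)\Q$ for a parametrized $A\in\SU(2)$ (or the three Lie-algebra generators), the paper instead computes the two fixed products
\[
\Q(I_2\otimes\sigma_y)\Q^\dagger = -I_2\otimes\sigma_y, \qquad \Q(\sigma_y\otimes\sigma_z)\Q^\dagger = -I_2\otimes\sigma_x,
\]
and observes that both results commute with everything in $\SU(2)\otimes I_2$. This immediately gives $\Q^\dagger(\SU(2)\otimes I_2)\Q \subseteq \H$, with membership in $\O(4)$ coming from Proposition~\ref{prop:magic}. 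So two concrete $4\times 4$ matrix products replace your parametric computation. Your route works fine; theirs is just a bit less labor.
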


\begin{proof}
The fact that $\O(4) \cap \H$ is the set of orthogonal matrices of the form \eqref{eq:SU-I} amounts to solving the linear system that says a matrix commutes with $I_2 \otimes \sigma_y$ and with $\sigma_y \otimes \sigma_z$. One computes
\begin{equation*}
\Q(I_2 \otimes \sigma_y)\Q^\dagger = -I_2 \otimes \sigma_y \quad \text{and} \quad \Q(\sigma_y \otimes \sigma_z)\Q^\dagger = -I_2 \otimes \sigma_x,
\end{equation*}
which commute with all elements of $\SU(2) \otimes I_2$. Thus $\Q^\dagger(\SU(2) \otimes I_2)\Q \subseteq \O(4) \cap \H$.

On the other hand, the form \eqref{eq:SU-I} shows that $\O(4) \cap \H$ is the image of $\SU(2)$ under the injective map replacing each complex entry $a+bi$ of a matrix with the $2 \times 2$ block $\left[ \begin{smallmatrix} a & -b \\ b & a \end{smallmatrix} \right]$. In particular, $\Q^\dagger(\SU(2) \otimes I_2)\Q$ and $\O(4) \cap \H$ have the same dimension and are connected, so are equal by Lemma~\ref{lem:gp-eq}.
\end{proof}

\subsection{Block matrix subgroups of $\PSO(n)$} \label{subsec:block-subgroups}
Suppose $\pi = \{\pi_1, \ldots, \pi_k\}$ is a partition of the set $[n]$. That is, $[n]$ is the disjoint union of the sets $\pi_1, \ldots, \pi_k$, called the \emph{blocks} of the partition. For instance, $\{\{1,3,4\},\{2,6\},\{5\}\}$ is a partition of $[6]$ which we will usually abbreviate as $134|26|5$ (or equivalently $26|134|5$, etc.). Let $\SO(\pi)$ be the subgroup of $\SO(n)$ consisting of block matrices whose blocks occur in positions $\pi_i \times \pi_i$ for $i = 1, \ldots, k$, and which all have determinant 1. Following our usual notation, $\PSO(\pi)$ then denotes the image of $\SO(\pi)$ in $\PSO(n)$. 
\begin{example} \hfill
\begin{itemize}
\item $\PSO(123\cdots n) = \PSO(n)$.
\item $\PSO(1|2|3|\cdots|n)$ is trivial.
\item $\P(\SO(n_1) \oplus \SO(n_2) \oplus \cdots)$ is $\PSO(\{[n_1], [n_1+1,n_1+n_2], \ldots, \})$.
\item $\PSO(13|24)$ is the group of matrices of the form $\left[ \begin{smallmatrix}
a & 0 & -b & 0\\
0 & c & 0 & -d\\
b & 0 & a & 0\\
0 & d & 0 & c
\end{smallmatrix}\right]$.
\end{itemize}
\end{example}

The \emph{meet} of two partitions $\pi$ and $\pi'$ of the same set is
\begin{equation*}
\pi \cap \pi' = \{b \cap b' : b \in \pi, b' \in \pi', b \cap b' \neq \emptyset\}.
\end{equation*}
For instance, $123|45678 \cap 234|15678 = 1|23|4|5678$. 
\begin{lemma} \label{lem:intersection} $\PSO(\pi \cap \pi') = (\PSO(\pi) \cap \PSO(\pi'))_0$.
\end{lemma}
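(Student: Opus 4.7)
The plan is to apply Lemma~\ref{lem:gp-eq} with $\H_2 = \PSO(\pi \cap \pi')$ contained in $\H_1 = (\PSO(\pi) \cap \PSO(\pi'))_0$, after verifying that $\H_2$ is closed and that $\dim \H_1 = \dim \H_2$.

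First I would establish the inclusion $\PSO(\pi \cap \pi') \subseteq \PSO(\pi) \cap \PSO(\pi')$ by working at the level of $\SO$. Any $A \in \SO(\pi \cap \pi')$ decomposes into blocks $A|_{\pi_s \cap \pi'_t}$ of determinant $1$. Since the parts of $\pi \cap \pi'$ refine both $\pi$ and $\pi'$, the $\pi_s$-block of $A$ is itself block-diagonal with pieces $A|_{\pi_s \cap \pi'_t}$, and its determinant is $\prod_t \det A|_{\pi_s \cap \pi'_t} = 1$; hence $A \in \SO(\pi)$. The symmetric argument gives $A \in \SO(\pi')$. Projecting to $\PSO(n)$ yields the claimed inclusion. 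Connectedness of $\SO(\pi \cap \pi') \cong \prod_b \SO(|b|)$ passes to its image, so $\PSO(\pi \cap \pi') \subseteq \H_1$. Compactness gives that it is closed in $\H_1$.

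For the dimension count I would compute Lie algebras as subspaces of $\mathfrak{so}(n) = \mathrm{Lie}(\PSO(n))$, identifying the Lie algebra of each $\PSO$ group with that of the corresponding $\SO$ group via the finite covering $\SO(n) \to \PSO(n)$. As an intersection of closed Lie subgroups, $\PSO(\pi) \cap \PSO(\pi')$ has Lie algebra $\mathfrak{so}(\pi) \cap \mathfrak{so}(\pi')$. A skew-symmetric matrix belongs to $\mathfrak{so}(\pi)$ precisely when its $(i,j)$-entries vanish whenever $i,j$ lie in different parts of $\pi$; imposing both block-vanishing conditions is equivalent to vanishing outside the blocks of $\pi \cap \pi'$, so $\mathfrak{so}(\pi) \cap \mathfrak{so}(\pi') = \mathfrak{so}(\pi \cap \pi')$. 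Hence $\dim \H_1 = \dim \PSO(\pi \cap \pi') = \dim \H_2$, and Lemma~\ref{lem:gp-eq} concludes the proof.

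I do not foresee a serious obstacle; the only point of care is identifying Lie algebras across the covering $\SO(n) \to \PSO(n)$, which is routine since the covering is a local diffeomorphism whose derivative is an isomorphism on Lie algebras. The substantive content is entirely in the elementary observation about block structures and the application of Lemma~\ref{lem:gp-eq}.
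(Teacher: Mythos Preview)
Your proof is correct and follows essentially the same approach as the paper: both arguments reduce to showing that $\so(\pi) \cap \so(\pi') = \so(\pi \cap \pi')$ via the block-entry description. You are more explicit than the paper in verifying the inclusion $\PSO(\pi \cap \pi') \subseteq (\PSO(\pi) \cap \PSO(\pi'))_0$ and in invoking Lemma~\ref{lem:gp-eq}, whereas the paper simply asserts ``it suffices'' to match Lie algebras; but the content is the same.
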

\begin{proof}
It suffices to show that $\PSO(\pi \cap \pi')$ and $\PSO(\pi) \cap \PSO(\pi')$ have the same Lie algebra. For brevity, write $S^2$ for the set $S \times S$. The Lie algebra $\so(\pi)$ of $\PSO(\pi)$ consists of the skew-symmetric real matrices $A$  whose nonzero entries lie in the set $\bigcup_i \pi_i^2$. The Lie algebras of our two subgroups are $\so(\pi \cap \pi')$ and $\so(\pi) \cap \so(\pi')$ respectively. Indeed, the latter has its nonzero entries constrained to the set
\begin{equation*}
\bigcup_i \pi_i^2 \cap \bigcup_j (\pi_j')^2 = \bigcup_{i,j} (\pi_i \cap \pi_j')^2.
\end{equation*}
Since we can delete all empty terms $\pi_i \cap \pi_j' = \emptyset$, this by definition is $\bigcup_{k} (\pi \cap \pi')_k^2$.
\end{proof}

In general, $\PSO(\pi) \cap \PSO(\pi')$ may not be connected, e.g.\ $\PSO(123|456) \cap \PSO(14|25|36)$ consists of the 4 matrices $A \oplus A$ where
\begin{equation*}
A \in \{I_3, \Delta_{\{1\}}, \Delta_{\{2\}}, \Delta_{\{3\}}\}.
\end{equation*}
However, since $123|456 \cap 14|25|36 = 1|2|3|4|5|6$, Lemma~\ref{lem:intersection} correctly predicts that $(\PSO(123|456) \cap \PSO(14|25|36))_0$ is trivial.

If $\pi = \{\pi_-,\pi_+\}$ has just two blocks, then $\PSO(\pi)$ is a Cartan subgroup, namely $\K_{\theta}$ where $\theta = c_{\Delta_{\pi_-}}$. We can then apply the techniques of Cartan decomposition from \secsymbol\ref{subsec:cartan}, for which it is useful to have canonical parameters. We start with the slightly simpler case of $\SO(n)$.
\begin{definition}
Let $0 < p \leq n/2$. The \emph{$p$-canonical parameters} of $U \in \SO(n)$ are the singular values $\sigma(U_{[p][p]})$ of the upper-left $p \times p$ corner of $U$, together with:
\begin{itemize}
\item the sign of $\det U_{[p][p]}$ if $p < n-p$
\item the signs of $\det U_{[p][p]}$ and $\det U_{[p][n-p+1,n]}$ if $p = n/2$.
\end{itemize}
\end{definition}

\begin{proposition} \label{prop:canonical-params}
Let $\K = \SO(p) \oplus \SO(n-p)$. Then the $p$-canonical parameters of $U \in \SO(n)$ are canonical parameters with respect to the subgroup $\K$: that is, $U \in \K V \K$ if and only if $U$ and $V$ have the same $p$-canonical parameters.
\end{proposition}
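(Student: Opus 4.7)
The plan is to prove the two implications separately.

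The forward direction (same double coset $\implies$ same $p$-canonical parameters) is a direct block-matrix computation. If $U = (A_1 \oplus B_1)\,V\,(A_2 \oplus B_2)$ with $A_i \in \SO(p)$ and $B_i \in \SO(n-p)$, then $U_{[p][p]} = A_1\,V_{[p][p]}\,A_2$, so the singular values of $U_{[p][p]}$ and $V_{[p][p]}$ coincide (the $A_i$ are orthogonal) and $\det U_{[p][p]} = \det V_{[p][p]}$ (both $\det A_i = 1$). When $p = n/2$, an analogous block multiplication gives $U_{[p][n-p+1,n]} = A_1\,V_{[p][n-p+1,n]}\,B_2$, and $\det B_2 = 1$ preserves that determinant as well.

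For the converse, I would apply Theorem~\ref{thm:cartan} to $\SO(n)$ with $\theta = c_{\Delta_{[p]}}$, using the torus $\A$ from Example~\ref{ex:SO-block}, to write any $V \in \SO(n)$ as $k_1 a k_2$ with $a \in \A$. Writing the angle diagonal as $D = \diag(x_1,\ldots,x_p)$, the block structure of $a$ gives $a_{[p][p]} = \cos D$, with singular values $\{|\cos x_i|\}$ and determinant $\prod_i \cos x_i$, and (when $p = n/2$) also $a_{[p][n-p+1,n]} = -\sin D$, with determinant $(-1)^p \prod_i \sin x_i$. Combined with the forward direction, this reduces the problem to showing that two torus elements $a, a' \in \A$ with matching $p$-canonical parameters lie in the same $\K$-double coset.

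To establish this, I would exhibit explicit $\K$-conjugations acting on $\A$: (a) simultaneous permutations of the $x_i$ and of the parallel coordinates in $[n-p+1,n]$, realized by permutation matrices in $\SO(p)\oplus\SO(n-p)$ (combined with sign flips to adjust determinants when the permutation is odd); (b) paired sign flips $(x_i,x_j)\mapsto(-x_i,-x_j)$ via $\Delta_J \oplus \Delta_{J'}\in\K$ with $|J|,|J'|$ even; (c) when $p < n/2$, an unpaired flip $x_i \mapsto -x_i$ realized by conjugation by $\Delta_{\{n-p+i,\,j\}}$ for any $j \in [p+1,n-p]$ (this lies in $\K$ because both $-1$'s sit inside the second block, preserving $\det = 1$ on each factor). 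A small computation shows this conjugation flips the signs of the two off-diagonal $\sin x_i$ entries of $a$ while leaving the diagonal $\cos x_i$ entries untouched, exactly realizing $x_i \mapsto -x_i$. The invariants of $\A$ under operations (a)--(c) are precisely the multiset $\{|\cos x_i|\}$ together with the determinant signs prescribed by the definition of $p$-canonical parameters.

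The main obstacle is the boundary case $p = n/2$, where the middle identity block has size zero and symmetry (c) is no longer available. In that case both $\det a_{[p][p]} = \prod \cos x_i$ and $\det a_{[p][n-p+1,n]} = (-1)^p\prod \sin x_i$ remain genuine invariants, which matches the definition of $p$-canonical parameters in that regime. I would handle $p < n/2$ and $p = n/2$ as separate sub-cases, verifying in each that operations (a)--(c) exhaust the remaining ambiguity among torus elements with matching canonical parameters.
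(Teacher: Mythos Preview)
Your forward direction and your reduction to comparing two torus elements are fine, and they match the paper's proof. The gap is in your list of moves (a)--(c): every one of them is a \emph{conjugation} by an element of $\K$, and conjugation of
\[
a(x)=\begin{bmatrix}\cos D&0&-\sin D\\0&I_{n-2p}&0\\\sin D&0&\cos D\end{bmatrix}
\]
by a diagonal sign matrix never changes any $\cos x_i$: it can only permute the angles and flip signs of the $\sin x_i$. So your claimed invariant ``the multiset $\{|\cos x_i|\}$'' is wrong---the full multiset $\{\cos x_i\}$ (with signs) is already invariant under (a)--(c). Concretely, for $p=2$, $n=5$ the elements $a(\pi/3,\pi/4)$ and $a(2\pi/3,3\pi/4)$ have identical $2$-canonical parameters (singular values $1/2,\sqrt 2/2$, determinant sign $+$), yet no sequence of your moves connects them, since the first has positive cosines and the second negative ones.

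What is missing is a one-sided multiplication rather than a conjugation. The paper left-multiplies by $Q\oplus I_{n-2p}\oplus Q$ with $Q=\diag(q_1,\dots,q_p)$ and $\det Q=1$; this sends $a(x)$ to the torus element with $\cos x_i\mapsto q_i\cos x_i$ and $\sin x_i\mapsto q_i\sin x_i$ (equivalently $x_i\mapsto x_i+\pi$ on the chosen indices), which is exactly the move needed to match the cosine signs subject to the determinant constraint. After that, a conjugation of the type you describe fixes up the sine signs, with the extra $\Delta_{\{n-p,\,n-p+1\}}$ trick (your (c)) handling the last residual sign when $p<n-p$. If you add this one-sided move to your toolkit and redo the invariant analysis, your argument goes through and is then essentially the same as the paper's.
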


\begin{example}
If $n = 2$ and $p = 1$ then $\K$ is trivial, so the $\K$-double coset of $U$ is just $\{U\}$. In this case, Proposition~\ref{prop:canonical-params} asserts that $U \in \SO(2)$ is uniquely determined by $|U_{11}|$ and the signs of $U_{11}$ and $U_{21}$.
\end{example}

\begin{proof}[Proof of Proposition~\ref{prop:canonical-params}]
We must show that $\K U \K = \K V \K$ if and only if $U$ and $V$ have the same $p$-canonical parameters. One direction is easy: left and right multiplication of $U$ by elements of $\K$ has the effect of left- and right- multiplying $U_{[p][p]}$ by a special orthogonal matrix, which changes neither its singular values nor its determinant. Note that the situation would be different if $\K$ were the larger group $\S(\O(p) \oplus \O(n-p))$, since we could change the sign of $\det U_{[p][p]}$.

Conversely, suppose $U$ and $V$ have the same $p$-canonical parameters.  By Cartan decomposition (Theorem~\ref{thm:cartan}), we may write $U = K_1 A  K_2$ and $V = K_3 B K_4$ with
\begin{equation*}
A = \begin{bmatrix} \cos D & 0 & -\sin D \\ 0 & I_{n-2p} & 0 \\ \sin D & 0 & \cos D \end{bmatrix} \quad \text{and} \quad B = \begin{bmatrix} \cos E & 0 & -\sin E \\ 0 & I_{n-2p} & 0 \\ \sin E & 0 & \cos E \end{bmatrix}
\end{equation*}
for some $p \times p$ real diagonal matrices $D,E$. Our goal is to show that $U$ and $V$ are in the same $\K$-double coset. We do this by explicitly describing how to get from $B$ to $A$ by a series of multiplications with elements of $\K$.

Since $U$ and $V$ have the same canonical parameters, $\cos D$ and $\cos E$ must be equal up to permuting diagonal entries and flipping their signs, subject to the constraint $\det \cos D = \det \cos E$. In the case $p = n/2$ we also have the assumption $\det \sin D = \det \sin E$.  By taking $P$ to be an appropriate $p \times p$ permutation matrix, we can conjugate $B$ by $P \oplus I_{n-2p} \oplus P$ so that the upper-left and lower-right blocks become $\cos D$ up to signs. Since replacing $P$ by $P \cdot \diag(-1,1,1,\ldots,1)$ in this operation has no impact on the two blocks being considered, we may assume $\det P = 1$. Thus $P \oplus I_{n-2p} \oplus P \in \K$, so this conjugation does not change the double coset of $B$. Hence we may assume that $\cos D$ and $\cos E$ are equal up to signs, in which case $\sin D$ and $\sin E$ are as well.

Next, let $q_i \in \{\pm 1\}$ be such that $q_i \cos D_{ii} = \cos E_{ii}$ for $i = 1,\ldots,p-1$. Set $q = q_1 \cdots q_{p-1}$ and $Q = \diag(q_1, \ldots, q_{p-1}, q)$. Then $Q \cos D$ and $\cos E$ are equal except perhaps for the sign of the $p$th diagonal entry. However, since $\det Q = q_1^2 \cdots q_{p-1}^2 = 1$ and $\det \cos D = \det \cos E$ by assumption, we have $\det(Q \cos D) = \cos E$, forcing $Q \cos D = \cos E$. Left-multiplying $B$ by $Q \oplus I_{n-2p} \oplus Q \in \K$ therefore allows us to assume $\cos D = \cos E$.

Similarly, if $R$ has the form $\diag(\pm 1, \cdots, \pm 1)$, then conjugating $B$ by $R \oplus I_{n-2p} \oplus R$ lets us change signs of the lower-left and upper-right blocks, leaving the other blocks unchanged. Proceeding as above, we may assume that $\sin D$ and $\sin E$ match except that perhaps $\sin D_{11} = -\sin E_{11}$. Now consider two cases.
\begin{itemize}
\item If $p = n/2$, then by assumption $\det \sin D = \det \sin E$. As in the last paragraph, this forces $\sin D = \sin E$.
\item If $p < n-p$, then the matrix $\Delta_{\{n-p,n-p+1\}}$ lies in $\K = \SO(p) \oplus \SO(n-p)$. Hence we may conjugate $B$ by it, which has the effect of flipping the sign of $\sin E_{11}$ without changing any other entries.
\end{itemize}
Either way, we can make $\sin D = \sin E$ and $\cos D = \cos E$, thereby reducing $B$ to $A$ and completing the proof.
\end{proof}

The factorization of $V \in \O(n)$ as
\begin{equation*}
V = L_1 \begin{bmatrix} \cos D & 0 & -\sin D \\ 0 & I_{n-2p} & 0 \\ \sin D & 0 & \cos  D \end{bmatrix} L_2
\end{equation*}
with $L_1,L_2 \in \O(p) \oplus \O(n-p)$ (cf. Example~\ref{ex:SO-block}) is an instance of the \emph{cosine-sine decomposition}. A numerical algorithm for computing it can be found in \cite{sutton-cosine-sine}. Using such an algorithm, we can compute matrices $K_1,K_2 \in \SO(p) \oplus \SO(n-p)$ with $K_1 V K_2 = U$ whenever $U,V \in \SO(n)$ have the same $p$-canonical parameters (also using the operations in the proof of Proposition~\ref{prop:canonical-params} to ensure that the blocks of $K_1$ and $K_2$ have determinant 1).

Strictly speaking, we are interested in the projective case $\G = \PSO(n)$ and $\K = \P(\SO(p) \oplus \SO(n-p))$, in which case a slight modification of Definition~\ref{def:canonical-params} is required to get canonical parameters. In fact the difference will never really turn out to matter, but we include the following for completeness.
\begin{corollary}
Fix $p \leq n-p$ and let the \emph{projective $p$-canonical parameters} of $U \in \PSO(n)$ be the singular values of $\sigma(U_{[p][p]})$ together with
\begin{itemize}
\item $\sgn \det U_{[p][p]}$ (if $p < n-p$ and $p$ is even)
\item $\sgn \det U_{[p][p]}$ and $\sgn \det U_{[p][n-p+1,n]}$  (if $p = n/2$ is even)
\item $\sgn \det U_{[p][p]} \cdot \sgn \det U_{[p][n-p+1,n]}$  (if $p = n/2$ is odd)
\end{itemize}
Then $U,V \in \PSO(n)$ are in the same double coset of $\K = \P(\SO(p) \oplus \SO(n-p))$ if and only if they have the same projective $p$-canonical parameters.
\end{corollary}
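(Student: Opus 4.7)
The plan is to reduce everything to Proposition~\ref{prop:canonical-params} by tracking how the quotient map $\pi : \SO(n) \to \PSO(n)$ interacts with double cosets of $K_0 = \SO(p) \oplus \SO(n-p)$. Write $\K = \pi(K_0)$ and recall that $Z(\SO(n)) = \{\pm I\}$ when $n$ is even and is trivial when $n$ is odd. The case $n$ odd is immediate since $\pi$ is then an isomorphism and the corollary just restates the proposition; the substance is in $n$ even, where $[U] = [V]$ in $\PSO(n)$ iff $V \in \{U, -U\}$ in $\SO(n)$.

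Concretely, for $n$ even I would note that the preimage under $\pi$ of a $\K$-double coset of $[U]$ is $K_0 U K_0 \cup K_0(-U)K_0$. Hence $[U]$ and $[V]$ share a $\K$-double coset iff $V$ or $-V$ shares a $K_0$-double coset with $U$ in $\SO(n)$, and by Proposition~\ref{prop:canonical-params} this reduces to comparing singular values of $U_{[p][p]}$ together with the sign(s) of the relevant block determinant(s), possibly after the substitution $U \mapsto -U$. Since the singular values are obviously invariant under this substitution, the entire problem collapses to tracking how $\det U_{[p][p]}$ and (when $p = n/2$) $\det U_{[p][n-p+1,n]}$ transform, using the identity $\det(-U)_{IJ} = (-1)^{|I|}\det U_{IJ}$.

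This gives the three bullet points case by case. When $p$ is even, $-I \in K_0$, so there is no genuine merger of $K_0$-double cosets and both signs are preserved under $U \mapsto -U$; thus Proposition~\ref{prop:canonical-params}'s invariants descend to $\PSO(n)$ unchanged, handling the first two bullets. When $p < n-p$ is odd, the lone sign flips, so it is not well-defined on $\PSO(n)$ and moreover both sign values occur in each $\K$-double coset (because the union $K_0 U K_0 \cup K_0(-U)K_0$ covers both), so it drops out entirely and only the singular values remain. When $p = n/2$ is odd, each individual sign flips under $U \mapsto -U$ but their product is invariant, and the product is still a distinguishing parameter because the proposition's two signs together were.

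The only real obstacle is bookkeeping: in each sub-case one must verify both (a) that the listed parameters are invariant under $U \mapsto -U$ and hence well-defined on $\PSO(n)$, and (b) that they are still enough to separate $\K$-double cosets after identifying $U$ with $-U$. Neither requires new ideas beyond Proposition~\ref{prop:canonical-params} and the explicit description of the center of $\SO(n)$; the corollary is essentially a direct corollary once the quotient-by-$\{\pm I\}$ is unwound carefully.
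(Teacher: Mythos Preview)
Your proposal is correct and follows essentially the same argument as the paper: both reduce to Proposition~\ref{prop:canonical-params} by noting that $\K$-double cosets in $\PSO(n)$ pull back to unions $K_0 U K_0 \cup K_0(-U)K_0$ in $\SO(n)$, and then track how the sign data transforms under $U \mapsto -U$ via $\det(-U)_{IJ} = (-1)^p \det U_{IJ}$. Your write-up is more explicit about the case splits (and the observation that $-I \in K_0$ when $p$ is even is a nice touch), but the logic is identical.
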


\begin{proof}
Two matrices $U$ and $V$ will be in the same $\K$-double coset if and only if $U$ is in the same $\SO(p) \oplus \SO(n-p)$-double coset as either $V$ or $-V$, if and only if $U$ has the same $p$-canonical parameters as either $V$ or $-V$. The last condition holds if and only if $U$ and $V$ have the same projective $p$-canonical parameters; checking this in the various cases comes down to the fact that, if $p$ is odd, then replacing $V$ by $-V$ flips the signs of $\det V_{[p][p]}$ and $\det V_{[p][n-p+1,n]}$, but not if $p$ is even.
\end{proof}

\subsection{Automorphisms and roots}
Constructing a triality symmetry of $\so(8)$ will be easier after reviewing some generalities on Lie algebras. For more detailed background, see \cite[Ch.\ III and IV]{humphreys}.

An \emph{automorphism} of a Lie group $\G$ is an invertible function $\theta : \G \to \G$ satisfying $\theta(gh) = \theta(g)\theta(h)$, and an automorphism of a Lie algebra $\g$ is an invertible linear function $\phi : \g \to \g$ satisfying $\phi([X,Y]) = [\phi(X),\phi(Y)]$. In either case, the set of automorphisms forms a group under composition, written $\Aut(\G)$ or $\Aut(\g)$ as appropriate. Fixing $g \in \G$, each conjugation function $c_g(x) = gxg^{-1}$ defines an automorphism of $\G$, and its derivative $\Ad_g$ at the identity is an automorphism of $\g$. In the case where $\G$ is a matrix Lie group, $\Ad_g : \g \to \g$ will also simply be conjugation by $g$. Automorphisms of the type $c_g$ and $\Ad_g$ are \emph{inner} automorphisms, and make up subgroups $\Inn(\G) \subseteq \Aut(\G)$ and $\Inn(\g) \subseteq \Aut(\g)$. The \emph{outer automorphism} group of $\G$ is the quotient $\Out(\G) = \Aut(\G)/\Inn(\G)$, and similarly in the Lie algebra case. A typical example of an non-inner automorphism is complex conjugation acting on $\U(n)$ or on $\mathfrak{u}(n)$. 

It follows from the classification of compact Lie algebras by their Dynkin diagrams that, with one exception, if $\g$ is the Lie algebra of a simple compact group then $\Out(\g)$ is either trivial or has a single non-identity element $\phi$ which has $\phi^2 = \id$.  For instance, take $\phi$ to be complex conjugation in the case $\g = \su(n)$ and conjugation by a reflection in the case $\g = \so(2m)$. The one exception is that $\Out(\so(8))$ has order 6, being isomorphic to the symmetric group on 3 elements. An element of order 3 in $\Out(\so(8))$ is called a \emph{triality} map, and we will construct one in \secsymbol\ref{subsec:so8-triality}.

Now suppose $\g^0$ is the Lie algebra of a compact group $\G$, and set $\g = \g^0 \otimes \CC$. Each $Z \in \g^0$ defines a linear operator $\ad_Z : \g \to \g, X \mapsto [Z,X]$, which can be shown\footnote{The formula $\exp(\ad_Z) = \Ad_{\exp(Z)}$  implies that $\exp(\RR\ad_Z)$ is compact, being the image under $\Ad$ of the compact subgroup $\exp(\RR Z) \subseteq \G$. If the Jordan form of $\ad_Z$ had a $d \times d$ Jordan block with $d > 1$, then $\exp(t\ad_Z)$ would have a matrix entry of the form $e^{\lambda t}p(t)$ with $p$ a polynomial of degree $d-1$. This entry would tend to infinity as $t \to \infty$ or $-\infty$, contradicting compactness.} to be diagonalizable. Choose a maximal abelian subalgebra $\h^0 \subseteq \g^0$ and set $\h = \h^0 \otimes \CC$. Since the operators $\{\ad_H : H \in \h\}$ commute, they are simultaneously diagonalizable. This means $\g$ breaks up as a direct sum of eigenspaces $\bigoplus_{\alpha \in \Phi \cup \{0\}} \g_{\alpha}$ satisfying 
\begin{equation*}
\ad_H(X) = [H,X] = \alpha(H)X \qquad \text{for all $H \in \h$ and $X \in \g_{\alpha}$}.
\end{equation*}
Here each eigenvalue $\alpha$ depends linearly on $H$, i.e.\ $\alpha$ is an element of $\h^*$. The zero eigenspace $\g_0$ is just $\h$ itself. The set of nonzero eigenvalues is a finite subset $\Phi \subseteq \h^*$ called the \emph{root system} of $\g$ (with respect to $\h$). Elements of $\Phi$ are \emph{roots} and their eigenspaces $\g_\alpha$ are \emph{root spaces}.

\begin{example} Say $\g^0 = \su(n)$, and take the maximal abelian subalgebra $\h^0$ to be imaginary diagonal matrices. Then $\g = \mathfrak{sl}(n,\CC)$ and $\h$ consists of all (trace zero) diagonal matrices. If $H \in \h$ is diagonal then $[H, E_{ij}] = (H_{ii}-H_{jj}) E_{ij}$, where $E_{ij}$ is the matrix with $1$ in position $(i,j)$ and $0$'s elsewhere.  Thus, if $e_i \in \h^*$ is the linear functional sending $H$ to $H_{ii}$, then $[H,E_{ij}] = (e_i-e_j)(H) E_{ij}$, so the root system is $\{e_i - e_j : i,j \in [n], i \neq j\}$,  with associated root spaces $\g_{e_i-e_j} = \CC E_{ij}$. \end{example}

Next we describe how an automorphism $\phi : \g \to \g$ is (almost) determined by its action on roots. Assume $\phi$ maps $\h$ to $\h$; it follows from \cite[Ch.\ IV, \secsymbol 16.2]{humphreys} that this assumption can always be satisfied without changing the element of $\Out(\g)$ that $\phi$ represents.  For $X \in \g_\alpha$,
\begin{equation} \label{eq:root-action}
[H, \phi(X)] = \phi [\phi^{-1}(H), X] = \alpha(\phi^{-1}(H)) \phi(X).
\end{equation}
The linear functional $H \mapsto \alpha(\phi^{-1}(H))$ is by definition $(\phi^{-1})^*(\alpha)$ where $(\phi^{-1})^*$ is the dual map to $\phi^{-1}$. In this language, \eqref{eq:root-action} shows that $\phi$ maps $\g_\alpha$ to $\g_{(\phi^{-1})^* \alpha}$. 

The final key fact is that each root space $\g_\alpha$ is 1-dimensional \cite[\secsymbol 8.4]{humphreys}.  Thus if we choose a nonzero $X_\alpha \in \g_{\alpha}$ for each $\alpha$ (a \emph{root vector}), we must have $\phi(X_{\alpha}) = c_\alpha X_{(\phi^{-1})^* \alpha}$ for some scalars $c_\alpha$.  The conclusion is that, up to these scalars,  $\phi : \g \to \g$ is completely determined by its action on $\h$.  Conversely,  if one just starts with the action of $\phi$ on $\h$, the next theorem asserts that the scalars $c_\alpha$ can be chosen so that the formula $X_{\alpha} \mapsto c_\alpha X_{(\phi^{-1})^* \alpha}$ actually defines a Lie algebra homomorphism.
\begin{theorem} \cite[Ch.\ IV, 14.2]{humphreys} \label{thm:aut-extension}
Suppose $\g$ is a complex simple Lie algebra,  and $\h$ is a maximal abelian subalgebra with corresponding root system $\Phi \subseteq \h^*$. If $f : \h \to \h$ is any invertible linear map such that $(f^{-1})^*(\Phi) \subseteq \Phi$, then there is a Lie algebra automorphism $\phi : \g \to \g$ agreeing with $f$ on $\h$.
\end{theorem}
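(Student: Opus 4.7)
The plan is to reduce the problem to defining $\phi$ on finitely many generators of $\g$ satisfying the Serre relations. Because each root space $\g_\alpha$ is one-dimensional, an extension $\phi$ is determined by its action on $\h$ together with a choice of nonzero scalars $c_\alpha$ such that $\phi(X_\alpha) = c_\alpha X_{(f^{-1})^*\alpha}$, and the real task is to arrange these scalars consistently with all the brackets in $\g$.

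First I would reduce to the case where $(f^{-1})^*$ preserves a chosen set of simple roots. Fix a base $\Delta = \{\alpha_1, \ldots, \alpha_l\}$ of $\Phi$. Since $(f^{-1})^*$ is an invertible linear map preserving the finite set $\Phi$, the image $(f^{-1})^*(\Delta)$ is another base, and the Weyl group $W$ acts simply transitively on bases, so some $w \in W$ carries $(f^{-1})^*(\Delta)$ back to $\Delta$. A standard fact is that $W$ is realized inside $\Aut(\g)$ by inner automorphisms coming from the normalizer of a maximal torus with Lie algebra $\h$; composing $f$ with the restriction to $\h$ of such an inner automorphism, we may assume $(f^{-1})^*$ permutes $\Delta$ via some bijection $\sigma : [l] \to [l]$. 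This $\sigma$ necessarily preserves the Cartan integers $a_{ij} = \alpha_j(H_{\alpha_i})$, because both are intrinsic to $\Phi$ together with the choice of base; equivalently, $\sigma$ is an automorphism of the Dynkin diagram.

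Next I would invoke Serre's presentation theorem. Pick Chevalley generators $(e_i, f_i, h_i)$ for each simple root $\alpha_i$, so $e_i \in \g_{\alpha_i}$, $f_i \in \g_{-\alpha_i}$, and $[e_i, f_i] = h_i = H_{\alpha_i} \in \h$. These $3l$ elements generate $\g$ subject to relations that depend only on the Cartan matrix. Since $\sigma$ preserves the Cartan matrix, the assignment $e_i \mapsto e_{\sigma(i)}$, $f_i \mapsto f_{\sigma(i)}$, $h_i \mapsto h_{\sigma(i)}$ respects every Serre relation, so it extends uniquely to a Lie algebra homomorphism $\phi : \g \to \g$. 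Repeating the construction with $\sigma^{-1}$ produces an inverse, so $\phi$ is an automorphism.

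Finally I would check that $\phi|_\h = f$. The simple coroots $h_i$ form a basis of $\h$, and the computation
\begin{equation*}
\alpha_j(f^{-1}(h_i)) = (f^{-1})^*(\alpha_j)(h_i) = \alpha_{\sigma(j)}(h_i) = a_{\sigma(j),i} = a_{j,\sigma^{-1}(i)}
\end{equation*}
forces $f^{-1}(h_i) = h_{\sigma^{-1}(i)}$, that is, $f(h_i) = h_{\sigma(i)} = \phi(h_i)$. Thus $\phi$ and $f$ agree on a basis of $\h$, hence on all of $\h$. The main technical hurdle is Serre's presentation; bypassing it would mean constructing the scalars $c_\alpha$ by hand, inductively extending $\phi$ along strings of roots while checking Jacobi identities, which would amount to reproving that presentation.
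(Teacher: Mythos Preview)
The paper does not supply its own proof of this theorem: it is quoted with a citation to Humphreys and then used as a black box in the construction of triality. So there is no in-paper argument to compare against.

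That said, your sketch is a correct and standard route to the result. The reduction to a diagram automorphism via the Weyl group is sound (Weyl reflections are realized by inner automorphisms normalizing $\h$, so composing with them does not affect whether an extension to $\Aut(\g)$ exists), and once $(f^{-1})^*$ permutes a base, Serre's presentation does the rest. Your justification that $\sigma$ preserves the Cartan integers could be made sharper: a linear bijection of $\Phi$ sends $\alpha$-root strings to $g(\alpha)$-root strings, and the Cartan integer $\langle\beta,\alpha^\vee\rangle$ is recoverable from the string length, so preservation is automatic. There is a harmless index transposition in your final displayed computation (with your convention $a_{ij}=\alpha_j(H_{\alpha_i})$ one gets $\alpha_{\sigma(j)}(h_i)=a_{i,\sigma(j)}=a_{\sigma^{-1}(i),j}$, not $a_{\sigma(j),i}$), but the conclusion $f(h_i)=h_{\sigma(i)}$ is unaffected.

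For context, the proof in the cited reference (Humphreys~14.2) actually predates and avoids Serre's relations: it builds the graph of the desired isomorphism as a subalgebra of $\g\times\g$ and argues directly from the structure of Borel subalgebras. Your approach via Serre's theorem is cleaner to state but imports more machinery; either route is acceptable here since the paper only needs the statement.
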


\section{Triality} \label{sec:triality}
In this section we construct triality automorphisms for $\so(8)$ and $\PSO(8)$. One possible construction---indeed, Cartan's original construction---proceeds via the algebra of octonions \cite[\secsymbol 8.3]{quaternions-octonions}. We give a different description relying on quaternion arithmetic, which we extracted from an article of Baez \cite{baez-triality}; see also \cite[Remark 4.5]{knus-tignol-triality}.  The reader who is just interested in a quick explicit description of these maps for purposes of calculation can skip to the end of \secsymbol\ref{subsec:so8-triality} (or \secsymbol\ref{subsec:pauli}) and \secsymbol\ref{subsec:pso8-triality} respectively.

\subsection{Triality for $\so(8)$} \label{subsec:so8-triality}
As discussed in the previous section, constructing a triality automorphism mostly amounts to describing its action on roots, so we start with an explicit description of the root system and root vectors for $\so(2m)$. Let $f_{ji}$ be the matrix with $1$ in position $(j,i)$ and $-1$ in position $(i,j)$ and $0$'s elsewhere. Let $H_i = f_{2i,2i-1}$ for $i = 1, \ldots, m$ and $\h^0 = \vspan_{\RR} \{H_1, \ldots,H_m\}$. Then the ${2m \choose 2}$ matrices $\{f_{ji} : 1 \leq i < j \leq m\}$ form a basis of $\so(2m)$, and $\h^0$ is a maximal abelian subalgebra.  For instance, in the case $m = 4$,
\begin{equation*}
a_1 H_1 + a_2 H_2 + a_3 H_3 + a_4 H_4 = \left[\begin{smallmatrix}
0  & -a_1 & 0 & 0 & 0 & 0 & 0 & 0\\
a_1 & 0 & 0 & 0 & 0 & 0 & 0 & 0\\
0  & 0 & 0 & -a_2 & 0 & 0 & 0 & 0\\
0  & 0 & a_2 & 0 & 0 & 0 & 0 & 0\\
0  & 0 & 0 & 0 & 0 & -a_3 & 0 & 0\\
0  & 0 & 0 & 0 & a_3 & 0 & 0 & 0\\
0  & 0 & 0 & 0 & 0 & 0 & 0 & -a_4\\ 
0  & 0 & 0 & 0 & 0 & 0 & a_4 & 0
\end{smallmatrix}\right]
\end{equation*}
and $\h^0$ is the set of all such matrices for $a_1,a_2,a_3,a_4 \in \RR$.

Write $\so(n,\CC) := \so(n) \otimes \CC$ and $\h := \h^0 \otimes \CC$. Note that $\so(n,\CC)$ is the space of complex skew-symmetric (not skew-Hermitian!) matrices. Let $e_1, \ldots, e_m \in \h^*$ be the dual basis to $-iH_1,\ldots,-iH_m$,  so $e_j(H_k) = i\delta_{jk}$. The root system of $\so(2m,\CC)$ then consists of the $2m(m-1)$ linear functionals $\{\pm e_p \pm e_q : p \neq q\}$.  For root vectors we  take
\begin{itemize}
\item $X_{e_p-e_q} \in \so(2m,\CC)_{e_p-e_q}$ the skew-symmetric matrix with $\tfrac{1}{2}\left[ \begin{smallmatrix} 1 & i \\ -i & 1 \end{smallmatrix}\right]$ in entries $[2p-1,2p] \times [2q-1,2q]$ and $0$'s elsewhere above the diagonal,
\item $X_{e_p+e_q} \in \so(2m,\CC)_{e_p+e_q}$ the skew-symmetric matrix with $\tfrac{1}{2}\left[ \begin{smallmatrix} 1 & -i \\ -i & -1 \end{smallmatrix}\right]$ in entries $[2p-1,2p] \times [2q-1,2q]$ and $0$'s elsewhere above the diagonal,
\item $X_{-\alpha} = \overline{X_{\alpha}} \in \so(2m,\CC)_{\alpha}$ if $\alpha = e_p\pm e_q$,
\end{itemize}
where we assume $p < q$ in all cases.

\begin{example}
With $m = 4$ we have
\begin{equation*}
X_{e_1-e_2} = \frac{1}{2}\left[\begin{smallmatrix}
0 & 0 & 1 & i & 0 & 0 & 0 & 0\\
0 & 0 & -i & 1 & 0 & 0 & 0 & 0\\
-1 & i & 0 & 0 & 0 & 0 & 0 & 0\\
-i & -1 & 0 & 0 & 0 & 0 & 0 & 0\\
0 & 0 & 0 & 0 & 0 & 0 & 0 & 0\\
0 & 0 & 0 & 0 & 0 & 0 & 0 & 0\\
0 & 0 & 0 & 0 & 0 & 0 & 0 & 0\\
0 & 0 & 0 & 0 & 0 & 0 & 0 & 0
\end{smallmatrix}\right]
\end{equation*}
and the statement that $X_{e_1-e_2}$ is a root vector for $e_1-e_2$ means
\begin{equation*}
[a_1 H_1 + a_2 H_2 + a_3 H_3 + a_4 H_4, X_{e_1-e_2}] = i(a_1-a_2)X_{e_1-e_2}.
\end{equation*}
More generally, there are 24 roots for $\so(8,\CC)$ and the corresponding root vectors (unique up to scalar multiples) come from choosing one of the 6 possible $2 \times 2$ subblocks indicated in the figure 
\begin{center}
\begin{tikzpicture}[scale=0.5]
\foreach \x in {0,...,7}
  \foreach \y in {0,...,7}
    \draw (\x,\y) node {$\cdot$};
\draw (1.6,5.6) -- (1.6,7.4) -- (3.4,7.4) -- (3.4,5.6) -- (1.6,5.6);
\draw (3.6,5.6) -- (3.6,7.4) -- (5.4,7.4) -- (5.4,5.6) -- (3.6,5.6);
\draw (5.6,5.6) -- (5.6,7.4) -- (7.4,7.4) -- (7.4,5.6) -- (5.6,5.6);
\draw (3.6,3.6) -- (3.6,5.4) -- (5.4,5.4) -- (5.4,3.6) -- (3.6,3.6);
\draw (5.6,3.6) -- (5.6,5.4) -- (7.4,5.4) -- (7.4,3.6) -- (5.6,3.6);
\draw (5.6,1.6) -- (5.6,3.4) -- (7.4,3.4) -- (7.4,1.6) -- (5.6,1.6);
\end{tikzpicture}
\end{center}
to fill with one of the matrices $\left[ \begin{smallmatrix} 1 & i \\ -i & 1 \end{smallmatrix}\right]$, $\left[ \begin{smallmatrix} 1 & -i \\ -i & -1 \end{smallmatrix}\right]$, or their conjugates, then applying the skew-symmetrizing operator $A \mapsto \tfrac{1}{2}(A-A^T)$.
\end{example}

For convenience, identify $\h^0 \subseteq \so(2m)$ with $\RR^m$ by taking $H_1, \ldots, H_m$ to be the standard basis, and likewise identify $i(\h^0)^* = \vspan_{\RR} \{e_1,\ldots,e_m\}$ with $\RR^m$ by taking $e_1, \ldots, e_m$ as the standard basis. For instance, $(1,0,0,-1)$ represents either $H_1-H_4$ or $e_1-e_4$ as determined by context.

Let
\begin{equation*}
D_m = \{\pm e_i \pm e_j : i,j \in [m], i \neq j\} \subseteq \RR^m
\end{equation*}
be the root system of $\so(2m,\CC)$ for $m > 1$. The \emph{root lattice} $\ZZ D_m$ is the set of integer linear combinations of elements of $D_m$, and its \emph{dual lattice} is
\begin{equation*}
(\ZZ D_m)^* = \{x \in \ZZ^m : x \cdot v \in \ZZ \text{ for all $v \in \ZZ D_m$} \}.
\end{equation*}

\begin{proposition} \label{prop:dual-ZD} For $m > 1$,
\begin{enumerate}[(a)]
\item $\ZZ D_m = \{(v_1, \ldots, v_m) \in \ZZ^m : \sum_i v_i \equiv 0 \pmod{2}\}$.
\item $(\ZZ D_m)^* = \ZZ^m \cup (\ZZ+\tfrac{1}{2})^m$.
\end{enumerate}
\end{proposition}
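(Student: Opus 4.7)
The plan is to prove both parts by direct containment, which will be routine once one notes the key structural observation that $2\ZZ^m \subseteq \ZZ D_m$.

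For (a), let $L = \{v \in \ZZ^m : \sum_i v_i \equiv 0 \pmod 2\}$. The inclusion $\ZZ D_m \subseteq L$ is immediate since every root $\pm e_i \pm e_j$ has coordinate sum in $\{-2,0,2\}$. For the reverse inclusion, I would first observe that $2e_i = (e_i + e_j) + (e_i - e_j) \in \ZZ D_m$ for any $j \neq i$, which yields $2\ZZ^m \subseteq \ZZ D_m$. Given any $v \in L$, reduce modulo $2\ZZ^m$ to land in a vector $u \in \{0,1\}^m$ whose entries sum to an even number; pair off the ones and write $u$ as a sum of roots $e_i + e_j$. This shows $u \in \ZZ D_m$, so $v \in \ZZ D_m$ as well.

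For (b), the easier direction is to check $\ZZ^m \cup (\ZZ + \tfrac12)^m \subseteq (\ZZ D_m)^*$. The case $v \in \ZZ^m$ is trivial. If $v = u + \tfrac12 \mathbf{1}$ with $u \in \ZZ^m$, then for any $w \in \ZZ D_m$ we have $v \cdot w = u \cdot w + \tfrac12 \sum_i w_i$, which is an integer because $\sum_i w_i$ is even by part (a). For the converse, suppose $x \cdot w \in \ZZ$ for all $w \in \ZZ D_m$; since $\ZZ D_m$ is spanned by $D_m$, it suffices to evaluate on roots. Applying $x \cdot (e_i - e_j) \in \ZZ$ forces all coordinates $x_i$ to lie in a common coset $\alpha + \ZZ$ for some $\alpha \in \RR/\ZZ$, and then $x \cdot (e_i + e_j) \in \ZZ$ gives $2\alpha \equiv 0 \pmod 1$, so $\alpha \in \{0, \tfrac12\}$. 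These two cases correspond exactly to $x \in \ZZ^m$ and $x \in (\ZZ + \tfrac12)^m$.

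I do not anticipate any real obstacle; the proof reduces to congruence arithmetic, with the only substantive step being the reduction modulo $2\ZZ^m$ in (a). One small thing to flag is that the converse direction in (b) implicitly takes $(\ZZ D_m)^*$ inside $\RR^m$ (or $\tfrac12 \ZZ^m$) rather than $\ZZ^m$ as literally written, so that the half-integer coset is not excluded a priori.
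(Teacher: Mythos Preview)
Your proof is correct and follows essentially the same approach as the paper's: both hinge on the observation $2e_i = (e_i+e_j)+(e_i-e_j) \in \ZZ D_m$ for part (a) (the paper runs an induction on $|v|$ where you reduce modulo $2\ZZ^m$ and pair off the residual ones, but these are equivalent), and both reduce (b) to the pairwise congruences $x_i \pm x_j \in \ZZ$. Your flag that the dual lattice must be taken inside $\RR^m$ rather than $\ZZ^m$ is well-taken---the paper's stated definition has a typo there.
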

In words: $\ZZ D_m$ is the set of integer vectors with even sum, while $(\ZZ D_m)^*$ consists of the vectors $(x_1, \ldots, x_m)$ with the $x_i$ either being all integers or all half-integers.
\begin{proof} \hfill
\begin{enumerate}[(a)]
\item Every vector in $D_m$ has sum 0 or 2, so any integer linear combination of such vectors has even sum. Conversely, suppose $v \in \ZZ^m$ has even sum; induct on length. Since $v$ has even sum, it either has two distinct nonzero entries $v_i, v_j$, or else one entry satisfying $|v_i| \geq 2$. In the first case, set $\alpha = \sgn(v_i)e_i + \sgn(v_j)e_j \in \ZZ D_m$. In the second case, set $\alpha = \sgn(v_i)2e_i$, which is also an element of $\ZZ D_m$ since $2e_i = (e_i+e_k)+(e_i-e_k)$ where $k \neq i$ is arbitrary. In either case, $v-\alpha$ has even sum and satisfies $|v-\alpha| < |v|$, so $v-\alpha \in \ZZ D_m$ by induction. Hence $v = (v-\alpha) + \alpha \in \ZZ D_m$ as well.

\item If $x \in \ZZ^m \cup (\ZZ+\tfrac{1}{2})^m$, then $\pm x_i \pm x_j$ is an integer for all $i,j$, so $x \in (\ZZ D_m)^*$. Conversely,  suppose $x \in (\ZZ D_m)^*$, so $\pm x_i \pm x_j \in \ZZ$ for all $i,j$.  Note that this includes $i = j$, since $\ZZ D_m$ contains $2e_i$ as noted in (a). Thus $2x_i \in \ZZ$ for each $i$, i.e.\ each $x_i$ is in $\ZZ \cup (\ZZ+\tfrac{1}{2})$.  But if one particular $x_i$ is in either set $\ZZ$ or $\ZZ+\tfrac{1}{2}$, then every other $x_j$ must also be in the same set since $x_i + x_j \in \ZZ$.
\end{enumerate}
\end{proof}

We now turn to the case $m = 4$ in particular. Identify $(x_1,x_2,x_3,x_4) \in \RR^4$ with the quaternion $x_1 + x_2 i + x_3 j + x_4 k$. The only knowledge we require of the algebra of quaternions $\HH = \{x_1 + x_2 i + x_3 j + x_4 k : x_1,x_2,x_3,x_4 \in \RR\}$ is the basic multiplication rules 
\begin{equation*}
i^2 = j^2 = k^2 = -1, \qquad ij = -ji = k,\quad jk = -kj = i,\quad ki = -ik = j,
\end{equation*}
the associativity of the multiplication, and the fact that the norm
\begin{equation*}
|x_1+x_2 i+x_3 j + x_4 k| = \sqrt{x_1^2+x_2^2+x_3^2+x_4^2}
\end{equation*}
satisfies $|q_1 q_2| = |q_1||q_2|$.

The miracle that occurs after identifying $\RR^4$ with $\HH$ is that $(\ZZ D_4)^*$ is not just a lattice but a \emph{subring} of $\HH$, i.e.\ it is closed under multiplication.
\begin{definition}
The \emph{Hurwitz quaternions} are the set of quaternions of the form $x_1 + x_2 i + x_3 j + x_4 k$ where $(x_1,x_2,x_3,x_4) \in \ZZ^4 \cup (\ZZ+\tfrac{1}{2})^4$.
\end{definition}
Proposition~\ref{prop:dual-ZD}(b) identifies $(\ZZ D_4)^*$ with the set of Hurwitz quaternions. Hurwitz proposed them as a notion of ``integral'' quaternions, as they enjoy some number-theoretic benefits over the more obvious set of quaternions with integer coefficients. It is a good exercise to check that that the product of two Hurwitz quaternions must be a third.

Because of the subring property, the linear map $m_q(x) = qx$ defined by each $q \in (\ZZ D_4)^*$ maps $(\ZZ D_4)^*$ to itself. Therefore we also get a dual map $m_q^* : \ZZ D_4 \to \ZZ D_4$. In order to lead to an automorphism of $\so(8)$, this map must preserve the root system $D_4$, which consists of those integer vectors in $\ZZ D_4$ with length $\sqrt{2}$. Hence $m_q^*$ should be length-preserving, which by the property $|qx| = |q||x|$ happens exactly if $|q| = 1$. There are $24$ norm 1 Hurwitz quaternions:
\begin{equation*}
\pm 1, \pm i, \pm j, \pm k, \qquad \text{and} \qquad \frac{\pm 1 \pm i \pm j \pm k}{2}.
\end{equation*}
Set $\omega = \tfrac{-1+i+j+k}{2}$. One checks that $\omega^3 = 1$, so $(m_{\omega}^*)^3 = \id$. We have now constructed an order 3 operator $m_{\omega}^* \otimes \CC$ on $\ZZ D_4 \otimes \CC \simeq \h^*$ preserving the root system $D_4$, which extends to an order 3 automorphism of $\so(8,\CC)$ by Theorem~\ref{thm:aut-extension}. The latter is the desired triality map.

Let us make this construction more explicit. Recall that we are identifying
\begin{itemize}
\item the basis $H_1, H_2,H_3,H_4$ of $\h^0$
\item the basis $e_1,e_2,e_3,e_4$ of $i(\h^0)^*$
\item the basis $1,i,j,k$ of $\HH$.
\end{itemize}
Since
\begin{equation*}
\omega i = \frac{-1-i+j-k}{2}, \qquad \omega j = \frac{-1-i-j+k}{2}, \qquad \omega k = \frac{-1+i-j-k}{2},
\end{equation*}
the linear map $T : \h^0 \to \h^0, H \mapsto \omega H$ has matrix
\begin{equation} \label{eq:triality-matrix-h}
 \frac{1}{2} \begin{bmatrix}
-1 & -1 & -1 & -1\\
1 & -1 & -1 & 1\\
1 & 1 & -1 & -1\\
1 & -1 & 1 & -1
\end{bmatrix}.
\end{equation}
Note that this matrix is orthogonal, so under our identification of $\h^0$ and $i(\h^0)^*$, the same matrix represents the map $(T^{-1})^*$.  Accordingly, we write the action of both $T$ and $(T^{-1})^*$ as just $x \mapsto \omega x$. As per Theorem~\ref{thm:aut-extension},  we can extend $T$ to an automorphism of $\so(8,\CC)$ by sending $X_\alpha \mapsto c_\alpha X_{\omega \alpha}$ for appropriate scalars $c_\alpha$. We will not discuss a conceptual way to think about these scalars, but the required automorphism equations $\tau([X_{\alpha},X_{\beta}]) = [\tau(X_{\alpha}), \tau(X_{\beta})]$ for all $\alpha,\beta \in \Phi$ lead to a system of equations for the $c_{\alpha}$, using the fact that $[X_{\alpha}, X_{\beta}] \in \g_{\alpha+\beta}$. This yields the following explicit definition.

\begin{definition} \label{def:pso8-triality}
The \emph{triality map} $\tau : \so(8,\CC) \to \so(8,\CC)$ is the linear map with
\begin{equation*}
\tau(Y) = \begin{cases}
\omega Y & \text{if $Y \in \h$}\\
X_{\omega \alpha} & \text{if $Y = X_{\alpha}$ with $\pm \alpha \notin \{e_1-e_3, e_1+e_4, e_2+e_3,e_2+e_4\}$}\\
-X_{\omega \alpha} & \text{if $Y = X_{\alpha}$ with $\pm \alpha \in \{e_1-e_3, e_1+e_4, e_2+e_3,e_2+e_4\}$}
\end{cases}
\end{equation*}
\end{definition}
What we really want is an automorphism of the real subalgebra $\so(8) \subseteq \so(8,\CC)$. By construction, $\overline{X_\alpha} = X_{-\alpha}$ for all $\alpha$, so the real subalgebra $\so(8)$ is spanned by $\h^0$ together with all vectors $X_{\alpha}+X_{-\alpha}$ and $i(X_{\alpha}-X_{-\alpha})$. Evidently $\tau$ preserves the span of such vectors, so it restricts to an automorphism of $\so(8)$ as desired. See Example~\ref{ex:pso8-triality} below for an example calculation with $\tau$.

Even more explicitly,  consider the matrix of $\tau$ with respect to the basis $\{f_{ji} : 1 < i < j \leq 8\}$ of $\so(8)$, ordered with the pairs $(j,i)$ in lex order. Reading the entries of this $28 \times 28$ matrix left to right across rows, starting with row 1, gives the string 
\begin{center}
\texttt{\seqsplit{-66-8-4-5-6+1-4+77-4-3+2+55+6-77+2-3+4+88--5+6-5-4+55+4+7+2+67+4+3+2+77-6+1+4-55-4-88--3+6-77-2+5-4-7+2+55+6-5+4-6+66-8-4+3-6-5+4+55+4-7-2+8+6+77+2+3-4+88-+6+4-7+2-55+6+5+4+8-4+88+-3-6-77+2+1+66+8-4-5+6+1+4+77-4+3+2-78-4+3-2+77-6-1+4+8+66-8+4-}}
\end{center}
where $-$ indicates $-1/2$, $+$ indicates $1/2$, and each digit $d$ is short for a string of $d$ $0$'s (so \texttt{66} means a string of 12 zeros).

Several arbitrary choices went into our construction of $\tau$, and one could equally well call any $\alpha \circ \tau \circ \alpha^{-1}$ for $\alpha \in \Aut(\so(8))$ a triality map. Empirically, Definition~\ref{def:pso8-triality} seems to lead to nicer formulas later than some other possible choices, but we have not tried to make this idea precise.

\subsection{Triality for $\PSO(8)$} \label{subsec:pso8-triality}
We would like to lift $\tau$ to an automorphism $\T$ acting on real 3-qubit gates by setting $\T(\exp(X)) = \exp \tau(X)$. Unfortunately this simply does not work if we view $X$ as an element of $\SO(8)$. For instance,  since $\exp \left[ \begin{smallmatrix} 0 & -x \\ x & 0 \end{smallmatrix} \right] = \left[ \begin{smallmatrix} \cos x & -\sin x \\ \sin x & \cos x \end{smallmatrix} \right]$, we would have $\T(\exp(2\pi H_1)) = \T(I) = I$ but 
\begin{equation*}
\exp(\tau(2\pi H_1)) = \exp(\pi(-H_1+H_2+H_3+H_4)) = -I.
\end{equation*}
Instead we use $\PSO(8)$.

\begin{lemma} \label{lem:adjoint-aut} Let $\G$ be a connected Lie group with trivial center and Lie algebra $\g$. Any automorphism $\psi : \g \to \g$ lifts to an automorphism $\Psi : \G \to \G$ satisfying $\Psi \circ \exp = \exp \circ \psi$.
\end{lemma}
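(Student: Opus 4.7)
The plan is to use the adjoint representation to identify $\G$ with a subgroup of $\Aut(\g)$ and then realize $\Psi$ concretely as conjugation by $\psi$. Since $\G$ is connected with trivial center, the adjoint map $\Ad : \G \to \Aut(\g)$, $g \mapsto \Ad_g$, has trivial kernel (its kernel is always the center). Its image is the group of inner automorphisms $\Inn(\g)$, which is the connected Lie subgroup of $\Aut(\g)$ generated by the elements $\exp(\ad_X)$ for $X \in \g$. Because $\Ad$ is an injective Lie group homomorphism between connected Lie groups having Lie algebras of the same dimension (its derivative $\ad : \g \to \operatorname{Der}(\g)$ is injective, hitting $\ad(\g)$), it is a Lie group isomorphism onto $\Inn(\g)$.

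Next I would observe that $\Inn(\g)$ is a normal subgroup of $\Aut(\g)$: given any $\phi \in \Aut(\g)$ and $X \in \g$, the fact that $\phi$ is a Lie algebra automorphism gives $\phi \circ \ad_X \circ \phi^{-1} = \ad_{\phi(X)}$, and exponentiating yields $\phi \circ \exp(\ad_X) \circ \phi^{-1} = \exp(\ad_{\phi(X)}) \in \Inn(\g)$. Hence conjugation by $\psi$ restricts to an automorphism of $\Inn(\g)$. Transporting this across the isomorphism $\Ad : \G \to \Inn(\g)$ defines $\Psi : \G \to \G$ uniquely by the rule
\begin{equation*}
\Ad_{\Psi(g)} = \psi \circ \Ad_g \circ \psi^{-1}.
\end{equation*}
This is automatically a group automorphism, and it is smooth because conjugation in $\Aut(\g)$ is smooth and $\Ad$ is a diffeomorphism onto its image.

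Finally, to check the exponential intertwining relation, I would apply the definition of $\Psi$ to $\exp X$ and combine the two facts cited above:
\begin{equation*}
\Ad_{\Psi(\exp X)} \;=\; \psi \circ \Ad_{\exp X} \circ \psi^{-1} \;=\; \psi \circ \exp(\ad_X) \circ \psi^{-1} \;=\; \exp(\ad_{\psi(X)}) \;=\; \Ad_{\exp \psi(X)}.
\end{equation*}
Since $\Ad$ is injective, this forces $\Psi(\exp X) = \exp(\psi(X))$, as required.

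The only real subtlety is ensuring that $\Ad : \G \to \Inn(\g)$ really is a Lie group isomorphism (not just a bijective group homomorphism), so that the transported map $\Psi$ is smooth; this is where the hypothesis of trivial center is used. Everything else reduces to the identity $\phi \circ \ad_X \circ \phi^{-1} = \ad_{\phi(X)}$ for $\phi \in \Aut(\g)$ and the standard relation $\Ad_{\exp X} = \exp \ad_X$ already quoted in the proof of Corollary~\ref{cor:ka}.
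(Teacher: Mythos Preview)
Your proof is correct and takes a genuinely different route from the paper. The paper goes ``upward'': it passes to the simply connected cover $\tilde{\G}$, invokes the standard fact that a Lie algebra automorphism lifts to $\tilde{\G}$, observes that $\ker(q : \tilde{\G} \to \G)$ equals the center $\Z(\tilde{\G})$ (this is where triviality of $\Z(\G)$ enters), and then checks that the lifted automorphism preserves this kernel so descends to $\G$. You instead go ``downward'': triviality of the center makes $\Ad : \G \to \Inn(\g)$ an isomorphism, and $\Psi$ is simply conjugation by $\psi$ transported across it. Your argument is more self-contained, avoiding the auxiliary object $\tilde{\G}$ and the black-box lifting fact; the paper's version is slightly shorter because that fact absorbs most of the work. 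One small point: your sentence about smoothness (``conjugation in $\Aut(\g)$ is smooth and $\Ad$ is a diffeomorphism onto its image'') is a little delicate since $\Inn(\g)$ need not carry the subspace topology, but your exponential intertwining relation $\Psi(\exp X)=\exp(\psi(X))$ already gives smoothness near the identity and hence everywhere, so the conclusion stands.
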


\begin{proof}
Let $\tilde{\G}$ be the simply connected universal cover of $\G$ and $q : \tilde{\G} \to \G$ the covering map. Since $q$ is a covering map, $\ker(q)$ and therefore $\Aut(\ker(q))$ are discrete groups. As $\tilde{\G}$ is connected, the homomorphism $\tilde{\G} \to \Aut(\ker(q))$ sending $g$ to $c_g$ must be constant. This shows $\ker(q)$ is a subgroup of the center $\Z(\tilde{\G})$. The assumption that $G$ has trivial center then forces $\ker(q) = \Z(\tilde{\G})$.

It is a standard fact from Lie theory that $\psi$ lifts to an automorphism $\tilde{\Psi} : \tilde\G \to \tilde\G$ with $\tilde\Psi \circ \widetilde\exp = \widetilde\exp \circ \psi$, where $\widetilde\exp$ is the exponential map for $\tilde\G$.  We want $q \circ \tilde\Psi : \tilde{\G} \to \G$ to descend to a map $\G \to \G$, which happens if and only if $\ker(q) \subseteq \ker(q \circ \tilde{\Psi})$. By the previous paragraph, this is the same as saying $\Z(\tilde{\G}) \subseteq \tilde{\Psi}^{-1}(\Z(\tilde{\G}))$.  But the right-hand side is just $\Z(\tilde{\G})$ since $\tilde{\Psi}^{-1}$ is an automorphism.
\end{proof}

\begin{definition}
The \emph{triality map} $\T : \PSO(8) \to \PSO(8)$ is the group automorphism defined by $\T(\exp(X)) = \exp(\tau(X))$.
\end{definition}
Since $\PSO(8)$ has trivial center,  Lemma~\ref{lem:adjoint-aut} assures us this is well-defined.

\begin{example} \label{ex:pso8-triality}
Let $A = \exp(\theta f_{51})$, the block matrix with $\left[ \begin{smallmatrix} \cos \theta & -\sin \theta \\ \sin \theta & \cos \theta \end{smallmatrix} \right]$ in entries $\{1,5\} \times \{1,5\}$ and the identity in the complementary block. We compute $\T(A)$.

By definition, $\T(A) = \exp(\theta \tau(f_{51}))$.  Recalling the root vectors $X_{\alpha}$ from \secsymbol\ref{subsec:so8-triality}, 
\begin{equation*}
f_{51} = -\frac{1}{2}(X_{e_1-e_3} + X_{e_1+e_3} + X_{-e_1+e_3} + X_{-e_1-e_3}).
\end{equation*}
Computing $\omega(1-j) = i+j$ and $\omega(1+j) = -1+k$ and applying Definition~\ref{def:pso8-triality},
\begin{equation*}
\tau(f_{51}) = -\frac{1}{2}(-X_{e_2+e_3} + X_{-e_1+e_4} - X_{-e_2-e_3} + X_{e_1-e_4}) = \frac{1}{2} 
\left[\begin{smallmatrix}
0 & 0 & 0 & 0 & 0 & 0 & -1 & 0\\
0 & 0 & 0 & 0 & 0 & 0 & 0 & -1\\
0 & 0 & 0 & 0 & 1 & 0 & 0 & 0\\
0 & 0 & 0 & 0 & 0 & -1 & 0 & 0\\
0 & 0 & -1 & 0 & 0 & 0 & 0 & 0\\
0 & 0 & 0 & 1 & 0 & 0 & 0 & 0\\
1 & 0 & 0 & 0 & 0 & 0 & 0 & 0\\
0 & 1 & 0 & 0 & 0 & 0 & 0 & 0
\end{smallmatrix}\right]
\end{equation*}
Using the fact that $\tau(f_{51})^2 = -1/4$ we compute
\begin{equation*}
\T(A) = \exp(\theta \tau(f_{51})) = \cos(\theta/2) + 2\sin(\theta/2) \tau(f_{51}) = 
\left[\begin{smallmatrix}
c & 0 & 0 & 0 & 0 & 0 & -s & 0\\
0 & c & 0 & 0 & 0 & 0 & 0 & -s\\
0 & 0 & c & 0 & s & 0 & 0 & 0\\
0 & 0 & 0 & c & 0 & -s & 0 & 0\\
0 & 0 & -s & 0 & c & 0 & 0 & 0\\
0 & 0 & 0 & s & 0 & c & 0 & 0\\
s & 0 & 0 & 0 & 0 & 0 & c & 0\\
0 & s & 0 & 0 & 0 & 0 & 0 & c
\end{smallmatrix} \right]
\end{equation*}
where $c = \cos(\theta/2)$ and $s = \sin(\theta/2)$.
\end{example}

Since the matrices $f_{ji}$ generate $\so(8)$, the corresponding 1-parameter families of rotations $\exp(\theta f_{ji})$ generate $\PSO(8)$. These are called \emph{Givens rotations}. One can check that $\tau(f_{ji})^2 = -1/4$ for all $i,j$, so $\exp(\theta f_{ji}) = \cos(\theta_{ji}/2) + 2\sin(\theta_{ji}/2) \tau(f_{ji})$. These facts lead to the following algorithm computing $\T(V)$ for an arbitrary $V \in \PSO(8)$, amounting to repeated applications of Example~\ref{ex:pso8-triality}.
\begin{enumerate}[(a)]
\item Write $V$ as a product of Givens rotations $\exp(\theta_{ji} f_{ji})$ in some order.
\item Compute each $\tau(f_{ji})$, e.g.\ using the fact that each $f_{ji}$ is a linear combination of the 4 root vectors $X_{\pm e_i \pm e_j}$.
\item Replace each $\exp(\theta_{ji} f_{ji})$ in the decomposition from (a) by 
\begin{equation*}
\exp(\theta_{ji} \tau(f_{ji})) = \cos(\theta_{ji}/2) + 2\sin(\theta_{ji}/2) \tau(f_{ji}).
\end{equation*}
The resulting product is $\T(V)$.
\end{enumerate}

%

\subsection{Some correspondences under triality} \label{subsec:correspondences} We now work out how $\T$ acts on some interesting subgroups $\G$ of $\PSO(8)$. Let $\mu : \SU(8) \to \SU(8)$ be conjugation by $\M^\dagger$, so $\mu(U) = \M^\dagger U \M$ and $\mu(\SU(2)^{\otimes 3}) = \SU(2) \otimes \SO(4)$. It will be convenient to record correspondences between elements $V \in \PSU(8)$ and $\T(\mu(V))$. Of course, this only makes sense if $\mu(V)$ happens to be real, so more accurately we view $\mu$ as a map $\PSO(8) \to \M\PSO(8)\M^\dagger$. The reason for the use of $\mu$ is explained in \secsymbol\ref{sec:real-circuits}.

Figure~\ref{fig:triality} shows the results of applying $\T \circ \mu$ to various one-parameter subgroups $\exp(\RR X) = \{\exp(tX) : t \in \RR\}$. Since these amount to just computing the one element $\tau(\mu(X))$, we omit the verifications. 
\begin{figure} 
\begin{equation*}  
\hspace{-1cm}\begin{array}{ccc}
\begin{array}{||c||c||}
\hline
V & \T(\mu(V))\\
\hline
\begin{tikzpicture} \node[scale=0.75] {\begin{quantikz}
 & \ctrl{2} & \gate{R_x(\alpha)} & \ctrl{2} &\\
 &          &                    &          &\\
& \targ{}  & 				   & \targ{}  &
\end{quantikz}}; \end{tikzpicture} & \raisebox{7mm}{$\exp(\alpha f_{42})$}\\
\hline
\begin{tikzpicture} \node[scale=0.75] {\begin{quantikz}
 & \ctrl{2} & 				   & \ctrl{2} &\\
 &          & \gate{R_x(\alpha)} &          &\\
& \targ{}  & 				   & \targ{}  &
\end{quantikz}}; \end{tikzpicture} & \raisebox{7mm}{$\exp(-\alpha f_{83})$}\\
\hline
\begin{tikzpicture} \node[scale=0.75] {\begin{quantikz}
 & \ctrl{2} & 				   & \ctrl{2} &\\
 &          & \gate{R_y(\alpha)} &          &\\
& \targ{}  & 				   & \targ{}  &
\end{quantikz}}; \end{tikzpicture} & \raisebox{7mm}{$\exp(\alpha f_{87})$}\\
\hline
\begin{tikzpicture} \node[scale=0.75] {\begin{quantikz}
 & \ctrl{2} & 				   & \ctrl{2} &\\
 &          & \gate{R_z(\alpha)} &          &\\
& \targ{}  & 				   & \targ{}  &
\end{quantikz}}; \end{tikzpicture} & \raisebox{7mm}{$\exp(-\alpha f_{73})$}\\
\hline
\begin{tikzpicture} \node[scale=0.75] {\begin{quantikz}
 & \ctrl{2} & 				   & \ctrl{2} &\\
 &          & 				   &          &\\
& \targ{}  & \gate{R_x(\alpha)} & \targ{}  &
\end{quantikz}}; \end{tikzpicture} & \raisebox{7mm}{$\exp(-\alpha f_{51})$}\\
\hline
\begin{tikzpicture} \node[scale=0.75] {\begin{quantikz}
 & \ctrl{2} & 				   & \ctrl{2} &\\
 &          & 				   &          &\\
& \targ{}  & \gate{R_y(\alpha)} & \targ{}  &
\end{quantikz}}; \end{tikzpicture} & \raisebox{7mm}{$\exp(-\alpha f_{65})$}\\
\hline
\begin{tikzpicture} \node[scale=0.75] {\begin{quantikz}
 & \ctrl{2} & 				   & \ctrl{2} &\\
 &          & 				   &          &\\
& \targ{}  & \gate{R_z(\alpha)} & \targ{}  &
\end{quantikz}}; \end{tikzpicture} & \raisebox{7mm}{$\exp(-\alpha f_{61})$}\\
\hline
\begin{tikzpicture} \node[scale=0.8] {\begin{quantikz}
 & \ctrl{2} & \ctrl{1} &\\
 &  	        & \targ{}  &\\
& \targ{}   & 	       &
\end{quantikz}}; \end{tikzpicture} & \raisebox{7mm}{
$\Delta_{\{2,3,6,8\}} \cdot (i \sigma_y \otimes C_1^2)$
}\\
\hline
\begin{tikzpicture} \node[scale=0.75] {\begin{quantikz}
& \gate{R_y(\alpha)}  &\\
& 				      &\\
& 				      &
\end{quantikz}}; \end{tikzpicture} & \raisebox{7mm}{$\exp(\alpha f_{64})$}\\
\hline
\end{array} & \hspace{1cm} &
\begin{array}{||c||c||}
\hline
V & \T(\mu(V))\\
\hline
\begin{tikzpicture} \node[scale=0.75] {\begin{quantikz}
& 				      &\\
& \gate{R_x(\alpha)}  &\\
& 				      &
\end{quantikz}}; \end{tikzpicture} & \raisebox{7mm}{$\exp(-\alpha f_{83})$}\\
\hline
\begin{tikzpicture} \node[scale=0.75] {\begin{quantikz}
& 				      &\\
& \gate{R_y(\alpha)}  &\\
& 				      &
\end{quantikz}}; \end{tikzpicture} & \raisebox{7mm}{$\exp(\alpha f_{87})$}\\
\hline
\begin{tikzpicture} \node[scale=0.75] {\begin{quantikz}
& 				      &\\
& \gate{R_z(\alpha)}  &\\
& 				      &
\end{quantikz}}; \end{tikzpicture} & \raisebox{7mm}{$\exp(-\alpha f_{73})$}\\
\hline
\begin{tikzpicture} \node[scale=0.75] {\begin{quantikz}
& 				      &\\
& 				      &\\
& \gate{R_x(\alpha)}  &
\end{quantikz}}; \end{tikzpicture} & \raisebox{7mm}{$\exp(-\alpha f_{51})$}\\
\hline
\begin{tikzpicture} \node[scale=0.75] {\begin{quantikz}
& 				      &\\
& 				      &\\
& \gate{R_y(\alpha)}  &
\end{quantikz}}; \end{tikzpicture} & \raisebox{7mm}{$\exp(\alpha f_{21})$}\\
\hline
\begin{tikzpicture} \node[scale=0.75] {\begin{quantikz}
& 				      &\\
& 				      &\\
& \gate{R_z(\alpha)}  &
\end{quantikz}}; \end{tikzpicture} & \raisebox{7mm}{$\exp(\alpha f_{52})$}\\
\hline
\begin{tikzpicture} \node[scale=0.75] {\begin{quantikz}
 & \ctrl{1} & \gate{R_x(\alpha)} & \ctrl{1} &\\
 &  \targ{} &                    & \targ{}  &\\
& 		  & 				   &          &
\end{quantikz}}; \end{tikzpicture} & \raisebox{7mm}{$\exp(-\alpha f_{76})$}\\
\hline
\begin{tikzpicture} \node[scale=0.75] {\begin{quantikz}
 & \ctrl{1} & 				   & \ctrl{1} &\\
 &  \targ{} & \gate{R_x(\alpha)} & \targ{}  &\\
& 		  & 				   &          &
\end{quantikz}}; \end{tikzpicture} & \raisebox{7mm}{$\exp(-\alpha f_{83})$}\\
\hline
\begin{tikzpicture} \node[scale=0.75] {\begin{quantikz}
 & \ctrl{1} & 				   & \ctrl{1} &\\
 &  \targ{} & \gate{R_y(\alpha)} & \targ{}  &\\
& 		  & 				   &          &
\end{quantikz}}; \end{tikzpicture} & \raisebox{7mm}{$\exp(\alpha f_{43})$}\\
\hline
\begin{tikzpicture} \node[scale=0.75] {\begin{quantikz}
 & \ctrl{1} & 				   & \ctrl{1} &\\
 &  \targ{} & \gate{R_z(\alpha)} & \targ{}  &\\
& 		  & 				   &          &
\end{quantikz}}; \end{tikzpicture} & \raisebox{7mm}{$\exp(\alpha f_{84})$}\\
\hline
\end{array}
\end{array}
\end{equation*}
\caption{Triality correspondences} \label{fig:triality} 
\end{figure}

The entries in Figure~\ref{fig:triality} can be used to work out what happens to larger subgroups. For instance:
\begin{proposition} \label{prop:triality-correspondences}
\hfill
\begin{enumerate}[(a)]
\item $(\T \circ \mu)\P(I_4 \otimes \SU(2)) = \PSO(125|3|4|6|7|8) \subseteq \PSO(8)$.
\item $(\T \circ \mu)\P(I_2 \otimes \SU(2) \otimes I_2) = \PSO(378|1|2|4|5|6)$.
\item $(\T \circ \mu)\P(C_1^2(I_2 \otimes \SU(2) \otimes I_2)C_1^2) = \PSO(348|1|2|5|6|7)$.
\end{enumerate}
\end{proposition}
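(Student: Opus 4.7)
The plan is to verify each of the three equalities by comparing Lie algebras. Each left-hand subgroup is a connected three-dimensional Lie subgroup of $\PSU(8)$, and the composition $\T \circ \mu$, being a Lie group homomorphism on its domain, maps it isomorphically onto a connected three-dimensional Lie subgroup of $\PSO(8)$. Each right-hand $\PSO(\pi)$ is likewise a connected three-dimensional Lie subgroup of $\PSO(8)$. Since connected Lie subgroups agree iff their Lie algebras do, the task reduces to showing that $d(\T \circ \mu)$ sends the source Lie algebra onto the target algebra $\so(\pi)$.

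For (a), the Lie algebra of $I_4 \otimes \SU(2)$ is spanned by the infinitesimal Pauli generators $I_4 \otimes i\sigma_a$ for $a \in \{x,y,z\}$, whose one-parameter subgroups are the rotations $R_a$ on the third qubit. The three entries of Figure~\ref{fig:triality} depicting a single $R_a(\alpha)$ gate on the bottom wire assert that these three generators map under $d(\T \circ \mu)$ to nonzero scalar multiples of $f_{51}, f_{21}, f_{52}$ respectively. Hence the image Lie algebra is $\vspan\{f_{21}, f_{51}, f_{52}\}$, which is precisely the space of real skew-symmetric matrices supported in the index set $\{1,2,5\}$, i.e.\ $\so(\{1,2,5\})$, the Lie algebra of $\PSO(125|3|4|6|7|8)$. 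Part (b) is entirely analogous using rotations on qubit 2: the three figure entries with a single $R_a(\alpha)$ on the middle wire give images spanning $\vspan\{f_{73}, f_{83}, f_{87}\} = \so(\{3,7,8\})$, the Lie algebra of $\PSO(378|1|2|4|5|6)$.

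For (c), conjugation by $C_1^2$ is an automorphism, so the Lie algebra of $C_1^2(I_2 \otimes \SU(2) \otimes I_2)C_1^2$ is $\Ad_{C_1^2}(I_2 \otimes \su(2) \otimes I_2)$, generated by the corresponding three conjugated rotation generators. The three figure entries depicting a rotation on the middle wire sandwiched between two $C_1^2$ gates give images spanning $\vspan\{f_{43}, f_{83}, f_{84}\} = \so(\{3,4,8\})$, the Lie algebra of $\PSO(348|1|2|5|6|7)$. The main obstacle is really absorbed into Figure~\ref{fig:triality} itself: verifying each individual single-generator correspondence requires unpacking Definition~\ref{def:pso8-triality} and computing $\tau$ applied to the relevant element of $\so(8)$. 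Once the figure entries are granted, each part of the proposition is just the observation that the three matrices $f_{ji}$ produced in each case have their nonzero entries supported on the three-element index set naming the corresponding partition block.
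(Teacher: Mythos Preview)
Your proof is correct and follows essentially the same approach as the paper: both read off from Figure~\ref{fig:triality} that the three one-parameter subgroups (equivalently, the three Lie algebra generators) land on the appropriate $f_{ji}$'s, and then conclude by identifying the group they generate. The only cosmetic difference is that the paper phrases the final step as ``rotations about three axes generate $\PSO(3)$'' while you phrase it as ``connected subgroups with the same Lie algebra coincide,'' which are equivalent observations here.
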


\begin{proof}
For part (a), since $I_4 \otimes \SU(2)$ is the group generated by the one-parameter families $I_4 \otimes R_a(\RR)$ for $a \in \{x,y,z\}$, Figure~\ref{fig:triality} shows that $(\T \circ \mu)\P(I_4 \otimes \SU(2))$ is the group generated by the one-parameter families $\exp(\RR f_{51})$, $\exp(\RR f_{21})$, and $\exp(\RR f_{52})$. Identifying $\PSO(125|3|4|6|7|8)$ with $\PSO(3)$, these are just the families of rotations around the $y$-, $z$-, and $x$-axes respectively, which together generate $\PSO(3)$. Parts (b) and (c) are analogous.
\end{proof}

An interesting feature of the triality map is that it realizes all of the exceptional isomorphisms between the classical matrix Lie algebras of types $A_n, B_n, C_n, D_n$, in various ways. In this way, it is similar to the magic basis change $\mu$, which realizes the isomorphism $\SO(4) \simeq (\SU(2) \times \SU(2))/\{\pm 1\}$. For instance:
\begin{itemize}
\item $A_1 \simeq B_1$: Proposition~\ref{prop:triality-correspondences}(a-c) show $\PSU(2) \simeq \PSO(3)$.
\item $D_2 \simeq A_1 \oplus A_1$: One can check $\T^{-1}\PSO(125|378|4|6) = \P(I_2 \otimes \SO(4))$, so $\PSO(4) \simeq \PSO(3) \times \PSO(3)$.
\item $A_3 \simeq D_3$: One can check that $\T^{-1}\PSO(123578|4|6)$ is the group
\begin{equation*}
\left\{ \left[\begin{smallmatrix} \Re(U) & -\Im(U) \\ \Im(U) & \Re(U) \end{smallmatrix} \right] : U \in \SU(4) \right\},
\end{equation*}
so $\SU(4) / \{\pm I\} \simeq \PSO(6)$.
\item $C_2 \simeq B_2$: $\P(\Sp(2)) \simeq \PSO(5)$ follows from Lemma~\ref{lem:triality-sp2} in the next section, where the symplectic groups $\Sp(n)$ are also defined.
\end{itemize}

\subsection{Pauli operators and triality} \label{subsec:pauli} 
In \secsymbol\ref{subsec:so8-triality} we used the basis of skew-symmetric unit matrices $\{f_{ji} : 1 \leq i < j \leq 8\}$ of $\so(8)$. It turns out that, up to scalars, the triality map identifies these matrices with another natural basis of $\so(8)$, the basis of \emph{Pauli operators}.

We will use a word $abc\cdots$ on the alphabet $\{X,Y,Z,I\}$ as an abbreviation for the tensor product $i\sigma_a \otimes \sigma_b \otimes \sigma_c \otimes \cdots$, where $\sigma_I$ means the $2 \times 2$ identity $I_2$. For instance,
\begin{equation*}
YXI = i\sigma_y \otimes \sigma_x \otimes I_2 = \left[ 
\begin{smallmatrix}
0 & 0 & 0 & 0 & 0 & 0 & 1 & 0\\
0 & 0 & 0 & 0 & 0 & 0 & 0 & 1\\
0 & 0 & 0 & 0 & 1 & 0 & 0 & 0\\
0 & 0 & 0 & 0 & 0 & 1 & 0 & 0\\
0 & 0 & -1 & 0 & 0 & 0 & 0 & 0\\
0 & 0 & 0 & -1 & 0 & 0 & 0 & 0\\
-1 & 0 & 0 & 0 & 0 & 0 & 0 & 0\\
0 & -1 & 0 & 0 & 0 & 0 & 0 & 0
\end{smallmatrix} \right] \end{equation*}
Each $\sigma_a$ is Hermitian, so such a tensor product is skew-Hermitian. Indeed, the $4^n$ $n$-qubit Pauli operators $W \in \{I,X,Y,Z\}^n$ form a basis of $\su(2^n)$. Since $\sigma_x, \sigma_z, I_2$ are real while $\sigma_y$ is pure imaginary, a Pauli operator $W$ is real iff it has an odd number of $Y$'s. One can calculate using the binomial theorem that the number of such operators is ${2^n \choose 2} = \dim \so(2^n)$, so they form a basis of $\so(2^n)$.

The following array shows the correspondence between Pauli operators and skew-symmetric unit matrices $f_{ji}$, where a Pauli operator $\pm W$ in position $(j,i)$ means that $\tau(W) = \pm 2f_{ji}$:
\begin{equation} \label{eq:pauli-correspondence}
\text{\texttt{
\begin{tabular}{cccccccc}
· & · & · & · & · & · & · & · \\                          
+IIY & · & · & · & · & · & · & ·\\ 
+YXZ & -YXX & · & · & · & · & · & ·\\
+XYX & +XYZ & -ZZY & · & · & · & · & ·\\
-IYZ & +IYX & +YZI & -XIY & · & · & · & ·\\
-ZYX & -ZYZ & -XZY & -YII & -ZIY & · & · & ·\\
-YZZ & +YZX & -IYI & +ZXY & -YXI & +XXY & · & ·\\
-YIX & -YIZ & +IXY & +ZYI & -YYY & +XYI & -IZY & ·
\end{tabular}}}
\end{equation}
For instance, $\tau(i\sigma_y \otimes \sigma_x \otimes \sigma_x) = -2f_{32}$. Alternatively, one could view a Pauli operator $W$ as an element of $\PSO(8)$, in which case \eqref{eq:pauli-correspondence} says that
\begin{equation*}
\T(W) = \T(\exp(\pi W/2)) = \exp(\pm \pi f_{ji}) = J_{\{i,j\}}
\end{equation*}
if $W$ occurs in entry $(j,i)$.

\begin{corollary}  \label{cor:pauli}
Suppose $c : \so(8) \to \so(8)$ is conjugation by some diagonal orthogonal matrix $\Delta_S$, and set $c' = \tau^{-1} \circ c \circ \tau$. Then the Pauli basis is an eigenbasis for the involution $c'$. 
\end{corollary}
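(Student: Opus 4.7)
The plan is to exploit the explicit correspondence \eqref{eq:pauli-correspondence} between Pauli operators and the basis $\{f_{ji} : 1 \leq i < j \leq 8\}$ of $\so(8)$, together with the observation that any conjugation by a diagonal orthogonal matrix acts diagonally on this second basis.

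First I would establish that $c$ is diagonal with respect to the $f_{ji}$ basis. Each $f_{ji}$ is supported only on positions $(j,i)$ and $(i,j)$, and conjugation by $\Delta_S$ multiplies the $(a,b)$-entry of any matrix by $(\Delta_S)_{aa}(\Delta_S)_{bb}$. Both affected entries of $f_{ji}$ therefore get rescaled by the same factor
\begin{equation*}
\eta_{ij} := (\Delta_S)_{ii}(\Delta_S)_{jj} \in \{\pm 1\},
\end{equation*}
which equals $+1$ when $|S \cap \{i,j\}|$ is even and $-1$ when it is odd. Hence $c(f_{ji}) = \eta_{ij}\, f_{ji}$.

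Next I would read off from \eqref{eq:pauli-correspondence} that $\tau$ is (up to scalar) a bijection between the Pauli basis of $\so(8)$ and the basis $\{f_{ji}\}$: for each $n$-qubit Pauli operator $W \in \{I,X,Y,Z\}^3$ with an odd number of $Y$'s, there is a unique pair $i<j$ and a sign $\epsilon_W \in \{\pm 1\}$ with $\tau(W) = 2\epsilon_W f_{ji}$. (There are exactly $28 = \binom{8}{2}$ such Paulis, matching $\dim \so(8)$, and the table exhibits the bijection.) Equivalently, $\tau^{-1}(f_{ji}) = \tfrac{1}{2}\epsilon_W W$ for this unique $W$.

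Combining the two observations gives
\begin{equation*}
c'(W) = \tau^{-1}(c(\tau(W))) = \tau^{-1}(2\epsilon_W \eta_{ij} f_{ji}) = \eta_{ij}\, W,
\end{equation*}
so each Pauli operator is an eigenvector of $c'$ with eigenvalue $\eta_{ij} \in \{\pm 1\}$, and these $28$ eigenvectors form a basis of $\so(8)$. There is no real obstacle here: the argument reduces entirely to the explicit table \eqref{eq:pauli-correspondence}, and the only mildly delicate point is noting that the signs $\epsilon_W$ appearing in that table drop out of the final computation because $\tau^{-1} \circ c \circ \tau$ is linear and $c$ preserves the line through each $f_{ji}$.
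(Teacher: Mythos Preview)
Your proof is correct and follows essentially the same approach as the paper: both arguments observe that the $f_{ji}$ form an eigenbasis for $c$, and then use the correspondence \eqref{eq:pauli-correspondence} to conclude that the Pauli operators (being, up to sign, $\tau^{-1}(2f_{ji})$) form an eigenbasis for $c' = \tau^{-1}\circ c\circ\tau$. Your version simply spells out more explicitly why the signs $\epsilon_W$ from the table drop out.
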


\begin{proof}
The matrices $2f_{ji}$ form an eigenbasis of $c$, with eigenvalue $-1$ if $|\{i,j\} \cap S| = 1$ and $+1$ otherwise. Therefore the matrices $\tau^{-1}(2f_{ji})$ form an eigenbasis of $\tau^{-1} \circ c \circ \tau$.
\end{proof}

The significance of this is that it makes it easy to find the Cartan decomposition $\k \oplus \p$ with respect to an involution of the form $c' = \tau^{-1} \circ c \circ \tau$: $\k$ is the span of the Pauli operators fixed by $c'$, while $\p$ is spanned by those negated by $c'$.

\section{Real circuit elements} \label{sec:real-circuits}
It is \emph{a priori} unclear when a long product of CNOTs and single-qubit gates actually multiplies to a real matrix. This is one advantage of using the magic matrix $\M$: we know at least that any element of $\M^\dagger (R_y(\RR) \otimes \SU(2) \otimes \SU(2))\M$ is guaranteed to be real. In this section we discuss some other basic circuit elements that will be useful in constructing real 3-qubit gates. These considerations also lead us toward to an interesting subgroup of $\PSO(8)$ that turns out to be key to our Cartan decomposition in \secsymbol\ref{sec:pso8-cartan} later.

\begin{lemma} \label{lem:real-gates-rx}
Let $C \in \{C_1^2, C_1^3\}$ and let $\G$ be the subgroup
\begin{equation*}
C(R_x(\RR) \otimes \SU(2) \otimes \SU(2))C \subseteq \SU(8)
\end{equation*}
Then $\mu(\G) = \M^\dagger \G \M \subseteq \SO(8)$.
\end{lemma}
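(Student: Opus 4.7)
The plan is to reduce the group containment to a Lie-algebra statement and then verify the latter on a short list of generators. Since the parameter space $\RR \times \SU(2) \times \SU(2)$ is connected and $g \mapsto CgC$ is continuous, $\G$ is a connected Lie subgroup of $\SU(8)$. Because $\mu$ is a Lie group homomorphism and $\SO(8)$ is a closed subgroup of $\SU(8)$, it therefore suffices to show $\mu(\g) \subseteq \so(8)$, where $\g$ is the Lie algebra of $\G$. Since $\mu(X)$ is automatically skew-Hermitian for $X \in \g \subseteq \su(8)$, this reduces in turn to checking that $\mu(X)$ is a real matrix for generators of $\g = \Ad_C(\g_0)$, where $\g_0 = \RR(-i\sigma_x \otimes I_4) \oplus (I_2 \otimes \su(2) \otimes I_2) \oplus (I_4 \otimes \su(2))$ is the Lie algebra of $R_x(\RR) \otimes \SU(2) \otimes \SU(2)$. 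The key observation enabling every subsequent computation is that, in block form with respect to the first qubit, $C_1^j = I_4 \oplus F_j$ with $F_2 = \sigma_x \otimes I_2$ and $F_3 = I_2 \otimes \sigma_x$, while $\M = \Q \oplus \Q$; hence $\mu$ acts blockwise by $\Q$-conjugation on matrices that are block-diagonal in the first qubit.

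For generators of the form $I_2 \otimes Y \otimes I_2$ or $I_4 \otimes Y$ with $Y \in \su(2)$, block multiplication shows that $\Ad_{C_1^j}$ sends each such generator to a block-diagonal matrix $M_0 \oplus M_1$ in which each $M_i$ lies in $\su(2) \otimes I_2$ or $I_2 \otimes \su(2)$; this uses only that $C_1^j$ commutes with any factor on a qubit it does not touch, and that conjugation by $\sigma_x$ preserves $\su(2)$. Magicality of $\Q$ (Lemma~\ref{lem:SU-I} together with Proposition~\ref{prop:magic}) then gives $\Q^\dagger M_i \Q \in \so(4)$, so the whole $\mu$-image lies in $\so(4) \oplus \so(4) \subseteq \so(8)$.

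For the remaining generator $-i\sigma_x \otimes I_4$, a short direct block-matrix expansion gives $C_1^j(-i\sigma_x \otimes I_4)C_1^j = -i\sigma_x \otimes F_j$. Applying $\mu$ yields $-i\sigma_x \otimes \Q^\dagger F_j \Q$; magicality again forces $\Q^\dagger(iF_j)\Q \in \so(4)$, so $\Q^\dagger F_j \Q = -iS$ for some real skew-symmetric $S$, and the whole expression collapses to the real skew-symmetric matrix $-\sigma_x \otimes S \in \so(8)$. The main obstacle is really just bookkeeping the qubit each tensor factor acts on and tracking the sign changes from $\sigma_x \sigma_a \sigma_x$ for $a \in \{y,z\}$; no ingredient deeper than the magic property of $\Q$ from \secsymbol\ref{subsec:magic} is needed.
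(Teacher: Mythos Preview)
Your proposal is correct and follows essentially the same approach as the paper: both split into the block-diagonal generators coming from $I_2 \otimes \SU(2) \otimes \SU(2)$ and the off-diagonal generator $-i\sigma_x \otimes I_4$, exploit the block form $C_1^j = I_4 \oplus F_j$ and $\M = \Q \oplus \Q$, and reduce realness to the magic property of $\Q$. The only cosmetic difference is that the paper first computes $\mu(C) = I_4 \oplus iS$ and works partly at the group level for the block-diagonal case, whereas you conjugate by $C$ first and argue uniformly at the Lie algebra level; the underlying computation is the same.
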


\begin{proof}
One checks that if $C$ has control qubit 1 and target qubit 2 or 3, then $\mu(C)$ has the form $\diag(I_4, iS)$ with $S$ a real matrix. First suppose $U \in I_2 \otimes \SU(2) \otimes \SU(2)$. Then $\mu(U) \in I_2 \otimes \SO(4)$ has the form $W \oplus W$ for $W$ real, so
\begin{equation*}
\mu(CUC) = (I_4 \oplus iS)(W \oplus W)(I_4 \oplus iS) = W \oplus -SWS 
\end{equation*}
is real.

Next, suppose $U = R_x(\theta) \otimes I_4$, so
\begin{equation*}
\M^\dagger C U C \M = \exp(-i\M^\dagger C (\sigma_x \otimes I_4) C \M \cdot \theta/2).
\end{equation*}
Since $\M = I_2 \otimes \Q$ commutes with $\sigma_x \otimes I_4 = \left[ \begin{smallmatrix} 0 & I_4 \\ I_4 & 0 \end{smallmatrix} \right]$, the argument to the matrix exponential above is
\begin{align*}
-i\M^\dagger C(\sigma_x \otimes I_4) C \M &= -i\M^\dagger C\M\cdot (\sigma_x \otimes I_4)\cdot \M^\dagger C \M\\
&= -i(I_4 \oplus iS) \begin{bmatrix} 0 & I_4 \\ I_4 & 0 \end{bmatrix} (I_4 \oplus iS) = \begin{bmatrix} 0 & S \\ S & 0 \end{bmatrix},
\end{align*}
again a real matrix.
\end{proof}

\begin{definition} \label{def:sp}
The \emph{symplectic group} $\Sp(n) \subseteq \U(2n)$ is the subgroup of matrices $U$ where each $2 \times 2$ block $U_{[2i-1,2i][2i-1,2i]}$ has the form $\left[ \begin{smallmatrix} w & -\bar{z} \\ z & \bar{w} \end{smallmatrix} \right]$.
\end{definition}
For instance, $\Sp(1) = \SU(2)$. The symplectic group $\Sp(n)$ is a compact simply connected group of dimension $n(2n+1)$. Setting $\Omega = I_n \otimes i\sigma_y = \left[ \begin{smallmatrix} 0 & 1 \\ -1 & 0 \end{smallmatrix} \right]^{\oplus n}$, an alternative phrasing of Definition~\ref{def:sp} is that $\Sp(n) = \{U \in \U(2n) : \Omega \overline{U} \Omega^T = U\}$. This reveals $\Sp(n)$ as a Cartan subgroup of $\U(2n)$. 

As an aside, the algebra of quaternions $\HH$ is isomorphic to the algebra of $2 \times 2$ matrices of the form $\left[ \begin{smallmatrix} w & -\bar{z} \\ z & \bar{w} \end{smallmatrix} \right]$ by identifying this matrix with $w+jz$. Making this replacement for each $2 \times 2$ block reveals $\Sp(n)$ as the group of $n \times n$ unitary quaternionic matrices. We will not use this perspective, but it has the nice feature of unifying the classical compact matrix groups $\O(n)$, $\U(n)$, and $\Sp(n)$ as being the unitary matrices whose entries come from the three possible associative real division algebras $\RR$, $\CC$, and $\HH$.

\begin{lemma} \label{lem:real-gates-sp} The subgroup of $\U(4)$ generated by
\begin{equation*}
\H_1 = R_y(\RR) \otimes \SU(2) \quad \text{and} \quad \H_2 = C_1^2 (R_x(\RR) \otimes \SU(2))C_1^2
\end{equation*}
is
\begin{equation*}
\Sp(2) = \left\{\left[\begin{smallmatrix} a & -\overline{b} & c & -\overline{d} \\
								   b & \overline{a} & d & \overline{c}\\
								   e & -\overline{f} & g & -\overline{h}\\
							       f & \overline{e} & h & \overline{g}
\end{smallmatrix}\right] \in \U(4) \right\}
\end{equation*}
\end{lemma}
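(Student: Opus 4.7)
The plan is to adapt the argument of Proposition~\ref{prop:magic}: verify that the generators lie in $\Sp(2)$, show that their Lie algebras already generate $\sp(2)$, and conclude by a dimension count using Lemma~\ref{lem:gp-eq}. Throughout I would use the Pauli basis of $\mathfrak{u}(4)$ and the characterization $\Sp(2) = \{U \in \U(4): \Omega \bar U \Omega^T = U\}$, with $\Omega = I_2 \otimes i\sigma_y$.

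For the containment $\H_1,\H_2 \subseteq \Sp(2)$: $R_y(\theta)$ is real, and the identity $(i\sigma_y)\bar V(i\sigma_y)^T = V$ for $V \in \SU(2)$ is essentially $\Sp(1) = \SU(2)$, which immediately handles $\H_1$. For $\H_2$, after conjugating by $C_1^2$ the condition becomes $\Omega_1 \bar U \Omega_1^T = U$ for $U \in R_x(\RR) \otimes \SU(2)$, where $\Omega_1 = C_1^2 \Omega C_1^2 = i\sigma_z \otimes \sigma_y$. A direct computation gives $(i\sigma_z) R_x(-\theta) (i\sigma_z)^T = -R_x(\theta)$ and $\sigma_y \bar V \sigma_y^T = -V$, and the two sign flips cancel.

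To identify $\sp(2)$ explicitly, I would check on each Pauli word $W = i\sigma_a \otimes \sigma_b$ that $\Omega \bar W \Omega^T = \pm W$, and find that the $+$ sign occurs exactly for the ten words $\{YI, IX, IY, IZ, XX, XY, XZ, ZX, ZY, ZZ\}$, giving a real basis of $\sp(2)$ of the expected dimension $n(2n+1) = 10$. In this basis, $\h_1 = \vspan_{\RR}\{YI, IX, IY, IZ\}$, while the Clifford conjugation rules for $C_1^2$ (which send $XI, IX, IY, IZ$ to $XX, IX, ZY, ZZ$ respectively) yield $\h_2 = \vspan_{\RR}\{XX, IX, ZY, ZZ\}$. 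Together $\h_1 + \h_2$ contains $7$ of the $10$ basis elements of $\sp(2)$, missing only $\{XY, XZ, ZX\}$. Each missing element appears as a commutator: up to sign, $[YI, XX] = \pm 2\,ZX$, $[YI, ZY] = \pm 2\,XY$, $[YI, ZZ] = \pm 2\,XZ$, via the Pauli anticommutation relations.

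Therefore the Lie subalgebra of $\sp(2)$ generated by $\h_1 \cup \h_2$ equals all of $\sp(2)$, so the subgroup $\G \subseteq \Sp(2)$ generated by $\H_1 \cup \H_2$ is a connected immersed Lie subgroup of full dimension in the connected group $\Sp(2)$. As in Proposition~\ref{prop:magic}, Lemma~\ref{lem:gp-eq} then forces $\G = \Sp(2)$. The only real obstacle is tracking signs in the three calculations above — identifying which Pauli words lie in $\sp(2)$, the $C_1^2$-conjugation action on Pauli words, and the Pauli commutators — but each is a routine (if error-prone) exercise in the Pauli algebra.
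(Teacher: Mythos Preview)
Your proof is correct and takes a genuinely different route from the paper's. The paper proves the lemma via Cartan decomposition: it identifies $\h_2$ as the fixed subalgebra $\k$ for the involution $\theta = c_{\sigma_x \otimes \sigma_x}$ on $\Sp(2)$, checks that $\a = \vspan_{\RR}\{i\sigma_y \otimes I_2,\ I_2 \otimes i\sigma_y\}$ is a maximal abelian subspace of the $(-1)$-eigenspace $\p$, observes that $\a \subseteq \h_1$, and then invokes Corollary~\ref{cor:ka} to conclude that $\k$ and $\a$ (hence $\h_2$ and $\h_1$) generate $\sp(2)$. Your argument instead works entirely in the Pauli basis: you list the ten Pauli words spanning $\sp(2)$, read off $\h_1$ and $\h_2$ in that basis, and produce the three missing words as explicit commutators $[YI,\,\cdot\,]$.

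Your approach is more elementary and self-contained, avoiding any appeal to the Cartan machinery of \secsymbol\ref{subsec:cartan}. The paper's approach, on the other hand, is doing double duty: the specific Cartan decomposition of $\Sp(2)$ exhibited in the proof (with $\k = \h_2$ and $\a \subseteq \h_1$) is exactly what is later cited to establish parts (b) and (c) of Lemma~\ref{lem:main}, so the paper gets two results for the price of one. One small technicality in your write-up: Lemma~\ref{lem:gp-eq} requires a \emph{closed} subgroup, whereas $\G = \langle \H_1, \H_2 \rangle$ is only a priori an abstract (path-connected) subgroup, unlike the compact image $Q^\dagger(\SU(2)^{\otimes 2})Q$ in Proposition~\ref{prop:magic}. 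This is harmless---once the generated Lie subalgebra is all of $\sp(2)$, the subgroup is open in $\Sp(2)$ and hence equals it---but it would be worth saying a word about why the dimension count still closes the argument.
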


\begin{proof}
It suffices to show that the Lie algebras
\begin{align*}
\h_1 &= i \RR\sigma_y \otimes I_2 + I_2 \otimes \su(2) = \left\{ \left[ \begin{smallmatrix}
ir & -\overline{a} & -s & 0\\
a &  -ir & 0 & -s\\
s & 0 & ir & -\overline{a}\\
0 & s & a & -ir \end{smallmatrix} \right] : r,s \in \RR\right\} \\
\h_2 &= C_1^2 (i \RR \sigma_x \otimes I_2 + I_2 \otimes \su(2)) C_1^2 = \left\{ \left[ \begin{smallmatrix}
ir & -\overline{a} & 0 & is\\
a &  -ir & is & 0\\
0 & is & -ir & a\\
is & 0 & -\overline{a} & ir
\end{smallmatrix} \right] : r,s \in \RR\right\}
\end{align*}
generate the Lie algebra
\begin{equation*}
\sp(2) = \left\{ \left[ \begin{smallmatrix}
ir & -\overline{a} & -b & -\overline{c}\\
a &  -ir 		  & c   & -\overline{b}\\
b & -\overline{c}  & is & -\overline{d}\\
c & \overline{b}   & d  & -is
\end{smallmatrix}\right] : r,s \in \RR \right\}.
\end{equation*}

Let $\theta : \Sp(2) \to \Sp(2)$ be conjugation by $\sigma_x \otimes \sigma_x$, so $\theta$ (and $d\theta$) have the effect of rotating a matrix by $180^\circ$. The subspace $\k \subseteq \sp(2)$ fixed by $d\theta$ is exactly $\h_2$. 

The $(-1)$-eigenspace of $d\theta$ is
\begin{equation} \label{eq:sp2-cartan}
\p = \left\{ \left[ \begin{smallmatrix}
ir & -\overline{a} & -\overline{b} & -s\\
a &  -ir 		  & s   & -b\\
b & -s  & ir & -a\\
s & \overline{b}   & \overline{a}  & -ir
\end{smallmatrix}\right] : r,s \in \RR \right\}.
\end{equation}
Let $\a$ be the span of the two commuting elements
\begin{equation*}
i\sigma_y \otimes I_2 = \left[ \begin{smallmatrix}
0 & 0 & 1 & 0\\
0 & 0 & 0 & 1\\
-1 &0 & 0 & 0\\
0 & -1 & 0 & 0
\end{smallmatrix} \right] \qquad \text{and} \qquad I_2 \otimes i\sigma_y = \left[ \begin{smallmatrix}
0 & 1 & 0 & 0\\
-1 & 0 & 0 & 0\\
0 & 0 & 0 & 1\\
0 & 0 & -1 & 0
\end{smallmatrix} \right].
\end{equation*}
Note that $\a \subseteq \h_1$ and that $\a$ is also contained in the $(-1)$-eigenspace $\p$ of $d\theta$. We claim $\a$ is a maximal abelian subalgebra of $\p$. Indeed, a matrix $A$ commutes with $I_2 \otimes i\sigma_y$ if and only if each $2 \times 2$ block has the form $\left[ \begin{smallmatrix} x & -y \\ y & x \end{smallmatrix} \right]$; in particular, if $A$ has the form \eqref{eq:sp2-cartan} then we must have $r = 0$ and $a,b \in \RR$. Similarly, $A$ commutes with $i\sigma_y \otimes I_2$ iff $a,b \in \RR$ and $s = 0$. Thus if both conditions hold then $A =  \left[ \begin{smallmatrix}
0 & -a & -b & 0\\
a &  0 		  & 0   & -b\\
b & 0  & 0 & -a\\
0 & b   & a  & 0
\end{smallmatrix}\right]$, which already lies in $\a$. Cartan decomposition (specifically Corollary~\ref{cor:ka}) therefore shows $\sp(2)$ is generated by $\k = \h_2$ together with $\a \subseteq \h_1$.
\end{proof}

Combining Lemmas~\ref{lem:real-gates-rx} and \ref{lem:real-gates-sp} gives:
\begin{corollary} \label{cor:real-sp}
$\mu\P(\Sp(2) \otimes \SU(2))\subseteq \SO(8)$.
\end{corollary}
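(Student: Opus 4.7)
The plan is to reduce the claim to the two preceding lemmas by writing $\Sp(2) \otimes \SU(2)$ as the subgroup generated by two specific pieces, each of whose image under $\mu$ is transparently real. The key observation is that $\mu$ is conjugation by the fixed unitary $\M^\dagger$, hence a group homomorphism: if $\mu$ carries two generating subsets into $\SO(8)$, then it carries the entire group they generate into $\SO(8)$.

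First I would invoke Lemma~\ref{lem:real-gates-sp} to obtain that $\Sp(2)$ is generated by $\H_1 = R_y(\RR) \otimes \SU(2)$ and $\H_2 = C_1^2(R_x(\RR) \otimes \SU(2))C_1^2$. Tensoring with $\SU(2)$ on the right and using that $(A_1 B_1) \otimes (A_2 B_2) = (A_1 \otimes A_2)(B_1 \otimes B_2)$, the subgroup $\Sp(2) \otimes \SU(2) \subseteq \U(8)$ is generated by
\begin{equation*}
\H_1 \otimes \SU(2) = R_y(\RR) \otimes \SU(2) \otimes \SU(2) \quad \text{and} \quad \H_2 \otimes \SU(2) = C_1^2(R_x(\RR) \otimes \SU(2) \otimes \SU(2))C_1^2,
\end{equation*}
where in the second expression $C_1^2$ denotes the 3-qubit CNOT (equal to the 2-qubit $C_1^2$ tensored with $I_2$ on the third qubit).

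Next I would dispatch each generator. The second one, $\H_2 \otimes \SU(2)$, is handled directly by Lemma~\ref{lem:real-gates-rx} applied with $C = C_1^2$, which gives $\mu(\H_2 \otimes \SU(2)) \subseteq \SO(8)$. For the first, since $\M = I_2 \otimes \Q$ commutes with the first tensor factor, a short calculation using $(A_1 \otimes B_1)(A_2 \otimes B_2) = A_1A_2 \otimes B_1B_2$ yields
\begin{equation*}
\mu(R_y(\theta) \otimes U \otimes V) = R_y(\theta) \otimes \Q^\dagger(U \otimes V)\Q.
\end{equation*}
Here $R_y(\theta) \in \SO(2)$ is real, and $\Q^\dagger(U \otimes V)\Q \in \SO(4)$ by Proposition~\ref{prop:magic}, so the product lies in $\SO(2) \otimes \SO(4) \subseteq \SO(8)$.

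Finally, since $\mu$ is a group automorphism of $\U(8)$, the image of the subgroup generated by these two pieces equals the subgroup they generate inside $\SO(8)$, proving $\mu(\Sp(2) \otimes \SU(2)) \subseteq \SO(8)$. There is no real obstacle: everything is a direct application of the two preceding lemmas together with the commutativity of $\M$ with the first tensor factor; the only thing to be careful about is confirming that the tensor-product generating set I chose actually generates $\Sp(2) \otimes \SU(2)$, which follows immediately from the identity $(A_1 \otimes B_1)(A_2 \otimes B_2) = A_1A_2 \otimes B_1B_2$.
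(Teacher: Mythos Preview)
Your proof is correct and is essentially a spelled-out version of the paper's one-line argument ``Combining Lemmas~\ref{lem:real-gates-rx} and \ref{lem:real-gates-sp}'': you use Lemma~\ref{lem:real-gates-sp} to exhibit generators for $\Sp(2)$, tensor with $\SU(2)$, and then check each generating family lands in $\SO(8)$ under $\mu$. The only minor difference is that for the $\H_1 \otimes \SU(2) = R_y(\RR)\otimes\SU(2)\otimes\SU(2)$ piece you invoke Proposition~\ref{prop:magic} explicitly, whereas the paper has already remarked (at the start of \secsymbol\ref{sec:real-circuits}) that $\M^\dagger(R_y(\RR)\otimes\SU(2)\otimes\SU(2))\M$ is automatically real; these amount to the same observation.
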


The discussion above identifies $\mu\P(\Sp(2) \otimes \SU(2))$ as a subgroup of $\PSO(8)$ determined by relatively simple circuit elements. The second key fact about it is that it is a Cartan subgroup, which we leverage in the next section to derive our new circuit for $\PSO(8)$.
\begin{lemma} \label{lem:triality-sp2} $(\T \circ \mu)\P(\Sp(2) \otimes I_2) = \PSO(34678|1|2|5)$.
\end{lemma}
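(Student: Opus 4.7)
My plan is to compute the image of a generating set for $\Sp(2)\otimes I_2$ under $\T\circ\mu$ using Figure~\ref{fig:triality}, verify that all the images lie in $\PSO(34678|1|2|5)$, and then close the argument with a dimension comparison via Lemma~\ref{lem:gp-eq}.

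First I would use Lemma~\ref{lem:real-gates-sp}: $\Sp(2)$ is generated as a Lie group by $R_y(\RR)\otimes \SU(2)$ and $C_1^2(R_x(\RR)\otimes\SU(2))C_1^2$. Tensoring with $I_2$ on the third qubit gives a set of 3-qubit generators for $\Sp(2)\otimes I_2$ in which no gate touches qubit 3: the one-parameter family $R_y(\alpha)$ on qubit 1, the three single-qubit rotations on qubit 2, the conjugated gate $C_1^2 R_x(\alpha)[1] C_1^2$, and the conjugated single-qubit rotations $C_1^2 R_a(\alpha)[2] C_1^2$ for $a\in\{x,y,z\}$. Each of these one-parameter subgroups appears as a row of Figure~\ref{fig:triality}, and reading off their $(\T\circ\mu)$-images gives $\exp(\RR f_{ji})$ for $(i,j)$ ranging over
\begin{equation*}
\{(4,6),\ (3,8),\ (7,8),\ (3,7),\ (6,7),\ (3,4),\ (4,8)\}.
\end{equation*}

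Next I observe that every one of these pairs is contained in $\{3,4,6,7,8\}$, so each generator of $(\tau\circ d\mu)(\sp(2)\otimes I_2)$ is a skew-symmetric matrix supported on rows and columns indexed by $\{3,4,6,7,8\}$, i.e., lies in the Lie algebra $\so(34678|1|2|5)$. Consequently the connected Lie subgroup of $\PSO(8)$ they generate, which is exactly $(\T\circ\mu)\P(\Sp(2)\otimes I_2)$ (a connected group since $\Sp(2)$ is connected), is contained in $\PSO(34678|1|2|5)$.

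To upgrade this containment to equality, I compare dimensions. We have $\dim\Sp(2)=10$, and since $\Sp(2)\cap\{cI_4\}=\{\pm I_4\}$ is discrete we also get $\dim\P(\Sp(2)\otimes I_2)=10$. On the other side, $\PSO(34678|1|2|5)\cong\PSO(5)$ has dimension $\binom{5}{2}=10$. Both sides are compact (hence closed) and connected, so Lemma~\ref{lem:gp-eq} forces equality. The only real obstacle is the bookkeeping in matching each generator to the correct row of Figure~\ref{fig:triality} and confirming that the resulting seven index pairs all lie in the 5-element block; once that is verified the dimension match closes the argument immediately.
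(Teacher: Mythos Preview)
Your argument is correct, but it follows a different route from the paper's proof. You establish the containment $(\T\circ\mu)\P(\Sp(2)\otimes I_2)\subseteq \PSO(34678|1|2|5)$ by writing down explicit one-parameter generators for $\Sp(2)$ via Lemma~\ref{lem:real-gates-sp}, pushing each through Figure~\ref{fig:triality}, and checking that every resulting $f_{ji}$ has $i,j\in\{3,4,6,7,8\}$. The paper instead argues structurally: starting from the defining relation $\Omega\overline{U}\Omega^T=U$ for $\Sp(2)$ (with $\Omega=I_2\otimes i\sigma_y\otimes I_2$), it deduces that every $V\in\mu\P(\Sp(2)\otimes I_2)$ commutes with $\M^T\Omega\M=I_4\otimes i\sigma_y$ and with $\mu(I_4\otimes i\sigma_z)$, then applies $\T$ to conclude that $\T(V)$ commutes with $\Delta_{\{1,2\}}$ and $\Delta_{\{2,5\}}$; Lemma~\ref{lem:intersection} then identifies the connected centralizer as $\PSO(34678|1|2|5)$. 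Both proofs finish with the identical dimension count and Lemma~\ref{lem:gp-eq}. Your approach is more hands-on and reuses the generator description already proved in Lemma~\ref{lem:real-gates-sp}; the paper's approach is independent of that lemma and needs only two triality values rather than seven, at the cost of invoking the centralizer description of $\Sp(n)$.
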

\begin{proof}
Set $\Omega = I_2 \otimes i\sigma_y \otimes I_2$ and $\H = \mu\P(\Sp(2) \otimes \SU(2))$, so $\Omega U \Omega^T = \overline{U}$ for $U \in \Sp(2) \otimes I_2$. Making the substitution $U = \M V \M^\dagger$ for $V \in \H \subseteq \SO(8)$ and rearranging shows that $V$ commutes with $\M^T \Omega \M = I_4 \otimes i\sigma_y$. Evidently any $V \in \H$ also commutes with any element of $\mu\P(I_4 \otimes \SU(2))$.

We see from the above that $\T(\H)$ commutes with both
\begin{equation*}
\T(\M^T \Omega \M) = \Delta_{\{1,2\}} \qquad \text{and} \qquad \T(\mu(I_4 \otimes i\sigma_z)) = \Delta_{\{2,5\}}.
\end{equation*}
Since $\T(\H)$ is connected, 
\begin{align} \label{eq:secret-PSO5}
\T(\H) &\subseteq \Z(\{\Delta_{\{1,2\}},\Delta_{\{2,5\}}\})_0 = (\PSO(12|345678) \cap \PSO(134678|25))_0 \nonumber\\
&= \PSO(34678|1|2|5),
\end{align}
using Lemma~\ref{lem:intersection}. But both $\T(\H) \simeq \Sp(2)$ and $\PSO(5)$ have dimension 10, so the inclusion \eqref{eq:secret-PSO5} is an equality by Lemma~\ref{lem:gp-eq}.
\end{proof}

\begin{theorem} \label{thm:key-subgroup} Let $\psi : \PSO(8) \to \PSO(8)$ be conjugation by $\Delta_{\{1,2,5\}}$, and let $\chi = \T^{-1} \circ \psi \circ \T$. Then $\K_{\chi} = \mu \P(\Sp(2) \otimes \SU(2)) \subseteq \PSO(8)$.
\end{theorem}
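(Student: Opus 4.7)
The plan is to propagate everything through $\T$. Since $\chi = \T^{-1} \circ \psi \circ \T$ and $\T$ is an automorphism (hence a homeomorphism), the fixed-point subgroups satisfy $\PSO(8)^\chi = \T^{-1}(\PSO(8)^\psi)$, and since $\T^{-1}$ sends connected components to connected components, $\K_\chi = \T^{-1}(\K_\psi)$. So the theorem is equivalent to the claim $\T(\mu(\Sp(2) \otimes \SU(2))) = \K_\psi$, which moves the work entirely to the target side where we can apply previously established triality correspondences.

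Next I identify $\K_\psi$ explicitly. Since $\psi = c_{\Delta_{\{1,2,5\}}}$, an element of $\PSO(8)$ represented by $U \in \SO(8)$ is $\psi$-fixed iff $\Delta_{\{1,2,5\}} U \Delta_{\{1,2,5\}} = \pm U$. The minus case would require either $3$ or $5$ mutually orthonormal vectors inside a space of the wrong dimension, which is impossible, so only the plus case occurs and $\PSO(8)^\psi = \PSO(125|34678)$. This group is already connected: because $3$ and $5$ are both odd, $-I \in \SO(8)$ lies in the non-identity component of $\S(\O(3) \oplus \O(5))$, and that component is identified with the identity component in $\PSO(8)$. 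Hence $\K_\psi = \PSO(125|34678)$.

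It remains to show $\T(\mu(\Sp(2) \otimes \SU(2))) = \PSO(125|34678)$. Decompose $\Sp(2) \otimes \SU(2) = (\Sp(2) \otimes I_2)(I_4 \otimes \SU(2))$; the two factors commute, so applying the homomorphism $\T \circ \mu$ gives a product of subgroups. By Lemma~\ref{lem:triality-sp2} and Proposition~\ref{prop:triality-correspondences}(a), this product is
\begin{equation*}
\PSO(34678|1|2|5) \cdot \PSO(125|3|4|6|7|8),
\end{equation*}
where the first factor is the identity on the $\{1,2,5\}$ block with arbitrary $\SO(5)$ on $\{3,4,6,7,8\}$, and the second is arbitrary $\SO(3)$ on $\{1,2,5\}$ with the identity on $\{3,4,6,7,8\}$. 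Taken together these exhaust $\P(\SO(3) \oplus \SO(5)) = \PSO(125|34678)$. As a sanity check, one can instead invoke Lemma~\ref{lem:gp-eq}: both groups are connected (the source as a continuous image of $\Sp(2) \times \SU(2)$) and have the same dimension, $\dim \Sp(2) + \dim \SU(2) = 10 + 3 = 13 = \dim \SO(3) + \dim \SO(5)$, so inclusion suffices. There is no real obstacle here; the substantive calculations with $\tau$ have already been carried out in Lemma~\ref{lem:triality-sp2} and Figure~\ref{fig:triality}, and the only delicate point is the connectedness of $\PSO(125|34678)$, which rests on the arithmetic coincidence that both block sizes are odd.
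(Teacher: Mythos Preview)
Your proof is correct and follows essentially the same route as the paper: reduce to $\K_\chi = \T^{-1}(\K_\psi)$, identify $\K_\psi = \PSO(125|34678)$, and then split this into the product $\PSO(125|3|4|6|7|8)\cdot\PSO(34678|1|2|5)$ so that Proposition~\ref{prop:triality-correspondences}(a) and Lemma~\ref{lem:triality-sp2} apply. Your additional justifications (impossibility of the minus case, connectedness of $\PSO(125|34678)$ via the oddness of $3$ and $5$) are correct but not strictly needed, since the paper's \secsymbol\ref{subsec:block-subgroups} already records $\K_{c_{\Delta_{\pi_-}}} = \PSO(\pi)$ for any two-block partition.
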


\begin{proof}
Since $V \in \PSO(8)$ is fixed by $\psi$ if and only if $\T^{-1}(V)$ is fixed by $\chi$, we have $\K_{\chi} = \T^{-1}(\K_{\psi})$. Now
\begin{align*}
\K_{\chi} &= \T^{-1}(\K_{\psi}) = \T^{-1}(\PSO(125|34678))\\
&= \T^{-1}(\PSO(125|3|4|6|7|8) \cdot \PSO(34678|1|2|5))\\
&= \mu\P(I_4 \otimes \SU(2)) \cdot \mu\P(\Sp(2) \otimes I_4),
\end{align*}
using Proposition~\ref{prop:triality-correspondences}(a) and Lemma~\ref{lem:triality-sp2} in the last line.
\end{proof}

Given the utility of the slightly mysterious subgroup $\K_\chi = \mu \P(\Sp(2) \otimes \SU(2))$, it would be useful to have other descriptions; here is one possibility.
\begin{proposition} \label{prop:K1-2}
$\mu \P(\Sp(2) \otimes \SU(2))$ is the set of matrices $V = \left[\begin{smallmatrix} V_{11} & V_{12} \\ V_{21} & V_{22} \end{smallmatrix}\right]$ where the $4 \times 4$ blocks $V_{ij}$ have the property that each $V_{ij}V_{kl}^T$ commutes with $I_2 \otimes \sigma_y$ and $\sigma_y \otimes \sigma_z$, i.e.\ has the form
\begin{equation} \label{eq:SU-I-2}
\begin{bmatrix}
a & -b & -c & -d\\
b & a  & d  & -c\\
c & -d & a & b\\
d & c  & -b & a
\end{bmatrix}.
\end{equation}
\end{proposition}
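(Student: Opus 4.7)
The proof splits into forward and reverse inclusions.

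For the forward inclusion, write $V = \mu(A \otimes W)$ with $A \in \Sp(2)$ and $W \in \SU(2)$. Using the tensor factorization convention, the $4 \times 4$ blocks of $V$ are $V_{ij} = \Q^\dagger(A_{ij}^{(2\times 2)} \otimes W)\Q$, where $A_{ij}^{(2\times 2)}$ denotes the $(i,j)$-th $2 \times 2$ block of $A$. Using $V$ real (so $V_{kl}^T = V_{kl}^\dagger$) together with $W W^\dagger = I_2$, a direct calculation gives
\begin{equation*}
V_{ij} V_{kl}^T = \Q^\dagger\bigl(A_{ij}^{(2\times 2)} (A_{kl}^{(2 \times 2)})^\dagger \otimes I_2\bigr)\Q.
\end{equation*}
The identities $\Q(I_2 \otimes \sigma_y)\Q^\dagger = -I_2 \otimes \sigma_y$ and $\Q(\sigma_y \otimes \sigma_z)\Q^\dagger = -I_2 \otimes \sigma_x$ from the proof of Lemma~\ref{lem:SU-I} imply that any matrix of the form $\Q^\dagger(B \otimes I_2)\Q$ commutes with $I_2 \otimes \sigma_y$ and $\sigma_y \otimes \sigma_z$, hence lies in $\H$ and therefore has the form \eqref{eq:SU-I-2}.

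For the reverse inclusion, suppose $V \in \SO(8)$ satisfies the block condition. The key observation is that the only real symmetric matrices of the form~\eqref{eq:SU-I-2} are scalar multiples of $I_4$: imposing $M^T = M$ on \eqref{eq:SU-I-2} forces $b = c = d = 0$. Applied to the symmetric matrix $V_{ij} V_{ij}^T \in \H$, this yields $V_{ij} V_{ij}^T = \lambda_{ij} I_4$, so each nonzero block equals $\sqrt{\lambda_{ij}}\,O_{ij}$ for some $O_{ij} \in \O(4)$. The orthogonality $V V^T = V^T V = I_8$ then forces $\lambda_{11} = \lambda_{22}$, $\lambda_{12} = \lambda_{21}$, and $\lambda_{11} + \lambda_{12} = 1$. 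The cross conditions $V_{ij} V_{kl}^T \in \H$ for $(i,j) \neq (k,l)$ become $O_{ij} O_{kl}^T \in \O(4) \cap \H = \Q^\dagger(\SU(2) \otimes I_2)\Q$ (Lemma~\ref{lem:SU-I}), so we may write $O_{ij} = Y_{ij} O_{11}$ with each $Y_{ij} \in \Q^\dagger(\SU(2) \otimes I_2)\Q$.

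Passing to $U = \M V \M^\dagger$, each block factors as $U_{ij} = \Q V_{ij} \Q^\dagger = \sqrt{\lambda_{ij}}(\Q Y_{ij}\Q^\dagger)(\Q O_{11}\Q^\dagger)$. By construction $\Q Y_{ij}\Q^\dagger = C_{ij} \otimes I_2$ for some $C_{ij} \in \SU(2)$, and (assuming $O_{11} \in \SO(4)$) Proposition~\ref{prop:magic} gives $\Q O_{11}\Q^\dagger = A_0 \otimes W_0$ with $A_0, W_0 \in \SU(2)$. Consequently each $U_{ij}$ equals $(\sqrt{\lambda_{ij}} C_{ij} A_0) \otimes W_0$ with a common right factor $W_0$, so $U = A \otimes W_0$ where $A \in M_4(\CC)$ is the block matrix whose $(i,j)$-th $2 \times 2$ block is $\sqrt{\lambda_{ij}} C_{ij} A_0 \in \SU(2) \cdot \RR$. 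Unitarity of $U$ forces $A \in \U(4)$, and the quaternion form of its $2 \times 2$ sub-blocks places $A \in \Sp(2)$; hence $V \in \mu(\Sp(2) \otimes \SU(2))$.

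The main obstacle I expect is the sign/determinant bookkeeping in the last stage: ensuring $O_{11}$ can be chosen in $\SO(4)$ (so that Proposition~\ref{prop:magic} applies) rather than only in $\O(4)$, and treating the degenerate cases $\lambda_{ij} \in \{0, 1\}$ where the derived parametrization breaks down. The global constraint $\det V = 1$ from $V \in \SO(8)$ should force $\det O_{11} = 1$ consistently, while the degenerate cases can be handled either by continuity or by restarting the argument from a different nonzero reference block.
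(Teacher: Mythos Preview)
Your forward inclusion is correct and essentially matches the paper's (your direct computation $V_{ij}V_{kl}^T = \Q^\dagger(A_{ij}A_{kl}^\dagger \otimes I_2)\Q$ is a little cleaner than the paper's preliminary factoring $S_{ij} = c_{ij}U_{ij}$). For the reverse inclusion, both you and the paper reduce to $V_{ij} = c_{ij}O_{ij}$ with $O_{ij}\in\O(4)$ and $O_{ij}O_{kl}^T \in \Q^\dagger(\SU(2)\otimes I_2)\Q$; the paper reaches this via the cosine--sine Cartan decomposition, while you go directly by noting that $V_{ij}V_{ij}^T$ is symmetric and of the form~\eqref{eq:SU-I-2}, hence scalar.

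The determinant obstacle you flag, however, is not a removable technicality: it shows that the proposition as stated is actually \emph{false}. Take
\[
V=\Delta_{\{1,5\}}=\diag(-1,1,1,1,-1,1,1,1)\in\SO(8).
\]
Its nonzero blocks are $V_{11}=V_{22}=\diag(-1,1,1,1)$ and $V_{12}=V_{21}=0$, so every $V_{ij}V_{kl}^T$ is either $0$ or $I_4$, trivially of the form~\eqref{eq:SU-I-2}. Yet $V\notin\mu(\Sp(2)\otimes\SU(2))$: by Example~\ref{ex:pso8-triality} with $\theta=\pi$, the matrix $\T(\Delta_{\{1,5\}})=\T(\exp(\pi f_{51}))$ has a nonzero $(1,7)$ entry, hence does not lie in $\PSO(125|34678)=\T(\mu(\Sp(2)\otimes\SU(2)))$. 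Equivalently, $\mu^{-1}(V)=I_2\otimes(\Q\epsilon\Q^\dagger)$ with $\epsilon=\diag(-1,1,1,1)\notin\SO(4)$, and this cannot be written as $S\otimes W$ with $S\in\Sp(2)$, $W\in\SU(2)$. The paper's proof hides the same gap behind the phrase ``reversing the argument in the first paragraph''; a correct characterization needs an extra hypothesis, for instance that some nonzero block $V_{ij}$ lies in $\RR\cdot\SO(4)$ rather than merely $\RR\cdot\O(4)$. Your instinct that ``$\det V=1$ should force $\det O_{11}=1$'' is precisely what fails.
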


\begin{proof} 
Suppose that $V = \mu(S \otimes W)$ where $S \in \Sp(2)$ and $W \in \SU(2)$. By definition, $S = \left[ \begin{smallmatrix} S_{11} & S_{12} \\ S_{21} & S_{22} \end{smallmatrix} \right]$ where each $S_{ij}$ has the form $\left[ \begin{smallmatrix} a & -\overline{b} \\ b & \overline{a} \end{smallmatrix}\right]$. Dividing by $|a|^2+|b|^2$ shows that each $S_{ij}$ can be written $c_{ij}U_{ij}$ where $c_{ij} \in \RR$ and $U_{ij} \in \SU(2)$. Since $\M = I_2 \otimes \Q = \diag(\Q,\Q)$,
\begin{align} \label{eq:K1-2}
V &= \M^{-1}(S \otimes W)\M = \begin{bmatrix} \Q & 0 \\ 0 & \Q \end{bmatrix}^\dagger \begin{bmatrix} S_{11} \otimes W & S_{12} \otimes W \\ S_{21} \otimes W & S_{22} \otimes W \end{bmatrix} \begin{bmatrix} \Q & 0 \\ 0 & \Q \end{bmatrix} \nonumber\\
&= \begin{bmatrix} c_{11}\Q^\dagger(U_{11} \otimes W) \Q & c_{12}\Q^\dagger(U_{12} \otimes W) \Q \\ c_{21}\Q^\dagger(U_{21} \otimes W) \Q & c_{22}\Q^\dagger(U_{22} \otimes W) \Q \end{bmatrix}
\end{align}
This calculation shows that each $V_{ij}V_{kl}^T$ lies in the subgroup $\Q^\dagger (\SU(2) \otimes I_2) \Q$ up to a scalar multiple. By Lemma~\ref{lem:SU-I}, this is equivalent to the stated condition.

Conversely, suppose the stated conditions on $V$ hold. Cartan decomposition with respect to $\S(\O(4) \times \O(4)) \subseteq \SO(8)$ lets us write
\begin{equation*}
V = \begin{bmatrix} O_1 & 0 \\ 0 & O_2 \end{bmatrix} \begin{bmatrix} C & -S \\ S & C \end{bmatrix} \begin{bmatrix} O_3 & 0 \\ 0 & O_4 \end{bmatrix} = \begin{bmatrix} O_1 C O_3 & -O_1 S O_4 \\ O_2 C O_3 & O_2 S O_4 \end{bmatrix}
\end{equation*}
where $C$ and $S$ are nonnegative real diagonal with $C^2+S^2=I_4$ and $O_1, \ldots, O_4 \in \O(4)$. The matrix $V_{11}V_{11}^T = O_1 C^2 O_1^T$ is symmetric and also has the form \eqref{eq:SU-I-2} by assumption, which together force it to be a scalar matrix. This shows $C^2$ and hence $C$ are scalar matrices. The same argument with other blocks shows that $S$ is a scalar matrix.

Thus, each block $V_{ij}$ actually has the form $c_{ij}O_{ij}$ for $c_{ij} \in \RR$ and $O_{ij} \in \O(4)$. Moreover, each $V_{ij}V_{kl}^T = c_{ij}c_{kl} O_{ij}O_{kl}^T$ lies in the subgroup $\Q(\SU(2) \otimes I_2)\Q^\dagger$ up to a scalar multiple by assumption and by Lemma~\ref{lem:SU-I}. Reversing the argument in the first paragraph of the proof, we conclude that $V \in \mu \P(\Sp(2) \otimes \SU(2))$.
\end{proof}

We can also describe the Lie algebra of $\K_\chi = \mu \P(\Sp(2) \otimes \SU(2))$.
\begin{proposition}
The Lie subalgebra of $\so(8)$ corresponding to $\K_\chi$ is
\begin{enumerate}[(a)]
\item the span of the Pauli operators
\begin{equation*}
\{XXY,XYI,XZY,ZXY,ZYI,ZZY,YII,IXY,IYX,IYZ,IYI,IZY,IIY\};
\end{equation*}
\item the subspace of matrices $A \in \so(8)$ with $\tr(AB) = 0$ for the 15 Pauli operators $B$ not listed in part (a);
\item the space of real skew-symmetric matrices $H = \left[ \begin{smallmatrix} H_{11} & -H_{21}^T \\ H_{21} & H_{22} \end{smallmatrix} \right]$ where the $4 \times 4$ blocks $H_{ij}$ have the property that each $H_{ij} - H_{kl}$ commutes with $I_2 \otimes \sigma_y$ and $\sigma_y \otimes \sigma_z$, i.e.\ has the form
\begin{equation*}
\begin{bmatrix}
0 & -r & -s & -t\\ 
r & 0  & t & -s\\
s & -t & 0 & r\\
t & s  & -r & 0
\end{bmatrix}.
\end{equation*}
\end{enumerate}
\end{proposition}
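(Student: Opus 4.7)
The plan is to prove the three parts in sequence, with part (a) doing the main work and parts (b), (c) following as direct consequences.

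For part (a), I would apply Corollary~\ref{cor:pauli} to the involution $d\chi = \tau^{-1} \circ d\psi \circ \tau$, where $\psi$ is conjugation by $\Delta_{\{1,2,5\}}$. Since $\K_\chi$ is by definition the identity component of the $\chi$-fixed subgroup of $\PSO(8)$, its Lie algebra is the $+1$-eigenspace of $d\chi$ in $\so(8)$. Corollary~\ref{cor:pauli} tells us the Pauli operators form an eigenbasis for $d\chi$ with eigenvalues $\pm 1$. A Pauli operator $W$ with $\tau(W) = \pm 2 f_{ji}$ (read from the correspondence table \eqref{eq:pauli-correspondence}) is fixed by $d\chi$ precisely when $d\psi$ fixes $f_{ji}$, which happens iff $|\{i,j\} \cap \{1,2,5\}|$ is even. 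There are $\binom{3}{2} + \binom{5}{2} = 13$ such pairs (both indices in $\{1,2,5\}$ or both in $\{3,4,6,7,8\}$), and reading off the entries of \eqref{eq:pauli-correspondence} at these positions should yield exactly the 13 stated Pauli operators. A dimension check $\dim(\sp(2) \oplus \su(2)) = 10 + 3 = 13$ closes the argument.

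Part (b) then follows from the orthogonality of Pauli operators with respect to the trace pairing $(A,B) \mapsto \tr(AB)$: since $\tr(\sigma_a \sigma_b) = 2\delta_{ab}$ and trace is multiplicative on tensor products, distinct Pauli operators form a $\tr$-orthogonal basis, so the span of the 13 operators from (a) is precisely the $\tr$-orthogonal complement of the remaining 15 within $\so(8)$.

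For part (c), the idea is to differentiate Proposition~\ref{prop:K1-2}'s description of $\K_\chi$ at the identity. Writing $V(t) = I + tH + O(t^2)$ and expanding $V_{ij}(t) V_{kl}(t)^T$ to first order yields linear constraints on $H$ taking values in the $4$-dimensional subspace $\mathcal{H}$ of matrices of form \eqref{eq:SU-I-2}; specifically, $H_{11} - H_{22} \in \mathcal{H}$ and $H_{21} \in \mathcal{H}$ (the analogous condition on $H_{12}$ is automatic from skew-symmetry $H_{12} = -H_{21}^T$ together with the fact that $\mathcal{H}$ is transpose-closed). Skew-symmetry of $H$ forces $H_{11} - H_{22}$ to lie in the $3$-dimensional skew part of $\mathcal{H}$, which is exactly the displayed form. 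A dimension check $28 - 12 - 3 = 13$ confirms this cuts $\so(8)$ down to the correct subspace, which by (a) must be $\k_\chi$.

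The main obstacle is the bookkeeping in part (a): one has to verify that the 13 entries of table \eqref{eq:pauli-correspondence} at positions $(j,i)$ with $\{i,j\}$ a subset of either $\{1,2,5\}$ or $\{3,4,6,7,8\}$ match the listed operators, which is mechanical but error-prone. The remaining parts are short once (a) is established.
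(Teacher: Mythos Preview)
Your proposal is correct and follows essentially the same approach as the paper: part (a) via Corollary~\ref{cor:pauli} and reading off the positions $\{1,2,5\}^2 \cup \{3,4,6,7,8\}^2$ in table~\eqref{eq:pauli-correspondence}, part (b) via trace-orthogonality of the Pauli basis, and part (c) by differentiating Proposition~\ref{prop:K1-2}. Your added dimension counts and explicit linearization for (c) go a bit beyond the paper's terse ``follows by taking derivatives,'' but the underlying argument is the same.
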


\begin{proof} \hfill
\begin{enumerate}[(a)]
\item By Theorem~\ref{thm:key-subgroup}, $\K_\chi = \mu\P(\Sp(2) \otimes \SU(2))$ is the fixed-point subgroup of an involution of the type considered in Corollary~\ref{cor:pauli}. By that corollary, its Lie algebra is spanned by the Pauli operators fixed by $d\chi$. These can be easily read off of \eqref{eq:pauli-correspondence}: they are the elements in positions $\{1,2,5\}^2 \cup \{3,4,6,7,8\}^2$, since those are the pairs $(j,i)$ such that $f_{ji}$ is fixed by $\psi$.

\item The Pauli operators form an orthogonal basis of $\so(n)$ under the inner product $(A,B) = \tr(AB)$: this follows from the fact that $\tr(AB) = 0$ if $A,B \in \{I_2,\sigma_x,\sigma_y,\sigma_z\}$ are distinct, plus the general property $\tr(P \otimes Q) = \tr(P)\tr(Q)$. Therefore (a) implies (b).

\item Follows by taking derivatives in Proposition~\ref{prop:K1-2}.
\end{enumerate}
\end{proof}

One might wonder whether this construction can be simplified. One issue is that, unlike more familiar involutions like complex conjugation or conjugation by a fixed matrix, $\chi$ is not a linear function. That is, the entries of $\chi(V)$ do not depend linearly on the entries of $V$. For instance,
\begin{equation*}
\chi\left( \left[ \begin{smallmatrix} \cos t & -\sin t \\ \sin t & \cos t \end{smallmatrix} \right] \oplus I_6 \right) = \left[\begin{smallmatrix} c & s \\ -s & c \end{smallmatrix} \right] \oplus \left[\begin{smallmatrix} c & -s \\ s & c \end{smallmatrix} \right] \oplus \left[\begin{smallmatrix} c & -s \\ s & c \end{smallmatrix} \right] \oplus \left[\begin{smallmatrix} c & -s \\ s & c \end{smallmatrix} \right]
\end{equation*}
where $c = \cos(t/2)$ and $s = \sin(t/2)$. 

Perhaps $\K_{\chi}$ can be described as the fixed-point set of an entirely different involution $\phi$? If this were the case, then since $d\phi$ is an automorphism, it would be an isometry with respect to the Killing form on $\so(8)$. Its $(-1)$-eigenspace is therefore the orthogonal complement $\k_{\chi}^\perp$ of the $1$-eigenspace $\k_\chi$. By the same logic applied to $d\chi$, $\k_{\chi}^\perp$ is simply $\p$. Therefore $d\phi$ and $d\chi$ have the same eigenvalues and eigenspaces, i.e.\ they are the same linear map. Since $d\phi$ determines $\phi$ by the formula $\phi(\exp(X)) = \exp(d\phi(X))$, this means $\phi = \chi$.

\section{Cartan decompositions for $\PSO(8)$} \label{sec:pso8-cartan}
As with many other circuit decompositions in quantum computing \cite{CCD,drury-love,khaneja-glaser,block-ZXZ,quantum-shannon,wei-di}, ours relies on iterated Cartan decomposition. Here is a more specific outline, which the methods of \cite{wei-di} also follow.
\begin{enumerate}[1.]
\item Choose commuting involutive automorphisms $\theta_1, \theta_2$ of $\PSO(8)$, and sets
\begin{itemize}
\item $\A \subseteq \PSO(8)$ with $\K_{\theta_1}\A\K_{\theta_1} = \PSO(8)$
\item $\B \subseteq \PSO(8)$ with $\K_{\theta_1,\theta_2}\B\K_{\theta_1,\theta_2} = \K_{\theta_1}$.
\end{itemize}
\item Given $V \in \PSO(8)$, use two Cartan decompositions as in \secsymbol\ref{subsec:commuting-cartan} to write
\begin{equation} \label{eq:double-decomp}
V = K_1 B_1 K_2 A K_3 B_2 K_4
\end{equation}
where $A \in \A$ and $B_1,B_2 \in \B$ and $K_1,\ldots,K_4 \in \K_{\theta_1,\theta_2}$. 

\item Find explicit decompositions for $\M$ and for elements of $\mu^{-1}(\A)$ and $\mu^{-1}(\B)$ and $\mu^{-1}(\K_{\theta_1,\theta_2})$ in terms of CNOTs and single-qubit gates.
\end{enumerate}

This works because we can rewrite \eqref{eq:double-decomp} as
\begin{align*}
V &= \M^\dagger \mu^{-1}(V) \M = \M^\dagger \mu^{-1}(K_1 B_1 K_2 A K_3 B_2 K_4) \M\\
&= \M^\dagger \mu^{-1}(K_1)\mu^{-1}(B_1)\cdots \mu^{-1}(K_4)\M.
\end{align*}
As described in more detail in \secsymbol \ref{sec:real-circuits}, the point of applying $\mu^{-1}$ and then its inverse here is to make it easier to see which circuit elements will lead to a real matrix.

There is a sense in which $\PSO(8)$ is a rich setting for Cartan decomposition techniques: among all connected simple compact Lie groups, only $\PSO(8)$ and its simply connected cover $\operatorname{Spin}(8)$ have more than one distinct outer automorphism of order 2.

\begin{example} \label{ex:wei-di}
Take $\theta_1$ and $\theta_2$ to be conjugation by $\Delta_{[4]}$ and by $i\sigma_y \otimes I_4 = \left[ \begin{smallmatrix} 0 & I_4 \\ -I_4 & 0 \end{smallmatrix} \right]$, respectively. Then
\begin{equation*}
\K_{\theta_1} = \P(\SO(4) \oplus \SO(4)) \quad \text{and} \quad \K_{\theta_1,\theta_2} = I_2 \otimes \PSO(4).
\end{equation*}
This decomposition has the convenient property that $\mu^{-1}(\K_{\theta_1,\theta_2}) = I_2 \otimes \SU(2) \otimes \SU(2)$ are already single-qubit operations. Wei and Di use this method, taking $\A$ and $\B$ are taken to be appropriate tori for Cartan decomposition. They find explicit circuits expressing elements of $\mu^{-1}(\A)$ (involving 6 CNOTs) and $\mu^{-1}(\B)$ (involving 4). The magic matrix $\M$ can be written as a product of one CNOT and some single-qubit gates (cf. \eqref{eq:magic-circuit}), so in total their method incurs a cost of $1+4+6+4+1 = 16$ CNOT gates.
\end{example}  

For the rest of the paper, let $\psi_1, \psi_2 : \PSO(8) \to \PSO(8)$ denote conjugation by $\Delta_{\{1,2,5\}}$ and $\Delta_{\{6,7\}}$, and $\chi_i = \T^{-1} \circ \psi_i \circ \T$. Note that $\psi_1, \chi_1$ were called just $\psi,\chi$ in \secsymbol\ref{sec:real-circuits}. A simpler description of $\chi_2$ is possible:
\begin{equation*}
\chi_2(V) = \T^{-1}(\Delta_{\{6,7\}} \T(V)\Delta_{\{6,7\}}) = \T^{-1}(\Delta_{\{6,7\}})V \T^{-1}(\Delta_{\{6,7\}}),
\end{equation*}
so $\chi_2$ is conjugation by $\T^{-1}(\Delta_{\{6,7\}}) = i\sigma_x \otimes \sigma_x \otimes \sigma_y$. No such simplification is possible for $\chi_1$, since $\Delta_{\{1,2,5\}} \notin \PSO(8)$; as observed in \secsymbol\ref{sec:real-circuits}, $\chi_1$ is not even linear.

\begin{proposition} \label{prop:K12}
We have
\begin{equation*}
\K_{\chi_1} = \T^{-1}(\K_{\psi_1}) = \T^{-1}\PSO(125|34678) = \mu\P(\Sp(2) \otimes \SU(2))
\end{equation*}
and 
\begin{equation*}
\K_{\chi_1,\chi_2} = \T^{-1}(\K_{\psi_1,\psi_2}) = \T^{-1}\PSO(125|348|67)
\end{equation*}
\end{proposition}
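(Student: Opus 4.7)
My plan is to dispose of the whole proposition by combining Theorem~\ref{thm:key-subgroup} with routine manipulations of fixed-point subgroups under the automorphism $\T$. The one observation doing essentially all the work is that $\chi_i = \T^{-1} \circ \psi_i \circ \T$ implies $V \in \PSO(8)^{\chi_i}$ if and only if $\T(V) \in \PSO(8)^{\psi_i}$, so $\PSO(8)^{\chi_i} = \T^{-1}(\PSO(8)^{\psi_i})$. Because $\T$ is a continuous group automorphism, it carries identity components to identity components, so $\K_{\chi_i} = \T^{-1}(\K_{\psi_i})$. This settles the first equality in each chain.

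For the rest of the first chain, I would note that $\psi_1$ is conjugation by the diagonal matrix $\Delta_{\{1,2,5\}}$, so by the obvious analogue of Example~\ref{ex:SO-block} (the only change is reindexing the positions of the $-1$'s) the identity component of the fixed-point subgroup is the block-diagonal subgroup $\PSO(125|34678)$. The remaining identification $\T^{-1}\PSO(125|34678) = \mu(\Sp(2) \otimes \SU(2))$ is then exactly Theorem~\ref{thm:key-subgroup}.

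For the second chain I first verify that $\chi_1$ and $\chi_2$ commute, which is well-defined to even state $\K_{\chi_1,\chi_2}$ per \secsymbol\ref{subsec:commuting-cartan}. This reduces to the commutativity of $\psi_1$ and $\psi_2$, which is immediate because $\Delta_{\{1,2,5\}}$ and $\Delta_{\{6,7\}}$ are commuting diagonal matrices. Then, using that $\T^{-1}$ is a homeomorphism (so commutes with intersections and with passage to the identity component),
\begin{equation*}
\K_{\chi_1,\chi_2} = (\K_{\chi_1} \cap \K_{\chi_2})_0 = \T^{-1}\bigl((\K_{\psi_1} \cap \K_{\psi_2})_0\bigr) = \T^{-1}(\K_{\psi_1,\psi_2}).
\end{equation*}

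It remains to identify $\K_{\psi_1,\psi_2}$ as $\PSO(125|348|67)$. The same block-diagonal reasoning gives $\K_{\psi_2} = \PSO(123458|67)$, and Lemma~\ref{lem:intersection} produces $\K_{\psi_1,\psi_2} = \PSO(\pi_1 \cap \pi_2)$ for $\pi_1 = 125|34678$ and $\pi_2 = 123458|67$. A short set-intersection calculation yields $\pi_1 \cap \pi_2 = 125|348|67$, which finishes the argument. No step presents a real obstacle: the whole proof is bookkeeping about how intersections, connected components, and the automorphism $\T$ interact, with Theorem~\ref{thm:key-subgroup} supplying the only nontrivial input.
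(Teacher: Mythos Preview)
Your proof is correct and follows essentially the same route as the paper's: cite Theorem~\ref{thm:key-subgroup} for the first chain and apply Lemma~\ref{lem:intersection} to the meet $125|34678 \cap 123458|67 = 125|348|67$ for the second. You expand on a few points the paper leaves implicit (the identity $\K_{\chi_i} = \T^{-1}(\K_{\psi_i})$ and the commutativity of $\chi_1,\chi_2$), but the substance is the same.
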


\begin{proof}
The first part was Theorem~\ref{thm:key-subgroup}. The second part follows from Lemma~\ref{lem:intersection}:
\begin{align*}
\K_{\psi_1,\psi_2} &= (\PSO(125|34678) \cap \PSO(123458|67))_0\\
&= \PSO(125|348|67).
\end{align*}
\end{proof}

Because of Proposition~\ref{prop:K12}, our method amounts to applying triality, using two Cartan decompositions coming from subgroups conjugate to
\begin{equation*}
\PSO(8) \supseteq \P(\SO(3) \times \SO(5)) \supseteq \P(\SO(3) \times \SO(2) \times \SO(3)),
\end{equation*}
and applying inverse triality. The next lemma provides explicit quantum circuits for each factor that will appear in these decompositions.
\begin{lemma} \label{lem:main}
\hfill
\begin{enumerate}[(a)]
\item Let $\mu^{-1}(\mathcal{A})$ be the set of 3-qubit gates of the form
\begin{equation*}
\begin{quantikz}
& \ctrl{2} & \gate{R_x(\alpha_2)} &   \ctrl{1} & \gate{R_y(\pi/2)}  & \ctrl{2} & \gate{R_x(\pi/2)}   & \ctrl{2} &\\
&              &                    &    \targ{} & \gate{R_x(\pi/2)} &              &                                &              &\\
& \targ{}   &   \gate{R_y(\alpha_3)}&               & \gate{R_x(\pi/2)} & \targ{}  & \gate{R_z(\alpha_1)}              & \targ{}  &
\end{quantikz}
\end{equation*}
for $\alpha_1,\alpha_2,\alpha_3 \in \RR$. Then $\K_{\chi_1} \A \K_{\chi_1} = \PSO(8)$.

\item Let $\mu^{-1}(\mathcal{B})$ be the set of $3$-qubit gates of the form $R_y(\beta_1) \otimes R_y(\beta_2) \otimes I_2$ with $\beta_1, \beta_2 \in \RR$. Then $\K_{\chi_1,\chi_2}\B\K_{\chi_1,\chi_2} = \K_{\chi_1}$.

\item $\mu^{-1}(\K_{\chi_1,\chi_2})$ is the set of 3-qubit gates of the form 
\begin{equation*}
\begin{quantikz}
 & \ctrl{1} & \gate{R_x(\gamma)} & \ctrl{1} &\\
& \targ{} & \gate{R^1}     & \targ{}  &\\
&         & \gate{R^2}     &          &
\end{quantikz}
\end{equation*}
with $\gamma \in \RR$ and $R^1,R^2 \in \SU(2)$.
\end{enumerate}
\end{lemma}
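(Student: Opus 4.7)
My plan is to handle parts (c), (b), (a) in increasing order of difficulty, using the triality correspondences in Figure~\ref{fig:triality} throughout.

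For part (c), each of the seven generators $f_{ji}$ spanning the Lie algebra $\so(125|348|67)$ of $\T(\K_{\chi_1,\chi_2}) = \PSO(125|348|67)$ has an explicit preimage under $\T\circ\mu$ that can be read off Figure~\ref{fig:triality}: the three rotations $R_x,R_y,R_z$ on qubit 3 realize $f_{51}, f_{21}, f_{52}$; the three $C_1^2$-sandwiched rotations on qubit 2 realize $f_{83}, f_{43}, f_{84}$; and $C_1^2 R_x(\gamma)_1 C_1^2$ realizes $f_{76}$. The displayed circuit is a general product of these preimages, and such products form a connected subgroup of $\U(8)$ of dimension $3+3+1=7$, contained in $\mu^{-1}(\K_{\chi_1,\chi_2})$ by construction. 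Since $\K_{\chi_1,\chi_2}$ also has dimension 7, Lemma~\ref{lem:gp-eq} forces equality.

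For part (b), the claim translates via $\T\circ\mu$ to a Cartan decomposition of $\K_{\psi_1} = \PSO(125|34678)$ with respect to $\psi_2 = c_{\Delta_{\{6,7\}}}$. Since $\psi_2$ acts trivially on the $\PSO(125)$ factor, the decomposition reduces to one inside $\PSO(34678)$. The $(-1)$-eigenspace of $d\psi_2$ in $\so(34678)$ is spanned by the six $f_{ji}$ with exactly one index in $\{6,7\}$, and the commuting pair $\{f_{64}, f_{87}\}$ (whose supports $\{4,6\}$ and $\{7,8\}$ are disjoint) spans a maximal abelian subspace, matching the rank $\min(3,2) = 2$ of the symmetric space $\PSO(5)/\P(\SO(3)\times\SO(2))$. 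From Figure~\ref{fig:triality}, these two exponentials are precisely $\T(\mu(R_y(\beta_1)\otimes I_4))$ and $\T(\mu(I_4\otimes R_y(\beta_2)\otimes I_2))$, so $\T(\mu(\B))$ equals $\exp\a_{\psi_2}$. Theorem~\ref{thm:cartan} then yields $\K_{\psi_1} = \K_{\psi_1,\psi_2}\cdot\T(\mu(\B))\cdot\K_{\psi_1,\psi_2}$, and pulling back by $\T^{-1}\circ\mu^{-1}$ gives (b).

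For part (a), the main work is computational. I decompose $V$ into the four subwords $C_1^3 X_3 C_1^3$, $X_2 = R_y(\pi/2)_1\cdot R_x(\pi/2)_2\cdot R_x(\pi/2)_3$, $C_1^2 X_1 C_1^2$, and $C_1^2 C_1^3$, each lying in $\M\PSO(8)\M^\dagger$, and apply $\T\circ\mu$ factor-by-factor via Figure~\ref{fig:triality}. This expresses $U(\alpha) := \T(\mu(V))$ explicitly as a product of seven Givens rotations (three carrying the parameters $\alpha_i$ and four at the fixed angles $\pm\pi/2$) times the signed permutation $J := \T(\mu(C_1^2 C_1^3)) = \Delta_{\{2,3,6,8\}}(i\sigma_y\otimes C_1^2)$. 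Using the commutation identity
\begin{equation*}
\exp(\theta f_{ca})\, f_{ba}\, \exp(-\theta f_{ca}) = \cos\theta\, f_{ba} + \sin\theta\, f_{bc} \qquad (a,b,c \text{ distinct}),
\end{equation*}
I push the fixed $\pi/2$-rotations across the $\alpha$-rotations, transporting each $\alpha_i$-generator into a specific element of $\p_{\psi_1}$. The four fixed angles are chosen precisely so that the three resulting generators commute and span a maximal abelian subalgebra $\a_{\psi_1}\subseteq\p_{\psi_1}$, whereupon $U(\alpha)$ takes the form $K_L\cdot A(\alpha)\cdot K_R\cdot J$ with $K_L,K_R\in\K_{\psi_1}$ and $A(\alpha)\in\exp\a_{\psi_1}$. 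The fixed $J$ is then absorbed into $K_R$ at the cost of a controlled sign in the canonical parameters, and since $\alpha\mapsto A(\alpha)$ surjects onto $\exp\a_{\psi_1}$, Proposition~\ref{prop:canonical-params} and Theorem~\ref{thm:cartan} give $\K_{\psi_1}\cdot\T(\mu(\A))\cdot\K_{\psi_1} = \PSO(8)$; applying $\T^{-1}$ finishes (a). The main obstacle is the sign-tracking in the $J$-absorption step, since $J$ itself mixes the block partition $\{1,2,5\}\sqcup\{3,4,6,7,8\}$ and is not an element of $\K_{\psi_1}$.
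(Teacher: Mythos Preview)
Your arguments for (b) and (c) are correct and essentially match the paper's: (c) via the explicit Figure~\ref{fig:triality} correspondences plus dimension counting, and (b) by transporting to the $\psi$-side and recognising the standard $\P(\SO(2)\times\SO(3))\subseteq\PSO(5)$ Cartan decomposition.

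For (a), however, your structural plan has a real gap. You propose to massage $\T(\mu(V))$ into the form $K_L\,A(\alpha)\,K_R\,J$ with $K_L,K_R\in\K_{\psi_1}$ by commuting the fixed $\pi/2$-rotations past the $\alpha$-rotations. But the fixed rotations live on the generators $f_{42},f_{64},f_{51},f_{83}$, three of which lie in $\p_{\psi_1}$ rather than $\k_{\psi_1}$; their $\pi/2$-exponentials are signed permutations that \emph{swap} an index of $\{1,2,5\}$ with one of $\{3,4,6,7,8\}$, hence are not in $\K_{\psi_1}$. So pushing them to the outside does not produce $K_L,K_R\in\K_{\psi_1}$. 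Likewise the $\alpha_2$-generator $f_{76}$ and the $\alpha_3$-generator $f_{21}$ start in $\k_{\psi_1}$, and the only available fixed rotation touching $f_{76}$ is $f_{64}$, which conjugates it to $f_{74}\in\k_{\psi_1}$, not to $\p_{\psi_1}$; so the claim that all three $\alpha$-generators land in a common maximal abelian $\a_{\psi_1}\subseteq\p_{\psi_1}$ does not follow from the mechanism you describe. Finally, as you yourself note, $J$ mixes the partition and is not in $\K_{\psi_1}$, so it cannot simply be ``absorbed into $K_R$''; saying this costs only a ``controlled sign in the canonical parameters'' is an assertion, not an argument.

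The paper's route for (a) sidesteps all of this: once you have written $\T(\A)$ as a product of seven Givens rotations times $J$ (exactly your first step), just multiply the $8\times 8$ matrices out. The result has $\{1,2,5\}\times\{1,2,5\}$ block equal to $\diag(-\cos\alpha_1,\cos\alpha_2,-\cos\alpha_3)$, and a direct appeal to Proposition~\ref{prop:canonical-params} finishes the argument with no need for any factorisation inside $\K_{\psi_1}$. I would recommend abandoning the commutation scheme and doing this explicit computation instead.
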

Actually, we have already proven (b) and (c): together they amount to the Cartan decomposition of $\Sp(2)$ used in the proof of Lemma~\ref{lem:real-gates-sp}. We give a second proof below using what we have learned about triality. It is also worth noting that the set $\A$ in (a) is \emph{not} a torus or even a group, so (a) is not expressing the ``standard'' Cartan decomposition as in Theorem~\ref{thm:cartan}. Rather, $\A$ was chosen to cover all possible canonical parameters with respect to $\K_{\chi_1}$ while still having a short circuit description.

\begin{proof}[Proof of Lemma~\ref{lem:main}] \hfill 
\begin{enumerate}[(a)]

\item An equivalent statement is that $\K_{\psi_1} \T(\A) \K_{\psi_1} = \PSO(8)$. Since $\K_{\psi_1} = \PSO(125|34678)$, it suffices by Proposition~\ref{prop:canonical-params} to show that for any $\sigma_1, \sigma_2, \sigma_3 \in [0,1]$ and any sign $s \in \{\pm 1\}$, there exists $A \in \T(\A)$ such that $A_{\{1,2,5\},\{1,2,5\}}$ has singular values $\sigma_1,\sigma_2,\sigma_3$ and determinant of sign $s$.

Write the circuit in (a) as the product of gates appearing in Figure~\ref{fig:triality}, i.e. 
\begin{equation*}
\begin{tikzpicture} \node[scale=0.7] {\begin{quantikz}
 & \ctrl{2} & \gate{R_x(\pi/2)} & \ctrl{2} &\\
 &          & 				  &          &\\
 & \targ{}  & 				  & \targ{}  &
\end{quantikz}}; \end{tikzpicture} \, \raisebox{8mm}{\text{and}} \, 
\begin{tikzpicture} \node[scale=0.72] {\begin{quantikz}
 & \ctrl{2} & 				  & \ctrl{2} &\\
 &          & 				  &          &\\
 & \targ{} & \gate{R_z(\alpha_1)} & \targ{}  &
\end{quantikz}}; \end{tikzpicture} \,\raisebox{8mm}{\text{and}} \, 
\raisebox{-2mm}{\begin{tikzpicture} \node[scale=0.7] {\begin{quantikz}
 & \gate{R_y(\pi/2)} &\\
 & \gate{R_x(\pi/2)} &\\
 & \gate{R_x(\pi/2)} &
\end{quantikz}}; \end{tikzpicture}} \,\raisebox{8mm}{\text{and etc.}} \, 
\end{equation*}
Consulting Figure~\ref{fig:triality} to apply $\T \circ \mu$, we see $\T(\A)$ is the set of matrices
\begin{align*}
\begin{bmatrix}
-\cos \alpha_1 & 0 & 0 & 0 & 0 & 0 & \sin \alpha_1 & 0\\
0 & \cos \alpha_2 & 0 & -\sin \alpha_2 & 0 & 0 & 0 & 0\\
0 & 0 & 1 & 0 & 0 & 0 & 0 & 0\\
0 & 0 & 0 & 0 & \sin \alpha_3 & -\cos \alpha_3 & 0 & 0\\
0 & 0 & 0 & 0 & -\cos \alpha_3 & -\sin \alpha_3 & 0 & 0\\
\sin \alpha_1 & 0 & 0 & 0 & 0 & 0 & \cos \alpha_1 & 0\\
0 & -\sin \alpha_2 & 0 & -\cos \alpha_2 & 0 & 0 & 0 & 0\\
0 & 0 & 0 & 0 & 0 & 0 & 0 & 1
\end{bmatrix}
\end{align*}
The submatrix $\diag(-\cos \alpha_1, \cos \alpha_2, -\cos \alpha_3)$ in rows and columns $\{1,2,5\}$ obviously has the desired property.

\item  Using Figure~\ref{fig:triality} we compute $\T(\B) = \PSO(46|78|1|2|3|5)$.  We must prove that $\K_{\psi_1}$ and $\K_{\psi_1,\psi_2}\T(\B)\K_{\psi_1,\psi_2}$ are equal, which by Proposition~\ref{prop:K12} is equivalent to the equation
\begin{equation*}
\PSO(125|348|67) \cdot \PSO(1|2|3|46|5|78) \cdot \PSO(125|348|67) \overset{?}{=} \PSO(125|34678).
\end{equation*}
Rows and columns $1,2,5$ are uninteresting here, so let us remove them and relabel 6,7,3,4,8 as 1,2,3,4,5:
\begin{equation*}
\PSO(12|345) \cdot \PSO(14|25|3) \cdot \PSO(12|345) \overset{?}{=} \PSO(5).
\end{equation*}
or 
\begin{equation*}
\left[ \begin{smallmatrix}
\ast & \ast & 0 & 0 & 0\\
\ast & \ast & 0 & 0 & 0\\
0 & 0 & \ast & \ast & \ast\\
0 & 0 & \ast & \ast & \ast\\
0 & 0 & \ast & \ast & \ast
\end{smallmatrix}\right] \left[ \begin{smallmatrix}
\ast & 0 & 0 & \ast & 0\\
0 & \ast & 0 & 0 & \ast\\
0 & 0 & 1 & 0 & 0\\
\ast & 0 & 0 & \ast & 0\\
0 & \ast & 0 & 0 & \ast
\end{smallmatrix}\right] \left[ \begin{smallmatrix}
\ast & \ast & 0 & 0 & 0\\
\ast & \ast & 0 & 0 & 0\\
0 & 0 & \ast & \ast & \ast\\
0 & 0 & \ast & \ast & \ast\\
0 & 0 & \ast & \ast & \ast
\end{smallmatrix}\right] \overset{?}{=} \PSO(5).
\end{equation*}
This equation certainly holds: it is the standard Cartan decomposition for the pair $\P(\SO(2) \oplus \SO(3)) \subseteq \PSO(5)$, cf. Example~\ref{ex:SO-block}.

\item Applying $\T \circ \mu$ to the set of gates in question gives $\PSO(125|348|67)$ by Figure~\ref{fig:triality}, which is $\T(\K_{\chi_1,\chi_2})$ by Proposition~\ref{prop:K12}.
\end{enumerate}
\end{proof}

By assembling the circuits derived in Lemma~\ref{lem:main}, we arrive at our new decomposition for real 3-qubit gates.
\begin{theorem} \label{thm:main}
Every $V \in \SO(8)$ can be written as a product of at most 14 CNOT gates interleaved with single-qubit gates (elements of $\SU(2)^{\otimes 3}$). More explicitly, $V$ can be written as
\begin{equation} \label{eq:main-decomp}
\begin{tikzpicture} \node[scale=0.85] {
\begin{quantikz}
 &            & \gate[2]{S^1} & \ctrl{2} & \gate{R_x(\alpha_1)} & \ctrl{1} & \gate{R_y(\pi/2)} &  \ctrl{2}& \gate{R_x(\pi/2)}    & \ctrl{2} & \gate[2]{S^2} & 				   &\\
 & \gate[2]{\tilde\Q} &               &           & 					            & \targ{}  & \gate{R_x(\alpha_4)} &          &                       &          & 			  & \gate[2]{\tilde\Q^\dagger} &\\
  &            & \gate{U^1}     &  \targ{}   & \gate{R_y(\alpha_2)} &          & \gate{R_x(\pi/2)} &  \targ{} & \gate{R_z(\alpha_3)} & \targ{}  & \gate{U^2}    & 				   &
\end{quantikz}};
\end{tikzpicture}
\end{equation}
where $S_1^T$ and $S_2$ have the form
\begin{equation} \label{eq:main-decomp-sp}
\begin{tikzpicture} \node[scale=0.85] {\begin{quantikz}
& \ctrl{1} & \gate{R_x(\beta_4)} & \ctrl{1} & \gate{R_y(\beta_2)} & \ctrl{1} & \gate{R_x(\beta_1)} & \ctrl{1} &\\
& \targ{} & \gate{R_y(\beta_5)R_z(\beta_6)}          & \targ{}  & \gate{R_y(\beta_3)} & \targ{}  & \gate{R^1}          & \targ{}  &
\end{quantikz}}; \end{tikzpicture}
\end{equation}
and 
\begin{equation*}
\tilde\Q = \raisebox{-8mm}{\begin{tikzpicture} \node[scale=0.85] {\begin{quantikz}
  & \gate{R_x(\pi/2)} & \ctrl{1}& \\
  & \gate{R_z(-\pi/2)} & \targ{} &  
\end{quantikz}}; \end{tikzpicture}}
\end{equation*}
with $R^1,U^1,U^2 \in \SU(2)$ and $\beta_1,\ldots,\beta_6, \alpha_1, \ldots, \alpha_4 \in \RR$.
\end{theorem}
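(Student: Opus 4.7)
The plan is to execute the recipe outlined at the start of \secsymbol\ref{sec:pso8-cartan} with $\theta_1 = \chi_1$ and $\theta_2 = \chi_2$; Lemma~\ref{lem:main} has been crafted to supply exactly the three ingredients needed. Given $V \in \SO(8)$, view it in $\PSO(8)$ and apply Lemma~\ref{lem:main}(a) to write $V = K_1 A K_2$ with $K_1, K_2 \in \K_{\chi_1}$ and $A \in \A$. The circuit realization of $\mu^{-1}(A)$ given in Lemma~\ref{lem:main}(a) is precisely the middle block of \eqref{eq:main-decomp} (up to relabeling the $\alpha_i$), contributing $4$ of the $14$ CNOTs.

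The remaining task is to produce an explicit circuit for each $K_i \in \K_{\chi_1}$. By Theorem~\ref{thm:key-subgroup}, $\mu^{-1}(K_i) = S^i \otimes U^i$ for some $S^i \in \Sp(2)$ and $U^i \in \SU(2)$, so it suffices to realize $S^i$ as a 2-qubit circuit. I do this via a second Cartan decomposition: by Lemma~\ref{lem:main}(b, c), $K_i = K_i' B_i K_i''$ with $K_i', K_i'' \in \K_{\chi_1, \chi_2}$ and $B_i \in \B$. Applying $\mu^{-1}$ and using the tensor-product structures $\mu^{-1}(\K_{\chi_1, \chi_2}) = [C_1^2 (R_x(\RR) \otimes \SU(2)) C_1^2] \otimes \SU(2)$ and $\mu^{-1}(\B) = R_y(\RR) \otimes R_y(\RR) \otimes I_2$ read off from Lemma~\ref{lem:main}(b, c), the action on qubits $\{1,2\}$ and qubit $3$ decouples: the 2-qubit factor gives $S^i$ in the 4-CNOT form \eqref{eq:main-decomp-sp} (this is just the Cartan decomposition of $\Sp(2)$ used in the proof of Lemma~\ref{lem:real-gates-sp}), while the single-qubit factors on qubit $3$ multiply to give $U^i \in \SU(2)$.

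To assemble the circuit for $V$ itself, use $V = \M^\dagger \mu^{-1}(V) \M = \M^\dagger (S^1 \otimes U^1) \mu^{-1}(A) (S^2 \otimes U^2) \M$ and realize $\M = I_2 \otimes \Q$ via \eqref{eq:magic-circuit} as a single CNOT $C_2^3$ flanked by single-qubit layers on qubits $2, 3$. The inner single-qubit layer of $\M$ (and its adjoint on the other side) can be absorbed into the adjacent factor $S^i \otimes U^i$ without disturbing its $\Sp(2) \otimes \SU(2)$ structure, because $I_2 \otimes \SU(2) \subseteq \Sp(2)$ (Lemma~\ref{lem:real-gates-sp}); what survives of $\M$ at each end is exactly $I_2 \otimes \tilde\Q$ as in \eqref{eq:main-decomp}. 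The total CNOT count is then $1 + 4 + 4 + 4 + 1 = 14$.

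The main technical obstacle I anticipate is bookkeeping: matching circuit time-order to matrix multiplication order, correctly threading the single-qubit gates of \eqref{eq:magic-circuit} through the decomposition, and handling the mild $\SO(8)$ vs.\ $\PSO(8)$ sign ambiguity (a global $\pm I_8$ can always be absorbed into a single-qubit rotation since $R_x(2\pi) = -I_2$). The claimed single-qubit count of $35$ is then a direct tally of the assembled circuit.
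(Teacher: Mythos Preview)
Your overall strategy is exactly the paper's: iterated Cartan decomposition with respect to $\chi_1,\chi_2$ via Lemma~\ref{lem:main}, followed by absorbing the inner single-qubit layer of $\M = I_2 \otimes \Q$ into the adjacent $\Sp(2) \otimes \SU(2)$ factor so that only $\tilde\Q$ survives at each end. This correctly yields the $1+4+4+4+1 = 14$ CNOT bound, and your remark about fixing the $\pm I_8$ ambiguity is also how the paper handles it.

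There is, however, one genuine step you are missing if you want the \emph{exact} circuit \eqref{eq:main-decomp}--\eqref{eq:main-decomp-sp} rather than merely a 14-CNOT circuit. The middle block of \eqref{eq:main-decomp} is \emph{not} Lemma~\ref{lem:main}(a) ``up to relabeling the $\alpha_i$'': Lemma~\ref{lem:main}(a) has a fixed $R_x(\pi/2)$ on qubit~2, whereas \eqref{eq:main-decomp} has a free $R_x(\alpha_4)$. Correspondingly, the Cartan decomposition of $\Sp(2)$ you invoke produces each $S^i$ with an \emph{arbitrary} $R^2 \in \SU(2)$ on qubit~2 in the slot where \eqref{eq:main-decomp-sp} has only $R_y(\beta_5)R_z(\beta_6)$. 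The paper resolves both discrepancies simultaneously using the commutation identity $(I_2 \otimes R_x(\theta))\,C_1^2 = C_1^2\,(I_2 \otimes R_x(\theta))$: write $R^2 = R_x(\beta_7)R_z(\beta_6)R_y(\beta_5)$ in each $S^i$, commute the $R_x(\beta_7)$ factor through the intervening CNOTs (qubit~2 is always a target or uninvolved) into the centre, and merge it with the fixed $R_x(\pi/2)$ to create the free parameter $\alpha_4 = \beta_7 + \pi/2 + \beta_7'$. Without this step your assembled circuit has $37$ single-qubit rotations, not $35$, and does not match the stated form; so your ``direct tally'' of $35$ would in fact fail.
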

In this decomposition there are 14 CNOT gates total and 35 single-qubit rotations. Also, there are $6+4+6 = 16$ free angles above and 4 free elements of $\SU(2)$, for a total of $16 + 4\cdot 3 = 28$ free parameters. Since this is also the dimension of $\SO(8)$, the circuit is optimal in the sense that none of the parameters can be eliminated. Of course, this does not rule out the existence of shorter circuits.

\begin{proof} View $V$ as an element of $\PSO(8)$. This is necessary for use of the triality map, but irrelevant in the end: if we produce a circuit of the desired type for $V \in \PSO(8)$ which evaluates to $-V \in \SO(8)$, it can be easily modified to evaluate to $+V \in \SO(8)$, for instance by replacing $U^1$ with $-U^1$. 

 By (iterated) Cartan decomposition, every element of $\PSO(8)$ has the form 
\begin{equation*}
 = \M^\dagger \mu^{-1}(K_1)\mu^{-1}(B_1)\mu^{-1}(K_2)\mu^{-1}(A)\mu^{-1}(K_3)\mu^{-1}(B_2)\mu^{-1}(K_4)\M
\end{equation*}
with $K_1,\ldots,K_4 \in \K_{\theta_1,\theta_2}$ and $A \in \A$ and $B_1,B_2 \in \B$. Substituting in the circuit decompositions of these three sets from Lemma~\ref{lem:main} shows that every element of $\PSO(8)$ can be written as 
\begin{equation} \label{eq:main-decomp-2}
\begin{tikzpicture} \node[scale=0.85] {
\begin{quantikz}
 &            & \gate[2]{S^1} & \ctrl{2} & \gate{R_x(\alpha_1)} & \ctrl{1} & \gate{R_y(\pi/2)} &  \ctrl{2}& \gate{R_x(\pi/2)}    & \ctrl{2} & \gate[2]{S^2} & 				   &\\
 & \gate[2]{\Q} &               &           & 					            & \targ{}  & \gate{R_x(\pi/2)} &          &                       &          & 			  & \gate[2]{\Q^{\dagger}} &\\
  &            & \gate{U^1}     &  \targ{}   & \gate{R_y(\alpha_2)} &          & \gate{R_x(\pi/2)} &  \targ{} & \gate{R_z(\alpha_3)} & \targ{}  & \gate{U^2}    & 				   &
\end{quantikz}};
\end{tikzpicture}
\end{equation}
where $S_1^T$ and $S_2$ have the form
\begin{equation} \label{eq:main-decomp-sp-2}
\begin{tikzpicture} \node[scale=0.85] {\begin{quantikz}
& \ctrl{1} & \gate{R_x(\beta_4)} & \ctrl{1} & \gate{R_y(\beta_2)} & \ctrl{1} & \gate{R_x(\beta_1)} & \ctrl{1} &\\
& \targ{} & \gate{R^2}          & \targ{}  & \gate{R_y(\beta_3)} & \targ{}  & \gate{R^1}          & \targ{}  &
\end{quantikz}}; \end{tikzpicture}
\end{equation}
and
\begin{equation} \label{eq:main-magic-decomp}
\Q = \raisebox{-8mm}{\begin{tikzpicture} \node[scale=0.85] {\begin{quantikz}
  & \gate{R_x(\pi/2)} & \ctrl{1}& \gate{R_x(-\pi)} &   &\\
  & \gate{R_z(-\pi/2)} & \targ{} & \gate{R_x(\pi/2)} & \gate{R_z(-\pi/2)} & 
\end{quantikz}}; \end{tikzpicture}}
\end{equation}

This is almost the same as what we're trying to prove. First, the single-qubit rotations $R_x(\pi/2)$ and $R_z(-\pi/2)$ appearing on qubit 3 in the decomposition \eqref{eq:main-magic-decomp} of $I_2 \otimes \Q$ can be absorbed into the arbitrary 1-qubit gates $U^1, (U^2)^\dagger$ shown in \eqref{eq:main-decomp-2}. Similarly, the gate $R_x(-\pi)$ on qubit 2 in the same decomposition can be absorbed into the arbitrary 1-qubit gates $R^1, (R^2)^\dagger$; this requires the identity
\begin{equation} \label{eq:CNOT-X}
\raisebox{-6mm}{\begin{tikzpicture} \node[scale=0.85] 
{\begin{quantikz} 
 &            & \ctrl{1} &\\
& \gate{R_x}  & \targ{} &
\end{quantikz}}; \end{tikzpicture}} = 
\raisebox{-6mm}{\begin{tikzpicture} \node[scale=0.85] 
{\begin{quantikz} 
 & \ctrl{1} &            & \\
  & \targ{} & \gate{R_x} &
\end{quantikz}}; \end{tikzpicture}}
\end{equation}
In this way we transform \eqref{eq:main-magic-decomp} into the circuit for $\tilde{\Q}$ shown in the theorem statement.

Next we use the identity \eqref{eq:CNOT-X} to move a few more gates around. In \eqref{eq:main-decomp-sp-2}, write the arbitrary 1-qubit gate $R^2 \in \SU(2)$ appearing in $S_2$ as $R_x(\beta_7)R_z(\beta_6)R_y(\beta_5)$, and write the analogous gate in $S_1^T$ as $R_x(\beta_7')R_z(\beta_6')R_y(\beta_5')$. By the identity \eqref{eq:CNOT-X} we may commute the $R_x$ factors into the very center of the circuit \eqref{eq:main-decomp-2} on qubit 2, replacing the gate $R_x(\pi/2)$ by $R_x(\beta_7)R_x(\pi/2)R_x(\beta_7')$. Setting $\alpha_4 = \beta_7+\pi/2+\beta_7'$ gives \eqref{eq:main-decomp} exactly.
\end{proof}

\section*{Acknowledgements}
I'm grateful to Jim van Meter for useful comments and for helping me navigate the literature in this area.

\bibliographystyle{plain}
\bibliography{Q}

\end{document}